\newcommand{\ImageWidth}{8cm}
\theoremstyle{definition}
\newtheorem{definition}{Definition}[section]
\newtheorem{lemma}{Lemma}[section]
\theoremstyle{acmdefinition}
\newtheorem{remark}[theorem]{Remark}
\lstdefinelanguage{Links}{%
  morekeywords={typename, fun, linfun, op, var, if, this, true, false, else, case, switch, handle,
    handler, shallowhandler, open, do, sig, new, send, receive, spawnAt, spawn,
module, request, accept, try, as, in, otherwise, catch, offer, select, raise,
fork, spawnClient, cancel, query, nested, flat, nonsequenced, current,
valid, from, to, sequenced, where, with, set, insert, update, delete,
between, and, table, using, valid_time, vt_insert, values, database,
forever, for, vtCurrent
},%
  sensitive=t, %
  comment=[l]{\#\ },%
  escapeinside={(*}{*)},%
  morestring=[d]{"},%
  keywordstyle=\color{blue},
  showstringspaces=false
 }
\newcommand{\header}[1]{
  \begin{flushleft}
    \textbf{#1}
\end{flushleft}}
\newcommand{\headersig}[2]{
  \begin{flushleft}
    \textbf{#1} \hfill \framebox{#2}
\end{flushleft}}
\newcommand{\headerarg}[2]{
  \begin{flushleft}
    \textbf{#1} \hfill {#2}
\end{flushleft}}
\newcommand{\linq}{\ensuremath{\lambda_{\textsf{LINQ}}}\xspace}
\newcommand{\vlinq}{\ensuremath{\lambda_{\textsf{VLINQ}}}\xspace}
\newcommand{\tlinq}{\ensuremath{\lambda_{\textsf{TLINQ}}}\xspace}
\newcommand{\mkwd}[1]{\textsf{#1}}
\newcommand{\calcwd}[1]{\textbf{\textsf{#1}}}
\newcommand{\stringty}{\mkwd{String}}
\newcommand{\intty}{\mkwd{Int}}
\newcommand{\boolty}{\mkwd{Bool}}
\newcommand{\tablety}[1]{\mkwd{Table}(#1)}
\newcommand{\query}[1]{\calcwd{query} \: #1}
\newcommand{\queryzero}{\calcwd{query}}
\newcommand{\join}[1]{\calcwd{join} \: #1}
\newcommand{\joinzero}{\calcwd{join}}
\newcommand{\tablevar}{t}
\newcommand{\recordty}[1]{(#1)}
\newcommand{\recordtylr}[1]{\left(#1\right)}
\newcommand{\recordterm}{\recordty}
\newcommand{\recordtermlr}{\recordtylr}
\newcommand{\app}{\;}
\newcommand{\comparrow}{\leftarrow}
\newcommand{\dbcomparrow}{\Leftarrow}
\newcommand{\forcomp}[3]{\calcwd{for} \; (#1 \comparrow #2) \; #3}
\newcommand{\forcomptwo}[2]{\calcwd{for} \; (#1 \comparrow #2)}
\newcommand{\dbinsert}[2]{\calcwd{insert} \; #1 \; \calcwd{values} \; #2}
\newcommand{\dbinsertseq}[2]{\calcwd{insert} \; \calcwd{sequenced} \; #1 \; \calcwd{values} \; #2}
\newcommand{\dbinsertseqann}[3][(\ell_i : \tya_i)_i]{\calcwd{insert}^{#1} \;
\calcwd{sequenced} \; #2 \; \calcwd{values} \; #3}
\newcommand{\dbinsertann}[3][\recordterm{\ell_i :
    \tya_i}_i]{\calcwd{insert}^{#1}
\; #2 \; \calcwd{values} \; #3}
\newcommand{\dbupdatebetween}[6]{\calcwd{update} \; \calcwd{sequenced} \; (#1 \dbcomparrow #2) \; \calcwd{between} \; #3 \; \calcwd{and} \; #4 \; \calcwd{where} \; #5 \; \calcwd{set} \; #6}
\newcommand{\dbupdate}[4]{\calcwd{update} \; (#1 \dbcomparrow #2) \; \calcwd{where} \; #3 \; \calcwd{set} \; #4}
\newcommand{\dbupdatenonseq}[6]
    {\calcwd{update} \; \calcwd{nonsequenced} \;
     (#1 \dbcomparrow #2) \; \calcwd{where} \; #3 \; \calcwd{set} \; #4
       \; \calcwd{valid} \; \calcwd{from} \; #5 \; \calcwd{to} \; #6}
\newcommand{\dbupdatenonseqtwo}[2]
    {\calcwd{update} \; \calcwd{nonsequenced} \;
     (#1 \dbcomparrow #2)}
\newcommand{\dbupdateann}[5][\recordterm{\ell_i : \tya_i}_i]
    {\calcwd{update}^{#1} \; (#2 \dbcomparrow #3) \; \calcwd{where} \; #4 \; \calcwd{set} \; #5}
\newcommand{\dbupdatetwo}[2]{\calcwd{update} \; (#1 \dbcomparrow #2)}
\newcommand{\dbdelete}[3]{\calcwd{delete} \; (#1 \dbcomparrow #2) \; \calcwd{where} \; #3}
\newcommand{\dbdeletenonseq}[3]{\calcwd{delete} \; \calcwd{nonsequenced}
\; (#1 \dbcomparrow #2) \; \calcwd{where} \; #3}
\newcommand{\dbdeletenonseqann}[4][(\ell_i : \tya_i)_i]
{\calcwd{delete}^{#1} \; \calcwd{nonsequenced}
\; (#2 \dbcomparrow #3) \; \calcwd{where} \; #4}
\newcommand{\dbdeleteann}[4][\recordterm{\ell_i :
\tya_i}_i]{\calcwd{delete}^{#1} \; (#2 \dbcomparrow #3) \; \calcwd{where} \; #4}
\newcommand{\dbdeletetwo}[2]{\calcwd{delete} \; (#1 \dbcomparrow #2)}
\newcommand{\dbdeletebetween}[5]{\calcwd{delete} \; \calcwd{sequenced} \; (#1 \dbcomparrow #2) \; \calcwd{between} \; #3 \; \calcwd{and} \; #4 \; \calcwd{where} \; #5}
\newcommand
    {\dbdeletebetweenann}
    [6][(\ell_i : \tya_i)_i]
    {\calcwd{delete}^{#1} \; \calcwd{sequenced} \; (#2 \dbcomparrow #3) \; \calcwd{between} \; #4 \; \calcwd{and} \; #5 \; \calcwd{where} \; #6}
\newcommand{\doubleplus}{+\kern-1.3ex+\kern0.8ex}
\newcommand{\timevar}{\iota}
\newcommand{\ite}[3]{\calcwd{if} \; #1 \; \calcwd{then} \; #2 \; \calcwd{else} \; #3}
\newcommand{\timety}{\mkwd{Time}}
\newcommand{\ttrue}{\calcwd{true}}
\newcommand{\ffalse}{\calcwd{false}}
\newcommand{\db}{\Delta}
\newcommand{\schema}{\Sigma}
\newcommand{\oseq}[1]{\overrightarrow{#1}}
\newcommand{\seq}[1]{\widetilde{#1}}
\newcommand{\pure}{\emptyset}
\newcommand{\effann}[1]{\: ! \: #1}
\newcommand{\now}{\calcwd{now}\xspace}
\newcommand{\beginningoftime}{\ensuremath{-\infty}}
\newcommand{\forever}{\ensuremath{\infty}}
\newcommand{\bl}{\begin{array}{l}}
\newcommand{\blt}{\begin{array}[t]{l}}
\newcommand{\el}{\end{array}}
\newcommand{\letinone}[1]{\calcwd{let} \: #1  =}
\newcommand{\letintwo}[2]{\calcwd{let} \: #1  = #2 \: \calcwd{in}}
\newcommand{\defeq}{\triangleq}
\newcommand{\iscurrent}[1]{\mkwd{isCurrent}(#1)}
\newcommand{\currentat}[2]{\mkwd{currentAt}(#1, #2)}
\newcommand{\var}[1]{\ensuremath{\mathit{#1}}}
\newcommand{\field}{\var}
\newcommand{\recordwith}[3]{(#1 \: \calcwd{with} \: #2 = #3)}
\newcommand{\recordwithtwo}[2]{(#1 \: \calcwd{with} \: #2)}
\newcommand{\effread}{\calcwd{read}\xspace}
\newcommand{\effwrite}{\calcwd{write}\xspace}
\newcommand{\effs}{E}
\newcommand{\eff}{e}
\newcommand{\effset}[1]{\{#1\}}
\newcommand{\validtimety}[1]{\mkwd{ValidTime}(#1)}
\newcommand{\transtimety}[1]{\mkwd{TransactionTime}(#1)}
\newcommand{\data}[1]{\calcwd{data} \: #1}
\newcommand{\timestart}[1]{\calcwd{start} \: #1}
\newcommand{\timeend}[1]{\calcwd{end} \: #1}
\newcommand{\flattenbag}[1]{\hat{\biguplus}#1}
\newcommand{\project}[2]{#1{.}#2}
\newcommand{\where}[2]{\mkwd{where} \: #1 \: #2}
\newcommand{\whereone}[1]{\mkwd{where} \: (#1)}
\newcommand{\calcwhereone}[1]{\calcwd{where} \: (#1)}
\newcommand{\spacerow}{\vspace{0.6em}}
\newcommand{\etaexp}[2]{\eta(#1, #2)}
\newcommand{\denot}[1]{\hat{#1}}
\newcommand{\opsymb}{\odot}
\newcommand{\langop}[1]{\opsymb \{ #1 \}}
\newcommand{\denotlangop}[1]{\denot{\opsymb} \{ #1 \}}
\newsavebox{\lXparen}
\savebox{\lXparen}{$\llparenthesis$}
\newsavebox{\rXparen}
\savebox{\rXparen}{$\rrparenthesis$}
\newcommand{\hasqtype}[1]{#1 :: \mkwd{QType}}
\newcommand{\hasfqtype}[1]{#1 :: \mkwd{FQType}}
\newcommand{\normfor}[3]{\calcwd{for} \: (#1) \: \mkwd{where} \: #2 \: #3}
\newcommand{\tmeval}{\Downarrow}
\newcommand{\tmevaltwo}[3][\db, \timevar]{#2 \tmeval_{#1} #3}
\newcommand{\ttmevaltwo}[3][\db, \timevar]{#2 \tmeval^{\mkwd{T}}_{#1} #3}
\newcommand{\vtmevaltwo}[3][\db, \timevar]{#2 \tmeval^{\mkwd{V}}_{#1} #3}
\newcommand{\puretmeval}[1][\timevar]{\tmeval_{#1}}
\newcommand{\puretmevaltwo}[3][\timevar]{#2 \tmeval^{\star}_{#1} #3}
\newcommand{\readtmevaltwo}[3][\db, \timevar]{#2 \tmeval^{\star}_{#1} #3}
\newcommand{\tya}{A}
\newcommand{\tyb}{B}
\newcommand{\tma}{M}
\newcommand{\tmb}{N}
\newcommand{\tmc}{L}
\newcommand{\vala}{V}
\newcommand{\valb}{W}
\newcommand{\midspace}{\, \mid \,}
\newcommand{\basety}{C}
\newcommand{\tyfun}[3]{#1 \to^{#3} #2}
\newcommand{\efflet}[3]{\calcwd{let} \; #1 = #2 \; \calcwd{in} \; #3}
\newcommand{\effletone}[1]{\calcwd{let} \; #1 =}
\newcommand{\tblvar}{\tablevar}
\newcommand{\const}{c}
\newcommand{\fun}[2]{\lambda #1.#2}
\newcommand{\retpair}[2]{(#1, #2)}
\newcommand{\extendenv}[3]{#1[#2 \mapsto #3]}
\newcommand{\ttrans}[1]{\llbracket #1 \rrbracket}
\newcommand{\tvaltrans}{\ttrans}
\newcommand{\ttmtrans}{\ttrans}
\newcommand{\vtrans}[1]{\llparenthesis #1 \rrparenthesis}
\newcommand{\tseq}[4]{#1 \vdash #2 {:} #3 \,!\, #4}
\newcommand{\tyenv}{\Gamma}
\newcommand{\dbrow}[3]{#1 ^{[#2, #3)}}
\newcommand{\fielddata}[1]{\mathit{data} = #1 }
\newcommand{\fieldstart}[1]{\mathit{start} = #1 }
\newcommand{\fieldend}[1]{\mathit{end} = #1}
\newcommand{\tyfielddata}[1]{\mathit{data} {:} #1 }
\newcommand{\tyfieldstart}[1]{\mathit{start} {:} #1 }
\newcommand{\tyfieldend}[1]{\mathit{end} {:} #1 }
\newcommand{\totheleft}[1]{\begin{flushleft}#1\end{flushleft}}
\newenvironment{proofcase}[1]
  {\totheleft{\textbf{Case } {#1}}}
  {}
\newenvironment{subcase}[1]
  {\totheleft{\textbf{Subcase } {#1}}}
  {}
\newenvironment{subsubcase}[1]
  {\totheleft{\textbf{Subsubcase } {#1}}}
  {}
\newcommand{\subst}[3]{#1 \{ #2 / #3 \}}
\newcommand{\set}[1]{\{#1\}}
\newcommand{\bag}[1]{\lbag #1 \rbag}
\newcommand{\baglr}[1]{\scaleleftright[1.5ex]{\Lbag}{ #1 }{\Rbag}}
\newcommand{\bagty}[1]{\mkwd{Bag}(#1)}
\newcommand{\get}[1]{\calcwd{get} \; #1}
\newcommand{\getph}[1]{\calcwd{get}\phantom{\scriptsize{\mathsf{v}}} \; #1 }
\newcommand{\getv}[1]{\calcwd{get}_{\mathsf{v}} \; #1}
\newcommand{\bagunion}{\uplus}
\newcommand{\denotbagunion}{\mathop{\denot{\bagunion}}}
\newcommand{\baguniontwo}[2]{#1 \bagunion #2}
\newcommand{\emptybag}{\bag{~}}
\newcommand{\getann}[2]{\calcwd{get}^{#1} \: #2}
\newcommand{\restrict}[3]{\mkwd{restrict}(#1, #2, #3)}
\newcommand{\recordplus}{\oplus}
\newcommand{\recordplustwo}[2]{#1 \recordplus #2}
\newcommand{\without}{\backslash}
\newcommand{\at}[2]{\mkwd{at}(#1, #2)}
\newcommand{\current}[1]{\mkwd{current}(#1)}
\newcommand{\lift}[2]{\mkwd{lift}(#1, #2)}
\newcommand{\greatest}[1]{\calcwd{greatest}(#1)}
\newcommand{\greatestzero}{\calcwd{greatest}\xspace}
\newcommand{\leastzero}{\calcwd{least}\xspace}
\newcommand{\least}[1]{\calcwd{least}(#1)}
\newcommand{\greatesttwo}[2]{\calcwd{greatest}(#1, #2)}
\newcommand{\leasttwo}[2]{\calcwd{least}(#1, #2)}
\newcommand{\qqquad}{\qquad \quad}
\newcommand{\querytm}{Q}
\newcommand{\querycomp}{K}
\newcommand{\querygen}{G}
\newcommand{\querybasetm}{P}
\newcommand{\queryrecordtm}{R}
\newcommand{\querynormtm}{S}
\newcommand{\normtransl}{\|}
\newcommand{\normtransr}{\|}
\newcommand{\normtrans}[1]{\normtransl #1 \normtransr}
\newcommand{\secref}[1]{\S\ref{#1}}
\newcommand{\secrefp}[1]{(\secref{#1})}
\newcommand{\flatten}[1]{{\downarrow}{#1}}
\newcommand{\maxts}[1]{\mkwd{max}(#1)}
\newcommand{\wf}[1]{\mkwd{wf}(#1)}
\renewcommand\footnotetextcopyrightpermission[1]{}
\begin{document}

\title{Language-Integrated Query for Temporal Data}
\subtitle{(Extended version)}
\author{Simon Fowler}
\orcid{0000-0001-5143-5475}
\affiliation{
    \institution{University of Glasgow}
    \country{UK}
}
\email{simon.fowler@glasgow.ac.uk}

\author{Vashti Galpin}
\orcid{0000-0001-8914-1122}
\affiliation{
    \institution{University of Edinburgh}
    \country{UK}
}
\email{vashti.galpin@ed.ac.uk}

\author{James Cheney}
\orcid{0000-0002-1307-9286}
\affiliation{
    \institution{University of Edinburgh}
    \country{UK}
}
\email{jcheney@inf.ed.ac.uk}

\begin{abstract}
  Modern applications often manage time-varying data.  Despite decades
  of research on temporal databases, which culminated in the addition of
  temporal data operations into the SQL:2011 standard, temporal
  data query and manipulation operations are unavailable in most
  mainstream database management systems, leaving developers with
  the unenviable task of implementing such functionality from scratch.

  In this paper, we extend \emph{language-integrated query} to support
  writing temporal queries and updates in a uniform host language,
  with the language performing the required rewriting to
  emulate temporal capabilities automatically on any standard relational
  database.
    We introduce two core languages, $\tlinq$ and $\vlinq$, for manipulating
    transaction time and valid time data respectively, and formalise
    existing implementation strategies by giving provably correct
    semantics-preserving translations into a non-temporal core language,
    $\linq$.
    We show how existing work on query normalisation supports a surprisingly
    simple implementation strategy for \emph{sequenced joins}.  We implement our
    approach in the Links programming language, and describe a non-trivial case
    study based on curating COVID-19 statistics.
\end{abstract}

\begin{CCSXML}
<ccs2012>
<concept>
<concept_id>10011007.10011006.10011050.10011017</concept_id>
<concept_desc>Software and its engineering~Domain specific languages</concept_desc>
<concept_significance>500</concept_significance>
</concept>
<concept>
<concept_id>10002951.10002952.10003197.10010822.10010823</concept_id>
<concept_desc>Information systems~Structured Query Language</concept_desc>
<concept_significance>500</concept_significance>
</concept>
<concept>
<concept_id>10002951.10002952.10002953.10010820.10010518</concept_id>
<concept_desc>Information systems~Temporal data</concept_desc>
<concept_significance>500</concept_significance>
</concept>
</ccs2012>
\end{CCSXML}

\ccsdesc[500]{Software and its engineering~Domain specific languages}
\ccsdesc[500]{Information systems~Structured Query Language}
\ccsdesc[500]{Information systems~Temporal data}

\keywords{language-integrated query, temporal databases, domain-specific languages, multi-tier programming}

\maketitle

\section{Introduction}\label{sec:intro}

Most interesting programs access or query data stored persistently,
often in a database.  Relational database management systems (RDBMSs)
are the most popular option and provide a standard
domain-specific language, SQL, for querying and modifying the data.
Ideally, programmers can express the desired queries or updates
declaratively in SQL and leave the database to decide how to answer
queries or perform updates efficiently and safely (e.g. in the
presence of concurrent accesses), but there are many pitfalls arising
from interfacing with SQL from a general-purpose language, leading to
the well-known \emph{impedance mismatch} problem~\cite{copeland-maier:1984}.
These difficulties range from run-time failures due to the
generation of queries as unchecked SQL strings at runtime, to serious
security vulnerabilities like SQL injection attacks~\cite{sql-injection}.

Among the most successful approaches to overcome the above challenges,
and the approach we build upon in this paper, is
\emph{language-integrated query}, exemplified by Microsoft's popular
LINQ for .NET~\cite{meijer:sigmod,Syme06} and in a number of other
language designs such as Links~\cite{CooperLWY06,lindley12tldi} and
libraries such as Quill~\cite{quill}.  Within this design space we
focus on a family of techniques derived from foundational work by
Buneman et al.~\cite{BNTW95} on the \emph{nested relational calculus}
(NRC), a core query language with monadic collection types; work by
Wong~\cite{wong96jcss} on rewriting NRC expressions to normal forms
that can be translated to SQL, which forms the basis of the approach
taken in Links~\cite{Cooper09,lindley12tldi} and has been adapted to
F\#~\cite{cheney13icfp}.

Many interesting database applications involve data that changes over
time.  Perhaps inevitably, \emph{temporal data
  management}~\cite{jensen99tkde} has a long history.
Temporal databases provide powerful features for querying and
modifying data that changes over time, and are particularly suitable
for modeling time-varying phenomena, such as enterprise data or
evolving scientific knowledge, and supporting auditing and transparency
about how the current state of the data was achieved, for example in
financial or scientific settings.

To illustrate how temporal databases work and why they are useful,
consider the following toy example: a temporal to-do list.  A temporal
database can be conceptualised abstractly as a \emph{function} mapping
each possible time instant (e.g. times of day) to a
database state~\cite{jensen94is}.  For efficiency in the common case where most of the
data is unchanged most of the time, temporal databases are often
represented by augmenting each row with an \emph{interval timestamp} indicating the
time period when the row is present.  For technical reasons,
\emph{closed-open} intervals $[start,end)$ representing the times
$\var{start} \leq t < \var{end}$ are typically used~\cite{Snodgrass99:book}.

In our temporal to-do list, the table at each time instant has fields
``task'', a string field, and ``done'', a Boolean field.  Additional
fields ``start'' and ``end'' record the endpoints of the time interval
during which each row is to be considered part of the table.  An
end time of $\forever\xspace$ (``forever'') reflects that there is no
(currently known) end time and in the absence of other changes, the
row is considered present from the start time onwards.
For example, the table:

{\footnotesize
    \begin{center}
\begin{tabular}{| l | c || c | c |}
    \hline
    task & done & start & end \\ \hline
    Go shopping & \ttrue & 11:00 & $\infty$ \\
    \hline
    Cook dinner & \ffalse & 11:00 & 17:30 \\
    \hline
    Walk the dog & \ffalse & 11:00 & $\infty$ \\
    \hline
     Cook dinner & \ttrue & 17:30 & $\infty$ \\
    \hline
    Watch TV & \ffalse & 11:00 & 19:00 \\
    \hline
\end{tabular}
\end{center}
}
\noindent represents a temporal table where all four tasks were added at 11:00,
with ``Go shopping'' being complete and the others incomplete; at
17:30 ``Cook dinner'' was marked ``done'' from then onwards, and at
19:00 ``Watch TV'' was removed from the table without being completed.
Technically,
note that this example interprets the time annotations as
\emph{transaction time}, that is, the times indicate when certain data
was in the database; there is another dimension, \emph{valid time},
and we will discuss both dimensions in
greater detail later on.

The problems of querying and updating temporal databases have been
well-studied, leading to the landmark language design
TSQL2~\cite{tsql2} extending SQL.  However, despite decades of effort,
only limited elements of TSQL2 were eventually incorporated into the
SQL:2011 standard~\cite{kulkarni12sigmodrecord} and these features
have not yet been widely adopted.  Directly implementing temporal
queries in SQL is possible, but painful: a TSQL2-style query or
update operation may grow by a factor of 10 or more when translated to
plain SQL, which leaves plenty of scope for error, and thus these powerful
capabilities remain outside the grasp of non-experts.
In this paper we take first steps towards reconciling temporal data management
with language-integrated query based on query normalisation.  We propose
supporting temporal capabilities by translation to ordinary language-integrated
query and hypothesise that this approach can make temporal data management
safer, easier to use and more accessible to non-experts than the current state
of affairs.  As an initial test of this hypothesis we present a high-level
design, a working implementation, and detail our experience with a nontrivial
case study.

Although both language-integrated query and temporal databases are now
well-studied topics, we believe that their combination has never been considered
before.  Doing so has a number of potential benefits, including making
the power of well-studied language designs such as TSQL2 more
accessible to non-expert programmers, and providing a high-level abstraction that can be implemented efficiently in different ways.
Our interest is particularly
motivated by the needs of scientific database development, where data
versioning for accountability and research integrity are very
important needs that are not well-supported by conventional database
systems~\cite{buneman18sigmodrecord}.  Temporal data management has the potential to become a
``killer app'' for language-integrated query, and this paper takes a
first but significant step towards this goal.

The overarching contribution of this paper is the first extension of
language-integrated query to support transaction time and valid time
temporal data.

Concretely, we make three main contributions:

\begin{compactenum}
\item Based on $\linq$~\secrefp{sec:linq}, a formalism based on the
  Nested Relational Calculus (NRC)~\cite{RothKS88}, we introduce typed
  $\lambda$-calculi to model queries and modifications on transaction
  time~\secrefp{sec:tlinq} and valid time~\secrefp{sec:vlinq}
  databases.  We give
  semantics-preserving translations to $\linq$ for both.
\item We show how existing work on query normalisation allows a
  surprisingly straightforward implementation strategy for \emph{sequenced
  joins}~\secrefp{sec:temporal-joins}.
\item We implement our constructs in the Links functional web
  programming language, and describe a case study based on curating
  COVID-19 data~\secrefp{sec:implementation}.
\end{compactenum}

\sloppypar
Although the concepts behind translating temporal queries and updates into
non-temporal core languages are well known~\cite{Snodgrass99:book}, our core
calculi $\tlinq$ and $\vlinq$ are novel and aid us in showing (to the best of
our knowledge) the first correctness results for the translations.

We relegate several details and all proofs to the appendices.

\section{Background: Language-Integrated Query}\label{sec:linq}

\begin{figure}[tb]
{\footnotesize
\[
    \begin{array}{lrcl}
        \text{Types} & \tya, \tyb & ::= &
            \basety \midspace \tyfun{\tya}{\tyb}{\effs} \midspace \bagty{\tya}
            \midspace \recordty{\seq{\ell : \tya}} \midspace \tablety{\tya} \\
        \text{Base types} \!\!\!\!\!\! & \basety & ::= & \stringty \midspace \intty \midspace
        \boolty \midspace \timety \\
        \text{Effects} & \eff & ::= & \effread \midspace \effwrite \\
        \text{Effect sets} \!\!\!\!\!\!& \effs \\
        \text{Terms} & \tmc, \tma, \tmb & ::= & x \midspace \const \midspace \tblvar \\
                     & & \midspace & \fun{x}{\tma} \midspace \tma \app \tmb \midspace
                     \langop{\oseq{\tma}} \\
                     & & \midspace & \ite{\tmc}{\tma}{\tmb} \\
                     & & \midspace & \bag{~} \midspace \bag{\tma} \midspace \baguniontwo{\tma}{\tmb} \midspace \forcomp{x}{\tma}{\tmb} \\
                     & & \midspace & \recordterm{\seq{\ell = \tma}} \midspace
                     \project{\tma}{\ell} \midspace \now \\
                     & & \midspace & \query{\tma} \midspace \get{\tma} \midspace
                                     \dbinsert{\tma}{\tmb} \\
                     & & \midspace & \dbupdate{x}{\tmc}{\tma}{(\seq{\ell =
                     \tmb})} \\
                     & & \midspace & \dbdelete{x}{\tma}{\tmb}
    \end{array}
\]
}
\vspace{-0.75em}
\caption{Syntax of $\linq$}\label{fig:syntax}
\vspace{-1.5em}
\end{figure}

We begin by introducing a basic $\lambda$-calculus, called $\linq$, to model
language-integrated query in non-temporal databases. Our calculus is based
heavily on the Nested Relational Calculus~\cite{RothKS88}, with support for database
modifications heavily inspired by the calculus of~\citet{FehrenbachC18}.
The calculus uses a
type-and-effect system to ensure database accesses can occur in `safe' places,
i.e., that we do not attempt to perform a modification operation in the middle
of a query.
Effects include $\effread$ (denoting a read from a database) and $\effwrite$
(denoting an update to the database). %
Types $\tya, \tyb$ include base types $\basety$, effect-annotated function types
$\tyfun{\tya}{\tyb}{\effs}$, unordered collection types $\bagty{\tya}$, record
types $\recordty{\ell_i : \tya_i}_i$ denoting a record with labels $\ell_i$ and
types $\tya_i$, and handles $\tablety{\tya}$ for tables containing records of
type $\tya$. A record is a \emph{base record} if it contains only
fields of base type. We assume that the base types includes at least $\boolty$
and the $\timety$ type, which denotes (abstract) timestamps.

Basic terms include variables $x$, constants $c$, table handles $\tblvar$, functions
$\fun{x}{\tma}$, application $\tma \app \tmb$, n-ary operations
$\langop{\oseq{\tma}}$, and conditionals $\ite{\tmc}{\tma}{\tmb}$.
We assume that the set of operations contains the usual unary and
binary relation operators, as well as the $n$-ary operations
$\greatestzero$ and $\leastzero$ on timestamps which
return their largest and smallest arguments respectively. We assume that the set
of constants contains timestamps $\timevar$ of type $\timety$, and two distinguished timestamps
$\beginningoftime$ and $\forever$ of
type $\timety$, which denote the minimum and maximum timestamps respectively.
The calculus also includes the empty multiset constructor $\bag{~}$;
the singleton multiset constructor $\bag{\tma}$; multiset union $\tma
\bagunion \tmb$; and comprehensions $\forcomp{x}{\tma}{\tmb}$.
We write $\bag{\tma_1, \ldots, \tma_n}$ as sugar for $\bag{\tma_1} \bagunion
\ldots \bagunion \bag{\tma_n}$.
We also have
records $\recordterm{\ell_i = \tma_i}_i$ and projection $\project{\tma}{\ell}$.
Term $\now$ retrieves the current time.

We write $\efflet{x}{\tma}{\tmb}$ as the usual syntactic sugar for
$(\lambda x . \tmb) \app \tma$, and $\tma; \tmb$ as sugar for
$(\lambda x .  \tmb) \tma$ for some fresh $x$.
We also define $\where{\tma}{\tmb}$ as sugar for $\ite{\tma}{\tmb}{\bag{~}}$.
We denote unordered collections with a tilde (e.g., $\seq{M}$), and ordered
sequences with an arrow (e.g., $\oseq{M}$).

The $\get{\tma}$ term retrieves the
contents of a table into a bag; $\dbinsert{\tma}{\tmb}$ inserts values
 $\tmb$ into table $\tma$;
$\dbupdate{x}{\tmc}{\tma}{(\ell_i = \tmb_i)_i}$ updates table $\tmc$,
updating the fields $\ell_i$ to $\tmb_i$ of each record $x$ satisfying predicate
$\tma$. The $\dbdelete{x}{\tma}{\tmb}$ term removes those records $x$ in
table $\tma$ satisfying predicate $\tmb$.

\begin{figure*}
{\footnotesize
  ~\header{Term typing} \hfill \framebox{$\tseq{\tyenv}{\tma}{\tya}{\effs}$}
  \vspace{-1em}
  \begin{mathpar}
    \inferrule
    [T-Query]
    { \tseq{\tyenv}{\tma}{\bagty{\tya}}{\effs}\!\!\! \\
      \hasqtype{\tya}\!\!\! \\
      \effs {\subseteq} \effset{\effread}
    }
    { \tseq{\tyenv}{\query{\tma}}{\bagty{\tya}}{\effs} }

    \inferrule
    [T-Now]
    { }
    { \tseq{\tyenv}{\now}{\timety}{\pure} }

    \inferrule
    [T-Get]
    { \tseq{\tyenv}{\tma}{\tablety{\tya}}{\effs}}
    { \tseq
        {\tyenv\!}
        {\!\get{\tma}}
        {\bagty{\tya}}
        {\effset{\effread} \cup \effs}
    }

    \inferrule
    [T-Insert]
    { \tseq{\tyenv}{\tma}{\tablety{\tya}}{\effs} \\
      \tseq{\tyenv}{\tmb}{\bagty{\tya}}{\pure}
    }
    { \tseq{\tyenv}{\dbinsert{\tma}{\tmb}}{\recordty{}}{\effset{\effwrite} \cup \effs} }

    \inferrule
    [T-Update]
    { \tseq{\tyenv}{\tmc}{\tablety{\tya}}{\effs} \\\\
      \tya = \recordty{\ell_i : \tyb_i}_{i \in I} \\
      \tseq{\tyenv, x : \tya}{\tma}{\boolty}{\pure} \\
      (j \in I \wedge \tseq{\tyenv, x : \tya}{\tmb_j}{\tyb_j}{\pure})_{j \in J}
    }
    { \tseq
        {\tyenv}
        {\dbupdate{x}{\tmc}{\tma}{(\ell_j = \tmb_j)_{j \in J}}}
        {\recordty{}}
        {\effset{\effwrite} \cup \effs }
    }

    \inferrule
    [T-Delete]
    { \tseq{\tyenv}{\tma}{\tablety{\tya}}{\effs} \\
      \tseq{\tyenv, x : \tya}{\tmb}{\boolty}{\pure}
    }
    { \tseq
        {\tyenv}
        {\dbdelete{x}{\tma}{\tmb}}
        {\recordty{}}
        {\effset{\effwrite} \cup \effs }
    }
  \end{mathpar}

~\header{Query typing} \hfill \framebox{\hasqtype{A}}
  \vspace{-2em}
  \begin{mathpar}
    \inferrule
    { }
    { \hasqtype{\basety} }

    \inferrule
    { (\hasqtype{A_i})_i }
    { \hasqtype{(\ell_i : A_i)_i} }

    \inferrule
    { \hasqtype{A} }
    { \hasqtype{\bagty{A}} }
  \end{mathpar}
}

    \caption{Typing rules for $\linq$ (selected)}
    \label{fig:linq-typing}
\end{figure*}

\subsection{Typing rules}
Figure~\ref{fig:linq-typing} shows the typing rules for $\linq$;
the typing judgement has the shape $\tseq{\tyenv}{\tma}{\tya}{\effs}$, which can
be read, ``Under type environment $\tyenv$, term $\tma$ has type $\tya$ and
produces effects $\effs$''.
The rules are implicitly parameterised by a database schema $\schema$ mapping table
names to types of the form $\bagty{\recordty{\ell_i: \basety_i}_i}$.
Many rules are similar to those of the simply-typed $\lambda$-calculus
extended with monadic collection operations~\cite{BNTW95} and a set-based effect
system~\cite{LucassenG88}, and such standard rules are relegated to
Appendix~\ref{app:full-definitions}.

Rule \textsc{T-Query} states that a term $\query{\tma}$ is well-typed if $\tma$
is of a \emph{query type}: either a base type, a record type whose fields are
query types, or a bag whose elements are query types. The term $\tma$ must also
only have $\effread$ effects. These restrictions allow efficient compilation to
SQL~\cite{Cooper09,cheney14sigmod}.

Rule \textsc{T-Get} states that $\get{\tma}$ has type $\bagty{\tya}$
if $\tma$ has type $\tablety{\tya}$, and produces the $\effread$
effect.  Rule \textsc{T-Table} states that a table reference follows
the type of the table in the schema. \textsc{T-Insert} types a
database insert $\dbinsert{\tma}{\tmb}$, requiring $\tma$ to be a
table reference of type $\tablety{\tya}$, and the inserted values
$\tmb$ to be a bag of type $\bagty{\tya}$.  \textsc{T-Update} ensures
the predicate and update terms are typable under an environment
extended with the row type, and ensures that all updated values match
the type expected by the row.  Rule \textsc{T-Delete} is similar. All
subterms used as predicates or used to calculate updated terms must be
pure (that is, side-effect free), and all modifications have the
$\effwrite$ effect.

\subsection{Semantics}

\begin{figure*}
{\footnotesize
    \header{Syntax of values, operations on values, and value typing}
    \begin{minipage}{0.4\textwidth}
    \[
        \begin{array}{lrcl}
            \text{Values} & \vala, \valb & ::= & \fun{x}{\tma} \midspace c \midspace
            \tablevar \midspace \recordterm{\ell_i = \vala_i}_i \midspace \bag{\seq{V}}
        \end{array}
    \]
    \[
        \begin{array}{rcl}
            \bag{\seq{\vala}} \denotbagunion \bag{\seq{\valb}} & \defeq &
            \bag{\seq{\vala} \cdot \seq{\valb}} \spacerow \\
            \recordwithtwo{\recordterm{\ell_i = \vala_i}_{i \in I}}{(\ell_j =
        \valb_j)} & \defeq & \recordterm{\ell_i = \vala_i}_{i \in {I \without
    J}} \recordplus \recordterm{\ell_j = \valb_j}_{j \in J}
        \end{array}
    \]
    \end{minipage}
    \qquad
    \begin{minipage}{0.4\textwidth}
        \begin{mathpar}
            \inferrule
            [T-BagV]
            {
                (\tseq{\tyenv}{\vala_i}{\tya}{\pure})_{i}
            }
            { \tseq{\tyenv}{\bag{\seq{\vala}\!}}{\bagty{\tya}}{\pure} }
        \end{mathpar}
    \end{minipage}

    \headersig{Big-step reduction rules}{$\tmevaltwo{\tma}{\retpair{\vala}{\db'}} $}
    \begin{mathpar}
        \inferrule
        [E-Now]
        { }
        { \tmevaltwo{\now}{\retpair{\timevar}{\db}} }

        \inferrule
        [E-Query]
        { \readtmevaltwo{\tma}{\vala}}
        { \tmevaltwo{\query{\tma}}{\retpair{\vala}{\db}} }

        \inferrule
        [E-Get]
        {
            \tmevaltwo{\tma}{\retpair{\tablevar}{\db'}}
        }
        { \tmevaltwo{\get{\tma}}{\retpair{\db'(\tblvar)}{\db'}} }

        \inferrule
        [E-Insert]
        {
            \tmevaltwo{\tma}{\retpair{\tblvar}{\db'}} \\
            \puretmevaltwo{\tmb}{\vala}
        }
        {
            \tmevaltwo
                {\dbinsert{\tma}{\tmb}}
                {\retpair{()}{\extendenv{\db'}{\tblvar}{\db'(\tblvar) \denotbagunion \vala}} }
        }

        \inferrule
        [E-Update]
        {
            \tmevaltwo{\tmc}{\retpair{\tblvar}{\db_1}} \\
                    \db_2 =
                        {\extendenv
                            {\db_1}
                            {\tblvar}
                            { \bag{\mkwd{upd}(\var{v}) \mid \var{v} \in \db_1(\tblvar)}}
                        }
                    \\\\
                    {
                    \mkwd{upd}(\var{v}) =
                    {
                        \begin{cases}
                            \recordwithtwo
                                {\var{v}}
                                {\seq{\ell = \valb}}
                                & \text{ if }
                                \puretmevaltwo{\tma \{ \var{v} / x \}}{\ttrue}
                                \text{ and }
                                (\puretmevaltwo{\subst{\tmb_i}{\var{v}}{x}}{\valb_i})_i \\
                            \var{v} & \text{ if } \puretmevaltwo{\tma \{ \var{v} / x \}}{\ffalse}
                        \end{cases}
                    }
                    }
        }
        {
            \tmevaltwo
                {\dbupdate{x}{\tmc}{\tma}{(\seq{\ell = \tmb})}}
                {\retpair{()}{\db_2}}
        }

        \inferrule
        [E-Delete]
        {
            \tmevaltwo{\tma}{\retpair{\tblvar}{\db_1}}
            \\
            \db_2 =
                {\extendenv
                    {\db_1}
                    {\tblvar}
                    { \bag{\var{v} \in \db(\tblvar) \mid
                        \puretmevaltwo{\tmb \{\var{v} / x \}}{\ffalse}} }
                }
        }
        {
            \tmevaltwo
                {\dbdelete{x}{\tma}{\tmb}}
                {\retpair
                    {()}
                    {\db_2}
                }
        }
    \end{mathpar}
}
    \caption{Semantics of $\linq$ (selected)}
    \label{fig:linq-semantics}
\end{figure*}

Figure~\ref{fig:linq-semantics} shows the syntax and typing rules of values, and
the big-step semantics of $\linq$.  Most rules are standard, and presented in
Appendix~\ref{app:full-definitions}.
Values $\vala, \valb$ include functions, constants, tables, fully-evaluated
records, and fully-evaluated bags $\bag{\seq{\vala}}$. Unlike the
unary bag constructor $\bag{\tma}$, fully-evaluated bags contain an
unordered
\emph{collection} of values.
All values are pure.  We write $\recordplus$
for record extension, e.g.,
$\recordterm{\ell_1 = \tma} \recordplus \recordterm{\ell_2 = \tmb} =
\recordterm{\ell_1 = \tma, \ell_2 = \tmb}$.

Since evaluation is effectful (as database operations can
update the database), the evaluation judgement has the shape
$\tmevaltwo{\tma}{\retpair{\vala}{\db'}}$; this can be read ``term $\tma$ with
current database $\db$ at time $\timevar$ evaluates to value $\vala$ with
updated database $\db'$''. A database is a mapping from table names to bags of
base records. To avoid additional complexity, we assume
evaluation is atomic and does not update the time; one could straightforwardly
update the semantics with a $\calcwd{tick}$ operation without affecting any
results.

We use two further evaluation relations for terms which do not write to
the database: $\puretmevaltwo{\tma}{\vala}$ states that a pure term $\tma$
(i.e., a term typable under an empty effect set) evaluates to $\vala$.
Similarly, $\readtmevaltwo{\tma}{\vala}$ states that a term $\tma$ which only
performs the $\calcwd{read}$ effect evaluates to $\vala$. We omit the
definitions, which are similar to the evaluation relation but do not propagate
database changes (since no changes can occur).

Rule \textsc{E-Now} returns the current timestamp.  Rule \textsc{E-Query}
evaluates the body of a query using the read-only evaluation relation.
Rule \textsc{E-Get} evaluates its subject to a table reference, and then returns
the contents of a table. Rule \textsc{E-Insert} does similar, evaluating the
values to insert, and then appending them to the contents of the table.
Rule \textsc{E-Update} iterates over a table, updating a record if the
predicate matches, and leaving it unmodified if not.
Finally, \textsc{E-Delete} deletes those rows satisfying the deletion
predicate.

$\linq$ enjoys a standard type soundness property.

\begin{proposition}[Type soundness]\label{prop:linq:soundness}
    If $\tseq{\cdot}{\tma}{\tya}{\effs}$ then there exists some $\vala$ and
    $\db'$
    such that $\tmevaltwo{\tma}{\retpair{\vala}{\db'}}$ and
    $\tseq{\cdot}{\vala}{\tya}{\pure}$.
\end{proposition}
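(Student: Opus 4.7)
The plan is to prove the result by induction on the typing derivation of $\tseq{\cdot}{\tma}{\tya}{\effs}$. Since the evaluation judgement $\tmevaltwo{\tma}{\retpair{\vala}{\db'}}$ is implicitly parameterised by the database $\db$ and timestamp $\timevar$, I first strengthen the statement to quantify over these: for any well-formed input $\db$ and any $\timevar$, there exist $\vala$ and $\db'$ such that $\tmevaltwo{\tma}{\retpair{\vala}{\db'}}$, $\tseq{\cdot}{\vala}{\tya}{\pure}$, and $\db'$ is again well-formed. Here well-formedness $\wf{\db}$ means that for each $\tblvar \in \dom{\schema}$, $\db(\tblvar)$ is a fully-evaluated bag of base record values, each typable at the schema's element type for $\tblvar$.

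Before the induction, I would establish the standard auxiliary lemmas. First, a substitution lemma: if $\tseq{\tyenv, x : \tya}{\tma}{\tyb}{\effs}$ and $\tseq{\cdot}{\vala}{\tya}{\pure}$, then $\tseq{\tyenv}{\subst{\tma}{\vala}{x}}{\tyb}{\effs}$, which is needed to handle application. Second, analogous type-soundness results for the two auxiliary evaluation relations $\puretmeval$ and $\readtmevaltwo$, restricted to typing derivations with effect sets $\pure$ and $\effset{\effread}$ respectively; both are simpler because neither modifies $\db$, so no well-formedness obligation propagates.

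The induction itself is then routine in most cases. Values are handled directly by their value-typing rules; $\lambda$-abstractions are already values; application uses the substitution lemma after applying the induction hypothesis to the function and argument; records, bags, comprehensions, conditionals, and $n$-ary operations follow from the induction hypothesis plus straightforward typing of the resulting values. The cases \textsc{E-Now} and \textsc{E-Query} appeal to the auxiliary soundness lemmas. The main obstacle will be the three modification rules, where I must additionally show that the updated database $\db_2$ remains well-formed. For \textsc{E-Insert}, the inserted bag $\vala$ has type $\bagty{\tya}$ matching the table's schema type by the induction hypothesis, so $\db_1(\tblvar) \bagunion \vala$ is still well-typed. For \textsc{E-Update}, the critical step is to verify that $\mkwd{upd}(\var{v})$ always produces a value at the row type: each field update expression $\tmb_j$ is typed at $\tyb_j$ under an environment extended with $x : \tya$, so by the substitution lemma and the induction hypothesis applied to the pure evaluation of $\subst{\tmb_j}{\var{v}}{x}$, the resulting $\valb_j$ has type $\tyb_j$, and record extension preserves the row type. \textsc{E-Delete} is immediate, as removing rows from a well-formed bag yields a well-formed bag.
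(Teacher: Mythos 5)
The paper itself states this proposition without proof (the appendix only proves the translation-correctness results), so there is no official argument to compare against; judging your proposal on its own terms, there is a genuine gap. The statement is not mere preservation but also asserts \emph{existence} of a result, i.e.\ termination of big-step evaluation, and structural induction on the typing derivation cannot deliver that for the application case. In \textsc{E-App} you evaluate $\tma$ to some $\fun{x}{\tmc}$ and then must evaluate $\subst{\tmc}{\vala}{x}$; the typing derivation you obtain for this term via the substitution lemma is a derivation about the \emph{value} produced by $\tma$, not a subderivation of the original derivation of $\tseq{\cdot}{\tma \app \tmb}{\tyb}{\effs}$, so the induction hypothesis does not apply to it and there is no decreasing measure. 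The same problem recurs in \textsc{E-For}, where the comprehension body is evaluated under substitution of each element of the evaluated bag. This is exactly the point where naive induction fails for normalization of the simply-typed $\lambda$-calculus, and the standard repair is a logical-relations (Tait-style reducibility) argument: define a type-indexed predicate on values where $\mathcal{R}_{\tyfun{\tya}{\tyb}{\effs}}(\vala)$ requires that applying $\vala$ to any related argument terminates with a related result, prove the fundamental lemma by induction on typing, and derive the proposition as a corollary. (Cooper's normalization result for higher-order queries, which the paper cites, uses machinery of this kind.)

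The rest of your plan is sound and would survive the switch to logical relations essentially unchanged: strengthening the statement with well-formedness of $\db$ is necessary for \textsc{E-Get} to return a well-typed bag, the substitution lemma and the auxiliary soundness lemmas for $\puretmeval$ and the read-only relation are the right supporting results, and your analysis of why \textsc{E-Insert}, \textsc{E-Update}, and \textsc{E-Delete} preserve well-formedness is correct. Alternatively, if you only want the preservation half (``\emph{if} $\tmevaltwo{\tma}{\retpair{\vala}{\db'}}$ then $\tseq{\cdot}{\vala}{\tya}{\pure}$''), induction on the \emph{evaluation} derivation works directly and your case analysis goes through as written; but as stated the proposition claims more than that.
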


More importantly, the type-and-effect system ensures that query and
update expressions in $\linq$ can be translated to SQL
equivalent,
even in the presence of higher-order functions and
nested query results~\cite{Cooper09,lindley12tldi,cheney13icfp,cheney14sigmod}.  This alternative
implementation is equivalent to the semantics given here but usually
much more efficient since the database query optimiser can takes
advantage of any available integrity constraints or statistics about
the data.

\section{Transaction Time}\label{sec:tlinq}

The first dimension of time we investigate is \emph{transaction
time}~\cite{SnodgrassA85:taxonomy}, which records how the \emph{state of the
database} changes over time.
The key idea behind transaction time databases is that update operations are
\emph{non-destructive}, so we can always view a database as it stood at a
particular point in time.

Let us illustrate with the to-do list example from the
introduction. The original table is on the left.  The table after
making some changes
is shown on the right.

\begin{minipage}{0.5\columnwidth}
{\footnotesize
\begin{tabular}{| c | c |}
    \hline
    task & done \\ \hline
    Go shopping & \ttrue \\
    \hline
    Cook dinner & \ffalse \\
    \hline
    Walk the dog & \ffalse \\
    \hline
    Watch TV & \ffalse \\
    \hline
\end{tabular}
}
\end{minipage}
\hfill
\begin{minipage}{0.45\columnwidth}
{\footnotesize
\begin{tabular}{| c | c |}
    \hline
    task & done \\ \hline
    Go shopping & \ttrue \\
    \hline
    Cook dinner & \ttrue \\
    \hline
    Walk the dog & \ffalse \\
    \hline
\end{tabular}
}
\end{minipage}

However, since updates and deletions in $\linq$ are destructive, we have lost
the original data. Instead, let us see how this could be handled by a
transaction-time database:

{\footnotesize
    \begin{center}
\begin{tabular}{| c | c || c | c |}
    \hline
    task & done & start & end \\ \hline
    Go shopping & \ttrue & 11:00 & $\infty$ \\
    \hline
    Cook dinner & \ffalse & 11:00 & $\infty$ \\
    \hline
    Walk the dog & \ffalse & 11:00 & $\infty$ \\
    \hline
    Watch TV & \ffalse & 11:00 & $\infty$ \\
    \hline
\end{tabular}
\end{center}
}

There are several methods by which we can maintain the temporal information in
the database: for example we could maintain a tracking log which records each
entry, or we could use various \emph{temporal partitioning} strategies~\cite{Snodgrass99:book}.
In this paper, we use a \emph{period-stamped} representation, where each record in the
database is augmented with fields delimiting the interval when the record was present in the database.

The time period is a closed-open representation, meaning that each row
is in the database from (and including) the time stated in the
\emph{start} column, and is in the database up to (but not including)
the time stated in the \emph{end} column. We also assume that
$start < end$ always holds.

Here, our database states that all four tasks were entered into the
database at 11:00. However, if we then decide to check off ``Cook
dinner'' at 17:30 and delete ``Watch TV'' at 19:00, we  obtain the
table shown in the introduction.

Since timestamps are either $\forever$ or only refer to the past; users do
not modify period stamps directly; and the information in the database
grows monotonically, we can reconstruct the state of the database at any given
time.

\subsection{Calculus}
\begin{figure*}[t]
    {\footnotesize
\header{Additional Syntax for $\tlinq$}
\begin{minipage}{\textwidth}
\[
\begin{array}{lrclclrcl}
    \text{Types} & \tya, \tyb & ::= & \cdots \midspace \transtimety{\tya} &\qquad & \text{Timestamped rows} & D & ::= & \dbrow{\vala_1}{\vala_2}{\vala_3} \\
    \text{Terms} & \tmc, \tma, \tmb & ::= & \cdots \midspace \data{\tma} \midspace
    \timestart{\tma} \midspace \timeend{\tma} &&
    \text{Values} & \vala, \valb & ::= & \cdots \midspace D
\end{array}
\]
\end{minipage}

\headerarg
    {Modified Typing Rules for $\tlinq$}
    {\framebox{$\tseq{\tyenv}{\vala}{\tya}{\effs}$}\qquad\framebox{$\tseq{\tyenv}{\tma}{\tya}{\effs}$}}
\begin{mathpar}
    \inferrule
    [T-Row]
    {
        \tseq{\tyenv}{\vala_1}{\tya}{\pure} \\\\
        \tseq{\tyenv}{\vala_2}{\timety}{\pure} \\
        \tseq{\tyenv}{\vala_3}{\timety}{\pure}
    }
    { \tseq
        {\tyenv}
        {\dbrow{\vala_1}{\vala_2}{\vala_3}}
        {\transtimety{\tya}}
        {\pure}
    }

    \inferrule
    [T-Data]
    { \tseq{\tyenv}{\tma}{\transtimety{\tya}}{\effs} }
    { \tseq{\tyenv}{\data{\tma}}{\tya}{\effs} }

    \inferrule
    [T-Start]
    { \tseq{\tyenv}{\tma}{\transtimety{\tya}}{\effs} }
    { \tseq{\tyenv}{\timestart{\tma}}{\timety}{\effs} }

    \inferrule
    [T-End]
    { \tseq{\tyenv}{\tma}{\transtimety{\tya}}{\effs} }
    { \tseq{\tyenv}{\timeend{\tma}}{\timety}{\effs} }

    \inferrule
    [T-Get]
    { \tseq{\tyenv}{\tma}{\tablety{\tya}}{\effs} }
    { \tseq
        {\tyenv}
        {\get{\tma}}
        {\bagty{\transtimety{\tya}}}
        {\effset{\effread} \cup \effs}
    }
\end{mathpar}

\headersig{Semantics for $\tlinq$ database
operations}{$\ttmevaltwo{\tma}{\retpair{\vala}{\db'}}$}
    \begin{mathpar}
        \inferrule
        [ET-Data]
        { \ttmevaltwo
            {\tma}
            {\retpair{\dbrow{\vala_1}{\vala_2}{\vala_3}}{\db'}}
        }
        { \ttmevaltwo
            {\data{\tma}}
            {\retpair{\vala_1}{\db'}}
        }

        \inferrule
        [ET-Start]
        { \ttmevaltwo
            {\tma}
            {\retpair{\dbrow{\vala_1}{\vala_2}{\vala_3}}{\db'}}
        }
        { \ttmevaltwo
            {\timestart{\tma}}
            {\retpair{\vala_2}{\db'}}
        }

        \inferrule
        [ET-End]
        { \ttmevaltwo
            {\tma}
            {\retpair{\dbrow{\vala_1}{\vala_2}{\vala_3}}{\db'}}
        }
        { \ttmevaltwo
            {\timeend{\tma}}
            {\retpair{\vala_3}{\db'}}
        }

        \inferrule
        [ET-Insert]
        {
            \ttmevaltwo{\tma}{\retpair{\tblvar}{\db_1}} \\
            \puretmevaltwo{\tmb}{\bag{\seq{\vala}}} \\\\
            \mathit{vs} = \bag{\dbrow{\var{v}}{\timevar}{\forever} \mid \var{v} \in \seq{\vala}} \\
            \db_2 = \extendenv{\db'}{\tblvar}{\db_1(\tblvar) \denotbagunion \mathit{vs}}
        }
        {
          \ttmevaltwo{\dbinsert{\tma}{\tmb}}{\retpair{()}{\db_2}}
        }

        \inferrule
        [ET-Delete]
        {
            \ttmevaltwo{\tma}{\retpair{\tblvar}{\db_1}} \\
            \db_2 =
                {\extendenv
                    {\db_1}
                    {\tblvar}
                    { \bag{\mkwd{del}(\var{v}) \midspace \var{v} \in \db_1(\tblvar)} }
                } \\\\
                {\bl
                    \mkwd{del}(\dbrow{\var{data}}{\var{start}}{\var{end}})  = \\
                    \quad
                    {
                        \begin{cases}
                            \dbrow{\var{data}}{\var{start}}{\timevar} &
                             \text{if }
                                 \var{end} = \forever \text{ and }
                                 \puretmevaltwo{\subst{\tmb}{\var{data}}{x}}{\ttrue}
                                \\
                            \dbrow{\var{data}}{\var{start}}{\var{end}} & \text{otherwise}
                        \end{cases}
                    }
                 \el
                }
        }
        {
            \ttmevaltwo
                {\dbdelete{x}{\tma}{\tmb}}
                {\retpair
                    {()}
                    {\db_2}}
        }

        \inferrule
        [ET-Update]
        {
           \ttmevaltwo{\tmc}{\retpair{\tblvar}{\db_1}} \\
           \db_2 =
            \extendenv
                {\db_1}
                {\tblvar}
                {\flattenbag{\bag{\mkwd{upd}(\var{v}) \mid \var{v} \in \db_1(\tblvar)}}} \\\\
                {\bl
            \mkwd{upd}(\dbrow{\mathit{data}}{\mathit{start}}{\mathit{end}}) = \\
            \quad
                {
                \begin{cases}
                    {
                       \bag{\dbrow
                           {\var{data}}
                           {\var{start}}
                           {\timevar},
                           \dbrow
                           {
                               \recordwithtwo
                               {\mathit{data}}
                               {\seq{\ell = \vala}}
                           }
                           {\timevar}
                           {\forever}
                       }
                     }
                     \\
                           \quad {
                               \text{if }
                                 \puretmevaltwo{\subst{\tma}{\var{data}}{x}}{\ttrue},
                                 (\puretmevaltwo{\subst{\tmb_i}{\var{data}}{x}}{\vala_i})_i,
                                  \text{and } \mathit{end} = \forever
                           }
                  \\
                  {
                      \bag{\dbrow{\mathit{data}}{\mathit{start}}{\mathit{end}}}
                       \quad { \text{otherwise} }
                  }
               \end{cases}
             }
            \el     
            }
        }
        { \ttmevaltwo
            {\dbupdate{x}{\tmc}{\tma}{(\seq{\ell = \tmb})}}
            {\retpair{()}{\db_2}}
        }
      \end{mathpar}
}
\caption{Syntax, typing rules, and semantics of $\tlinq$}
\label{fig:tlinq}
\end{figure*}

$\tlinq$ extends $\linq$ with native support for transaction time
operations; instead of performing destructive updates, we adjust the end
timestamp of affected rows and, if necessary, insert updated rows.
$\tlinq$ database entries are therefore of the form
$\dbrow{\vala_1}{\vala_2}{\vala_3}$, where $\vala_1$ is the record data and
$\vala_2$ and $\vala_3$ are the start and end timestamps.

Figure~\ref{fig:tlinq} shows the syntax, typing rules, and semantics of
$\tlinq$; for brevity, we show the main differences to $\linq$.
Period-stamped database rows
are represented as triples $\dbrow{\var{data}}{\var{start}}{\var{end}}$ with type
$\transtimety{\tya}$, where $\var{data}$ has type $\tya$ (the type of each
record), and both $\var{start}$ and $\var{end}$ have type $\timety$.
A row is currently present in the
database if its end value is $\forever$. We introduce three accessors:
$\data{\tma}$ extracts the data record from a transaction-time row;
$\timestart{\tma}$ extracts the start time; and $\timeend{\tma}$ extracts the
end time. The $\calcwd{get}$ construct has an updated type to show that it
returns a bag of $\transtimety{\tya}$ values, rather than the records directly.
The typing rules for the other constructs remain the same as in $\linq$.

The accessor rules \textsc{ET-Data}, \textsc{ET-Start}, and \textsc{ET-End}
project the expected component of the transaction-time row.
Rule \textsc{ET-Insert}
period-stamps each record to begin at the current time, and sets the end time to
be $\forever$.
Rule \textsc{ET-Delete} records deletions for current rows satisfying
the deletion predicate.  Instead of being removed from the database,
the end times of the affected rows are set to the current timestamp.
Finally, rule \textsc{ET-Update} performs updates for current rows satisfying the update predicate. Instead of changing a record directly, the $\mkwd{upd}$
definition generates \emph{two} records: the previous
record, closed off at the current timestamp, and the new record with updated
values, starting from the current timestamp and with end field $\forever$.
Returning to our running example, define
$\mkwd{at}(\var{tbl}, \var{time})$ to return all records in
$\var{tbl}$ starting before $\var{time}$ and ending after
$\var{time}$.  We can then query the database as it stood at 18:00:

\begin{minipage}{0.6\columnwidth}
    {\footnotesize
\[
    \bl
    \at{\var{t}}{\var{time}} \defeq \\
    \quad \forcomptwo{x}{\get{\var{t}}} \\
    \qquad \whereone{\timestart{x} \le \var{time} \wedge \var{time} < \timeend{x}} \\
    \qquad \bag{\data{x}} \smallskip\\
    \at{\var{tbl}}{18{:}00}
    \el
\]%
}
\end{minipage}
\hfill
\begin{minipage}{0.3\columnwidth}
{\footnotesize
    \begin{tabular}{| c | c |}
        \hline
        task & done \\ \hline
        Go shopping & \ttrue \\
        \hline
        Cook dinner & \ttrue \\
        \hline
        Walk the dog & \ffalse \\
        \hline
        Watch TV & \ffalse \\
        \hline
    \end{tabular}\\
}
\end{minipage}

\noindent
Let $\mkwd{current}(\var{t}) = \mkwd{at}(\var{t},\forever)$
return the current snapshot of $t$.
We can then query the current snapshot of the database:

{\footnotesize
\begin{minipage}{0.3\columnwidth}
\[
  \bl
  \current{\var{tbl}} =
    \el
\]
\end{minipage}%
\begin{minipage}{0.3\columnwidth}
\begin{tabular}{| c | c |}
    \hline
    task & done \\ \hline
    Go shopping & \ttrue \\
    \hline
    Cook dinner & \ttrue \\
    \hline
    Walk the dog & \ffalse \\
    \hline
\end{tabular}
\end{minipage}
}
\vspace{-1em}

\subsection{Translation}
We can implement the native transaction-time semantics for $\tlinq$
database operations by translation to $\linq$. Our translation adapts
the SQL implementations of temporal operations
by~\citet{Snodgrass99:book} to a language-integrated query
setting. We prove correctness relative to the semantics.

$\tlinq$ has a native representation of period-stamped data, whereas
$\linq$ requires table types to be flat. Consequently, the
translations require knowledge of the types of each record. We
therefore annotate each $\tlinq$ database term with the type of table
on which it operates (this can be achieved through a standard
type-directed translation pass).

\begin{figure}
{\footnotesize
\header{Auxiliary Definitions}
    \[
    \begin{array}{rcl}
        \etaexp{x}{\seq{\ell}} & \defeq & \recordterm{\ell_i = \project{x}{\ell_i}}_i \\
        \restrict{x}{\seq{\ell}}{\tma} & \defeq & (\fun{x}{\tma}) \app \etaexp{\seq{\ell}}{x}\\
        \iscurrent{\tma} & \defeq & \project{\tma}{\var{end}} = \forever
    \end{array}
\]

\header{Translation on database terms}
\[
\bl
      \ttmtrans{\data{\tma}}  =  \project{\tvaltrans{\tma}}{\field{data}} \spacerow \\
      \ttmtrans{\timestart{\tma}}  =  \project{\tvaltrans{\tma}}{\field{start}} \spacerow \\
      \ttmtrans{\timeend{\tma}}  =  \project{\tvaltrans{\tma}}{\field{end}} \spacerow \\
  \ttmtrans{\getann{(\ell_i: A_i)_i}{\tma}} = \\
  \quad
  \queryzero \\
  \quad
    {
      \blt
        \forcomp{x}{\get{\tvaltrans{\tma}}} \\
        \quad \bag{\recordterm{
                \fielddata{\etaexp{x}{\seq{\ell}}},
                \fieldstart{\project{x}{\field{start}}},
                \fieldend{\project{x}{\field{end}}}}}
      \el
    } \spacerow \\
  \ttmtrans{\dbinsertann{\tma}{\tmb}} = \\
  \quad
    {
    \blt
      \effletone{\var{rows}} \\
      \quad \forcomptwo{x}{\ttmtrans{\tmb}} \\
      \qquad
        \bag{
            \recordplustwo
                {\etaexp{x}{\seq{\ell}}}
                {\recordterm{
                    \fieldstart{\now},
                    \fieldend{\forever}
                }}
        } \\
        \calcwd{in} \\
      \dbinsert{\ttmtrans{\tma}}{\var{rows}}
    \el
    } \spacerow \\
  \ttmtrans{\dbdeleteann{x}{\tma}{\tmb}} = \\
    \quad {
      \blt
      \dbupdatetwo{x}{\tvaltrans{\tma}} \\
      \quad \calcwhereone{\restrict{x}{\seq{\ell}}{\ttrans{\tmb}} \wedge \iscurrent{x}} \\
      \quad \calcwd{set} \: (\fieldend{\now})
      \el
    } \spacerow \\
  \ttmtrans{
    \dbupdateann
        [\recordterm{\ell_i : \tya_i}_{i \in I}]
        {x}
        {\tmc}
        {\tma}
        {(\ell = \tmb_j)_{j \in J}}} = \\
  \quad
  {
    \blt
      \letintwo{\var{tbl}}{\ttrans{\tmc}} \\
      \letinone{\var{affected}}\\
      \quad \queryzero \\
      \qquad \forcomptwo{x}{\get{\var{tbl}}} \\
      \qqquad
        \whereone
            {(\restrict{x}{\set{\ell_i}_{i \in I}}{\ttmtrans{\tma}} \wedge \iscurrent{x})} \\
      \qqquad \baglr{
              {\bl
                        \recordterm{\ell_i = \project{x}{\ell_i}}_{i \in I
                        \without J} \, \recordplus \\
                        \recordterm
                            {\ell_j =
                                \restrict
                                    {x}
                                    {\set{\ell_i}_{i \in I}}
                                    {\ttrans{\tmb_j}}}_{j \in J} \, \recordplus \\
                  \recordterm{
                      \fieldstart{\now},
                      \fieldend{\forever}
                  }
                  \el
              }
        }
      \\
      \calcwd{in}
      \\
      \dbupdatetwo{x}{\var{tbl}} \\
      \quad \calcwd{where} \: (\restrict{x}{\seq{\ell}}{\ttmtrans{\tma}} \wedge \iscurrent{x}) \\
      \quad \calcwd{set} \: \recordterm{\fieldend{\now}}; \\
      \dbinsert{\var{tbl}}{\var{affected}}
    \el
  }
  \el
  \]
}
\caption{Translation from $\tlinq$ into $\linq$}
\label{fig:tlinq-translation}
\end{figure}

The (omitted) translation of $\tlinq$ types into $\linq$ types is straightforward,
save for $\transtimety{\tya}$ which is translated into a record
$\recordty{\tyfielddata{\ttrans{\tya}}, \tyfieldstart{\timety},
\tyfieldend{\timety}}$.
The same is true for the basic $\lambda$-calculus terms.
Timestamped rows
$\dbrow{\vala_{\var{data}}}{\vala_{\var{start}}}{\vala_{\var{end}}}$ are
translated to fit the above record type; specifically,
$\recordterm{\fielddata{\ttrans{\vala_{\var{data}}}},
    \fieldstart{\ttrans{\vala_{\var{start}}}},
\fieldend{\ttrans{\vala_{\var{end}}}} }$.

\begin{remark}
We have chosen to represent a $\tlinq$ period-stamped record as a nested record
in $\linq$, but we could equally adopt a flat representation.
Since we build on the Nested Relational Calculus, we take advantage of the
ability to return nested results; previous work on query
shredding~\cite{cheney14sigmod} allows us to flatten nested results in a later
translation pass.
A nested representation is more convenient for our implementation and makes the
translation simpler and more compositional, so we mirror this choice in the
formalism.
\end{remark}

We define the \emph{flattening} of a
$\tlinq$ row and database as:%
\[
    \begin{array}{rcl}
    \flatten{(\dbrow{\recordterm{\ell_i = \vala_i}_i}{\valb_1}{\valb_2})}
    & \defeq &
\recordterm{\ell_i = \vala_i}_i \recordplus
    \recordterm{\fieldstart{\valb_1}, \fieldend{\valb_2}} \\
    \flatten{\db} & \defeq & [ t \mapsto
\bag{\seq{\flatten{D}}} \mid t \mapsto \bag{\seq{D}} \in \db]
    \end{array}
\]%

Figure~\ref{fig:tlinq-translation} shows the translation of $\tlinq$ terms into
$\linq$. The translation makes use of three auxiliary definitions. Eta-expansion
$\etaexp{x}{\seq{\ell}}$ eta-expands a variable of record type with respect to a
sequence of labels, and $\restrict{x}{\seq{\ell}}{\tma}$ applies an eta-expanded
record to a $\tma$ under a $\lambda$-binder for $x$ (required since
the $\tlinq$ predicates expect just the data from the record, and not the
period-stamping fields).
Finally, $\mkwd{isCurrent}(\tma)$ tests whether the
$\var{end}$ field of $\tma$ is $\forever$.

Since timestamped rows are translated to records, the temporal accessor functions
are translated to record projection. The $\calcwd{get}$ function is translated
to retrieve the flattened $\linq$ representation of the table and pack it via
$\eta$-expansion into a bag of nested records. The translation of
$\calcwd{insert}$ extends the provided values with a $\var{start}$
field referring to the current timestamp and an $\var{end}$ field set to
$\forever$, before inserting them into the database.

A $\calcwd{delete}$ is translated as a $\linq$ $\calcwd{update}$ operation,
which sets the $\var{end}$ record of each affected row to the current timestamp.
An $\calcwd{update}$ is translated in three steps: querying the database to
obtain the affected current records, with updated values and timestamps;
updating the database to close off the existing affected rows;
and materialising the insertion.

\subsection{Metatheory}

We restrict our attention to \emph{well formed} rows and databases, where the
$\var{start}$ timestamp is less than the $\var{end}$ timestamp.

\begin{definition}[Well formed rows and databases]
    A database $\db$ is well formed, written $\wf{\db}$, if for each
    timestamped row
    $\dbrow{\vala_{\var{data}}}{\vala_{\var{start}}}{\vala_{\var{end}}}$ in $\db$
    we have that $\vala_{\var{start}} < \vala_{\var{end}}$.
\end{definition}
\begin{definition}[Maximum timestamp]
    The \emph{maximum timestamp} of a collection of records
    $\seq{D}$
    is defined as the maximum timestamp in the set
    $\{ \vala_{\var{end}} \mid
    \dbrow{\vala_{\var{data}}}{\vala_{\var{start}}}{\vala_{\var{end}}} \in
    \seq{D},
\vala_{\var{end}} \ne \forever \}$,
    or $\beginningoftime$ if the set is empty.

    The maximum timestamp of a database
    $\db$, written $\maxts{\db}$, is the maximum timestamp of all its constituent tables.
\end{definition}

Again, $\tlinq$ enjoys type soundness.

\begin{proposition}[Type soundness ($\tlinq$)]
    If $\tseq{\cdot}{\tma}{\tya}{\effs}$, then given a
    $\wf{\db}$ and $\timevar$ such that
    $\maxts{\db} \le \timevar$,
    then there exists some $\vala$ and
    well formed $\db'$ such that $\ttmevaltwo{\tma}{\retpair{\vala}{\db'}}$.
\end{proposition}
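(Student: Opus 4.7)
The plan is to prove this by induction on the typing derivation $\tseq{\cdot}{\tma}{\tya}{\effs}$, strengthening the statement (as is standard) to also record that the returned value has type $\tya$ and that the resulting database $\db'$ satisfies the same invariant $\maxts{\db'} \le \timevar$ that we assume of $\db$. This makes the inductive hypothesis usable across subexpressions that chain database effects. Most of the work then factors through the existing $\linq$ soundness result (Proposition~\ref{prop:linq:soundness}), leaving only the temporal-specific cases to analyse in detail.

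Before starting the induction, I would prove the standard preparatory lemmas (weakening, value substitution, and an effect-weakening lemma) in the extended type system of $\tlinq$, together with two new ingredients specific to the temporal setting. The first is a \emph{canonical-forms} lemma for $\transtimety{\tya}$: any closed value of this type must be of the form $\dbrow{\vala_1}{\vala_2}{\vala_3}$ with $\tseq{\cdot}{\vala_1}{\tya}{\pure}$ and $\vala_2, \vala_3 : \timety$, which justifies the accessor rules \textsc{ET-Data}, \textsc{ET-Start}, and \textsc{ET-End}. The second is a \emph{well-formedness preservation} lemma: if $\wf{\db}$ and $\maxts{\db} \le \timevar$, then each database transition induced by the semantics of $\calcwd{insert}$, $\calcwd{update}$, and $\calcwd{delete}$ yields a $\db'$ with $\wf{\db'}$ and $\maxts{\db'} \le \timevar$.

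With these in hand, the main induction is routine for variables, constants, application, conditionals, records, bags and comprehensions: these cases mirror the $\linq$ soundness proof, threading $\wf{\db}$ and the $\maxts$ bound through appeals to the IH. The new temporal cases split in two. The accessor cases (\textsc{T-Data}, \textsc{T-Start}, \textsc{T-End}) reduce immediately to the canonical-forms lemma applied to the IH for the subterm. The database-operation cases (\textsc{T-Get}, \textsc{T-Insert}, \textsc{T-Update}, \textsc{T-Delete}) reduce to the corresponding part of the preservation lemma together with an easy check that the value returned has the right type, where for \textsc{T-Get} the resulting bag is well-typed because every row stored in $\db$ is a well-formed triple of the right shape, itself a consequence of $\wf{\db}$.

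The main obstacle is the preservation lemma for \textsc{ET-Delete} and \textsc{ET-Update}. When a current row $\dbrow{\mathit{data}}{\mathit{start}}{\forever}$ is closed off, its end timestamp becomes $\timevar$, and well-formedness demands $\mathit{start} < \timevar$; the stated invariant $\maxts{\db} \le \timevar$ only bounds previously-closed end times and says nothing about start times of current rows. I would therefore strengthen the invariant carried through the induction to additionally require $\mathit{start} < \timevar$ for every current row (equivalently, I would lift $\maxts$ to also range over start times of current rows). Once this strengthening is threaded into the statement, the \textsc{ET-Insert} case is immediate because fresh rows $[\timevar,\forever)$ do not appear in $\maxts$ and (under the further mild assumption that $\timevar$ strictly advances between program runs) have $\mathit{start} < \timevar$, while \textsc{ET-Delete} and \textsc{ET-Update} become direct calculations: the closed-off row has $\mathit{start} < \timevar = \mathit{end}$, and newly created continuation rows have the same shape as in the insert case. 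The remaining algebraic checks are mechanical.
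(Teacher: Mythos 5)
The paper itself states this proposition without proof (the appendix proves only the two translation-correctness theorems), so there is no official argument to compare against. Your overall skeleton --- induction on the typing derivation, strengthened to carry type preservation and a database invariant, with canonical forms for $\transtimety{\tya}$ and a separate preservation lemma for the database operations --- is the standard and sensible way to do it, and you have correctly located the one genuinely delicate point: $\maxts{\db} \le \timevar$ constrains only the end timestamps of closed rows, so it says nothing about the start timestamps of rows whose end is $\forever$, which is exactly what \textsc{ET-Delete} and \textsc{ET-Update} need when they close a current row off at $\timevar$.

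However, your repair does not close the induction. The strengthened invariant ``$\mathit{start} < \timevar$ for every current row'' is not preserved by \textsc{ET-Insert}: freshly inserted rows are stamped $\dbrow{\var{v}}{\timevar}{\forever}$, i.e.\ with $\mathit{start} = \timevar$, not $\mathit{start} < \timevar$. Since the paper fixes $\timevar$ for the whole (atomic) evaluation, a term such as $\dbinsert{\tblvar}{\bag{r}};\ \dbdelete{x}{\tblvar}{\ttrue}$ first creates a row with $\mathit{start}=\timevar$ and then closes it off at $\timevar$, producing a row with $\mathit{start}=\mathit{end}=\timevar$, which violates well-formedness. Your parenthetical appeal to ``$\timevar$ strictly advances between program runs'' is beside the point, because the failure occurs within a single run, between two subterms evaluated under the same $\timevar$. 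The invariant you can actually maintain is the non-strict one, $\mathit{start} \le \timevar$ for every row, and with only that the delete and update cases yield $\mathit{start} \le \mathit{end}$ rather than $\mathit{start} < \mathit{end}$. Closing this gap requires changing something: relax well-formedness to admit empty intervals, make delete/update physically remove rows whose start equals the current time, or interpose a tick between modifications. Any of these alters the paper's definitions, so the residual difficulty is at least partly an imprecision in the proposition as stated --- but as written, your induction does not go through, and you should state explicitly which repair you adopt.
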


We can now show that the translation is correct:

\begin{restatable}{thm}{ttranscorrect}
    If $\tseq{\cdot}{\tma}{\tya}{\effs}$ and
    $\ttmevaltwo{\tma}{\retpair{\vala}{\db'}}$ where
    $\wf{\db}$ and $\maxts{\db} \le \timevar$,
    then
        $\tmevaltwo
            [\flatten{\db}, \timevar]
            {\ttrans{\tma}}
            {\retpair
                {\ttrans{\vala}}
                {\flatten{\db'}}}$.
\end{restatable}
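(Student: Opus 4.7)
The plan is to proceed by induction on the derivation of $\ttmevaltwo{\tma}{\retpair{\vala}{\db'}}$. Before starting the induction, I would establish a handful of supporting lemmas: a substitution lemma $\ttrans{\subst{\tma}{\vala}{x}} = \subst{\ttrans{\tma}}{\ttrans{\vala}}{x}$; a pure/read-only correspondence lemma stating that $\puretmevaltwo{\tma}{\vala}$ in $\tlinq$ implies $\puretmevaltwo{\ttrans{\tma}}{\ttrans{\vala}}$ in $\linq$, and similarly $\readtmevaltwo{\tma}{\vala}$ under $\db$ implies $\readtmevaltwo{\ttrans{\tma}}{\ttrans{\vala}}$ under $\flatten{\db}$; a flattening identity $\flatten{\db}(\tblvar) = \bag{v \recordplus \recordterm{\fieldstart{v_2}, \fieldend{v_3}} \mid \dbrow{v}{v_2}{v_3} \in \db(\tblvar)}$; and the standard observation that $\etaexp{x}{\seq{\ell}}$ is extensionally equal to $x$ at the relevant record type. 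With these in hand, the congruence cases (variables, constants, lambda abstraction, application, records, projection, comprehensions, conditionals, bag constructors) should follow directly from the IH, since the evaluation rules coincide in $\tlinq$ and $\linq$ on these constructs and the translation is homomorphic on them.

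The four interesting cases are the database operations. The accessor rules \textsc{ET-Data}, \textsc{ET-Start}, and \textsc{ET-End} reduce to record projection on the translated row, which returns the desired component by the definition of the translation on timestamped rows. For \textsc{ET-Get}, the translated query performs a comprehension over $\flatten{\db}(\tblvar)$ that re-bundles each flat record into the nested $(\var{data}, \var{start}, \var{end})$ shape; by the flattening identity and the eta-equivalence, this bag equals the translation of $\db(\tblvar)$. For \textsc{ET-Insert}, the IH on $\tmb$ gives that $\ttrans{\tmb}$ evaluates to the pointwise translation of $\bag{\seq{\vala}}$; the translation extends each element with $\fieldstart{\timevar}$ and $\fieldend{\forever}$, producing exactly the $\flatten{\cdot}$-image of $\bag{\dbrow{v}{\timevar}{\forever} \mid v \in \seq{\vala}}$, and $\linq$'s append then yields $\flatten{\db_2}$.

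For \textsc{ET-Delete}, the $\linq$ update predicate $\restrict{x}{\seq{\ell}}{\ttrans{\tmb}} \wedge \iscurrent{x}$ selects exactly the flat rows whose underlying $\tlinq$ row satisfies $\puretmevaltwo{\subst{\tmb}{v}{x}}{\ttrue}$ and has $\var{end} = \forever$; on these rows it rewrites $\var{end}$ to $\timevar$, matching the $\mkwd{del}$ meta-function row-for-row, and leaves all other rows untouched. The \textsc{ET-Update} case is the main obstacle. Its translation sequences three $\linq$ operations: (1) a query binding $\var{affected}$ to the new rows (each carrying the updated fields, $\fieldstart{\timevar}$, $\fieldend{\forever}$); (2) a $\linq$ update that closes off the old current rows by setting $\var{end}$ to $\timevar$; and (3) an insert of $\var{affected}$. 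The subtlety to handle is that step~(1) must be evaluated against the initial database and its result captured as a value before steps~(2) and~(3) mutate the table; I would then perform a case analysis on each row of $\db_1(\tblvar)$, showing that the combined effect of steps (2) and (3) on the corresponding flat row coincides with the $\flatten{\cdot}$-image of the one- or two-row output produced by the $\tlinq$ $\mkwd{upd}$ meta-function. The hypotheses $\wf{\db}$ and $\maxts{\db} \le \timevar$ are used throughout to guarantee $\timevar < \forever$ and to preserve the well-formedness invariant on intermediate databases, which is needed to keep applying the IH.
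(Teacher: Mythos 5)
Your proposal matches the paper's proof: both proceed by induction on the evaluation derivation, dispatch the congruence cases via the homomorphic translation, rely on a pure-reduction correspondence lemma (the paper's Lemma~\ref{lem:tlinq-trans-pure}) and a restrict/eta-expansion lemma (Lemma~\ref{lem:tlinq-trans-restrict}), and handle the database operations --- including the three-stage update with the affected rows materialised before mutation --- exactly as the paper does. No substantive differences.
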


\newcommand{\vlinqsyntax}{
\begin{figure*}[t]
{\footnotesize
~\header{Syntax}
\begin{minipage}{\textwidth}
\[
    \begin{array}{lrcl}
      \text{Types} & \tya, \tyb & ::= & \validtimety{\tya} \\
      \text{Terms} & \tmc, \tma, \tmb & ::= & \cdots \midspace
      \dbrow{\tmc}{\tma}{\tmb}
                       \midspace \data{\tma} \midspace \timestart{\tma} \midspace \timeend{\tma}
                   \mid \dbinsertseq{\tma}{\tmb} \\
                   & & \midspace &
                   \dbupdatebetween{x}{\tmc}{\tma_1}{\tma_2}{\tma_3}{(\seq{\ell = \tmb})} \\
                   & & \midspace & \dbupdatenonseq{x}{\tmc}{\tma}{(\seq{\ell = \tmb})}{\tmb'_1}{\tmb'_2} \\
                   & & \midspace &
                   \dbdeletebetween{x}{\tmc}{\tma_1}{\tma_2}{\tmb} \\
                   & & \midspace & \dbdeletenonseq{x}{\tma}{\tmb} \\
      \text{Values} & \vala, \valb & ::= & \cdots \midspace \dbrow{\vala_1}{\vala_2}{\vala_3}
    \end{array}
  \]
\end{minipage}

~\headersig{Typing rules}{$\Gamma \vdash M : A \effann{\effs}$}
\vspace{-1em}
\begin{mathpar}
  \inferrule
  [TV-Get]
  { \tseq{\tyenv}{\tma}{\tablety{\tya}}{\effs} }
  { \tseq
      {\tyenv}
      {\get{\tma}}
      {\bagty{\validtimety{\tya}}}
      {\effset{\effread} \cup \effs}
  }

  \inferrule
  [TV-SeqInsert]
  {
      \tseq{\tyenv}{\tma}{\tablety{\tya}}{\effs} \\
      \tseq{\tyenv}{\tmb}{\bagty{\validtimety{\tya}}}{\pure}
  }
  { \tseq{\tyenv}{\dbinsertseq{\tma}{\tmb}}{\recordty{}}{\effset{\effwrite} \cup \effs} }

  \inferrule
  [TV-SeqUpdate]
  { \tseq{\tyenv}{\tmc}{\tablety{\tya}}{\effs} \\
      \tya = \recordty{\ell_i : \tyb_i}_{i \in I} \\
    \tseq{\tyenv}{\tma_1}{\timety}{\pure} \\
    \tseq{\tyenv}{\tma_2}{\timety}{\pure} \\
    \tseq{\tyenv, x : \tya}{\tma_3}{\boolty}{\pure} \\
    (j \in I \wedge \tseq{\tyenv, x : \tya}{\tmb_j}{\tyb_j}{\pure})_{j \in J}
  }
  { \tseq
      {\tyenv}
      {\dbupdatebetween{x}{\tmc}{\tma_1}{\tma_2}{\tma_3}{(\ell_j = \tmb_j)_{j \in J}}}
      {()}
      {\effset{\effwrite} \cup \effs }
  }

  \inferrule
  [TV-NonseqUpdate]
  {
      \tseq{\tyenv}{\tmc}{\tablety{\tya}}{\effs} \\
      \tya = \recordty{\ell_i : \tyb_i}_{i \in I} \\
      \tseq{\tyenv, x : \validtimety{\tya}}{\tma}{\boolty}{\pure} \\
      (j \in I \wedge \tseq{\tyenv, x : \validtimety{\tya}}{\tmb_j}{\tyb_j}{\pure})_{j \in J} \\
      \tseq{\tyenv, x : \validtimety{\tya}}{\tmb'_1}{\timety}{\pure} \\
      \tseq{\tyenv, x : \validtimety{\tya}}{\tmb'_2}{\timety}{\pure}
  }
  {
      \tseq
        {\tyenv}
        { \dbupdatenonseq{x}{\tmc}{\tma}{(\ell_j {=} \tmb_j)_{j \in J}}{\tmb'_1}{\tmb'_2} }
        { () }
        { \effset{\effwrite} \cup \effs }
  }

     \inferrule
     [TV-SeqDelete]
     {
       \tseq{\tyenv}{\tmc}{\tablety{\tya}}{\effs_1}\!\! \\
       \tseq{\tyenv}{\tma_1}{\timety}{\effs_2}\!\! \\
       \tseq{\tyenv}{\tma_2}{\timety}{\effs_3}\!\! \\
       \tseq{\tyenv, x {:} \tya}{\tmb}{\boolty}{\pure}
     }
     {
         \tseq
           {\tyenv}
           {\dbdeletebetween{x}{\tmc}{\tma_1}{\tma_2}{\tmb}}
           {()}
           {\effset{\effwrite} \cup \effs_1 \cup \effs_2 \cup \effs_3}
     }

\inferrule
  [TV-NonseqDelete]
  {
      \tseq{\tyenv}{\tma}{\tablety{\tya}}{}{\effs} \\
      \tseq{\tyenv, x : \validtimety{\tya}}{\tmb}{\boolty}{\pure}
  }
  { \tseq
        {\tyenv}
        {\dbdeletenonseq{x}{\tma}{\tmb}}
        {()}
        {\effset{\effwrite} \cup \effs}
  }
\end{mathpar}
}
\caption{Syntax and typing rules for \vlinq}
\label{fig:vlinq-syntax}
\end{figure*}
}

\newcommand{\vlinqreduction}{
\begin{figure*}[t]
    {\footnotesize
~\headersig{Reduction rules}{$\vtmevaltwo{\tma}{\retpair{\vala}{\db'}}$}
\begin{mathpar}
  \inferrule
  [EV-Row]
  {
      \vtmevaltwo{\tma_1}{\retpair{\vala_1}{\db_1}} \\\\
      \vtmevaltwo[\timevar, \db_1]{\tma_2}{\retpair{\vala_2}{\db_2}} \\
      \vtmevaltwo[\timevar, \db_2]{\tma_3}{\retpair{\vala_3}{\db_3}}
  }
  { \vtmevaltwo
      {\dbrow{\tma_1}{\tma_2}{\tma_3}}
      {\retpair{\dbrow{\vala_1}{\vala_2}{\vala_3}}{\db_3}}
  }

  \inferrule
  [EV-SeqInsert]
  {
      \vtmevaltwo{\tma}{\retpair{\tblvar}{\db_1}} \\
      \puretmevaltwo{\tmb}{\bag{\seq{\vala}}} \\
      \forall \dbrow{\var{data}}{\var{start}}{\var{end}} \in \seq{\vala}. \var{start}
      < \var{end} \\\\
      \db_2 = \extendenv{\db_1}{\tblvar}{\db_1(\tblvar) \denotbagunion
      \bag{\seq{\vala}}}
  }
  {
    \vtmevaltwo{\dbinsertseq{\tma}{\tmb}}{\retpair{()}{\db_2}}
  }

  \inferrule
  [EV-SeqUpdate]
  {
    \vtmevaltwo{\tmc}{\retpair{\tablevar}{\db_1}} \\
      \puretmevaltwo{\tma_1}{\vala_{\var{start}}} \\
      \puretmevaltwo{\tma_2}{\vala_{\var{end}}} \\
      \vala_{\var{start}} < \vala_{\var{end}}
      \\
    \db_2 = \db_1[ \tablevar \mapsto
      \flattenbag{\bag{\mkwd{upd}(d) \midspace d \in \db_1(\tablevar)}} ]
    \\
      \mkwd{upd}(\dbrow{\var{v}}{\var{start}}{\var{end}}) =
      {
        \begin{cases}
          \bag{\dbrow{W}{\var{start}}{\var{end}}} &
          \text{ if } \puretmevaltwo{\subst{\tma_3}{\var{v}}{x}}{\ttrue} \text{
          and } \vala_{\var{start}} \le \var{start} \text{ and }
          \vala_{\var{end}} \ge \var{end} \quad \textbf{(Case 1)}\\
          \bag{\dbrow{W}{\var{start}}{\vala_{\var{end}}}, \dbrow{\var{v}}{\vala_{\var{end}}}{\var{end}}} &
          \text{ if } \puretmevaltwo{\subst{\tma_3}{\var{v}}{x}}{\ttrue} \text{
          and } \vala_{\var{start}} \le \var{start} \text{ and }
          \vala_{\var{end}} < \var{end} \quad \textbf{(Case 2)}\\
          \bag{
              \dbrow{\var{v}}{\var{start}}{\vala_{\var{start}}},
              \dbrow{W}{\vala_{\var{start}}}{\vala_{\var{end}}},
              \dbrow{\var{v}}{\vala_{\var{end}}}{\var{end}}} &
          \text{ if } \puretmevaltwo{\subst{\tma_3}{\var{v}}{x}}{\ttrue} \text{
          and } \vala_{\var{start}} > \var{start} \text{ and }
          \vala_{\var{end}} < \var{end} \quad \textbf{(Case 3)}\\
          \bag{\dbrow{\var{v}}{\var{start}}{\vala_{\var{start}}}, \dbrow{W}{\vala_{\var{start}}}{\var{end}}} &
          \text{ if } \puretmevaltwo{\subst{\tma_3}{\var{v}}{x}}{\ttrue} \text{
          and } \vala_{\var{start}} > \var{start} \text{ and }
          \vala_{\var{end}} \ge \var{end} \quad \textbf{(Case 4)}\\
          \bag{\dbrow{\var{v}}{\var{start}}{\var{end}}} & \text{ otherwise }
          \hfill \textbf{(Case 5)}\\
        \end{cases}
      }
      \\
      \text{ where for all cases, } W =
                              \recordwithtwo
                                {\var{v}}
                                {\seq{\ell = \valb'}} \text{ given } (\puretmevaltwo{\tmb_i}{\valb'_i})_i
  }
  {
      \tmevaltwo
          {\dbupdatebetween{x}{\tmc}{\tma_1}{\tma_2}{\tma_3}{(\ell_i = \tmb_i)_i}}
          { \retpair{\recordterm{}}{\db_2}}
  }

    \inferrule
    [EV-NonseqUpdate]
    {
      \tmevaltwo{\tmc}{\retpair{\tblvar}{\db_1}} \\
              \db_2 =
                  {\extendenv
                      {\db_1}
                      {\tblvar}
                      { \bag{\mkwd{upd}(d) \mid d \in \db_1(\tblvar)}}
                  }
              \\\\
              {
              \mkwd{upd}(D = \dbrow{\var{v}}{\var{start}}{\var{end}}) =
              {
                  \begin{cases}
                      \dbrow
                          {
                              \recordwithtwo
                                  {\var{v}}
                                  {\seq{\ell = \valb}}
                          }
                          { \valb_{\var{start}} }
                          { \valb_{\var{end}} }
                      & \text{ if }
                      {\blt
                              \puretmevaltwo{\subst{\tma}{D}{x}}{\ttrue}
                              \text{ and }
                              (\puretmevaltwo{\subst{\tmb_i}{D}{x}}{\valb_i})_i
                              \text{ and } \\
                              \puretmevaltwo{\subst{\tmb'_1}{D}{x}}{\valb_{\var{start}}}
                              \text{ and }
                              \puretmevaltwo{\subst{\tmb'_2}{D}{x}}{\valb_{\var{end}}}
                              \text{ and }  \valb_{\var{start}} < \valb_{\var{end}}
                       \el}
                              \\
                      D & \text{ if } \puretmevaltwo{\tma \{ D / x \}}{\ffalse}
                  \end{cases}
              }
              }
  }
  {
      \tmevaltwo
          { \dbupdatenonseq{x}{\tmc}{\tma}{(\ell_i = \tmb_i)_{i}}{\tmb'_1}{\tmb'_2} }
          {\retpair{()}{\db_2}}
  } 
  \end{mathpar}
}
\caption{Reduction rules for \vlinq (selected)}
\label{fig:vlinq-reduction}
\end{figure*}
}

\section{Valid Time}\label{sec:vlinq}
\vlinqsyntax

The other dimension of time we will look at is \emph{valid time}, which tracks
when something is true in the \emph{domain being modelled}. Each timestamp
therefore defines the \emph{period of validity} (PV) of each record.

Unlike in a transaction time database, the database does not
necessarily grow monotonically since we can apply destructive updates
and deletions. Furthermore, whereas in a transaction time database
timestamps can only refer to the past (or $\forever$), in a valid time
database we may state that a row is valid until some specific point in
the future (for example, the end of a fixed-term employment
contract). A further difference from transaction time databases is
that users \emph{can} modify timestamps directly, and can also apply
updates and deletions over a time period.
Let us illustrate with the `employees' table of an HR database:\\

\vspace{-2mm}
\begin{center}
\footnotesize{
\begin{tabular}{|c|c|c||c|c|}
    \hline
    name & position & salary & start & end \\
    \hline
    Alice & Lecturer & 40000 & 2010 & 2018 \\
    \hline
    Alice & Senior Lecturer & 50000 & 2018 & $\infty$ \\
    \hline
    Bob & PhD Student & 15000 & 2019 & 2023 \\
    \hline
    Charles & PhD Student & 15000 & 2018 & 2022 \\
    \hline
\end{tabular}\\
}
\end{center}
The first modification is to hire Dolores as a professor, on an open-ended
contract. As this is an insertion operation on the database at the current
moment in time, it is known as a \emph{current insertion}.
We can write the following query:

\vspace{-2mm}
{\footnotesize
\[
    \bl
    \calcwd{insert} \:
        {\var{employees}} \: \calcwd{values} \\
        \quad
        {\recordterm{\var{name} = \text{``Dolores''},
            \var{position} = \text{``Professor''}, \var{salary} = 70000}}
    \el
\]
}
Next, we want to record that Alice has resigned. We can write the following
\emph{current deletion} query:

\vspace{-2mm}
{\footnotesize
\[
    \dbdelete{x}{\var{employees}}{\project{x}{\var{name}} = \text{``Alice''}}
\]
}
The resulting table state shows that Dolores is a Professor from the current
time onwards, and that the `end' field of Alice's current row
is updated to the current year:

{\footnotesize
    \begin{center}
\begin{tabular}{|c|c|c||c|c|}
    \hline
    name & position & salary & start & end \\
    \hline
    Alice & Lecturer & 40000 & 2010 & 2018 \\
    \hline
    Alice & Senior Lecturer & 50000 & 2018 & 2022 \\
    \hline
     Dolores & Professor & 70000 & 2022 & $\infty$ \\
     \hline
  $\cdots$ & $\cdots$ & $\cdots$ & $\cdots$ & $\cdots$ \\
    \hline
\end{tabular}
\end{center}
}
A powerful feature of valid-time databases is the ability to perform
\emph{sequenced modifications}, which apply an
update or deletion over a particular \emph{period of applicability} (PA).
In fact, current modifications are a special case of sequenced modifications
applied from $\now$ until $\forever$.
Suppose that Dolores has agreed to act as Head of School between 2023 and
2028. We can record this using a \emph{sequenced update} query:

\vspace{-2mm}
{\footnotesize
\[
    \bl
    \calcwd{update} \: \calcwd{sequenced} \: (x \dbcomparrow \var{employees}) \\
    \quad \calcwd{between} \: 2023 \: \calcwd{and} \: 2025 \:
          \calcwd{where} \: (\project{x}{\var{name}} = \text{``Dolores''}) \\
    \quad \calcwd{set} \: (\var{position} = \text{``Head of School''})
    \el
\]
}
with the resulting table being:

{\footnotesize
    \begin{center}
\begin{tabular}{|c|c|c||c|c|}
    \hline
    name & position & salary & start & end \\
    \hline
    Dolores & Professor & 70000 & 2022 & 2023 \\
    \hline
    Dolores & Head of School & 70000 & 2023 & 2028 \\
    \hline
  Dolores & Professor & 70000 & 2028 & $\infty$ \\
  \hline
  $\cdots$ & $\cdots$ & $\cdots$ & $\cdots$ & $\cdots$ \\
    \hline
\end{tabular}\\
\end{center}
}
Since the period of applicability of the sequenced update
is entirely contained within the period of validity of Dolores's row, we end up
with three rows: the unchanged record before and after the PA, and the updated
record during the PA.
We also allow a \emph{sequenced deletion}, and a \emph{sequenced insertion},
where each record's period of validity is given explicitly.

Additionally, suppose that all PhD students are to be given a 1-year extension
due to the disruption caused by the pandemic; in this
case we want to change the period of validity directly. This is known as a
\emph{nonsequenced update}.
We cannot express this
modification using either current or sequenced modifications since we must
calculate the each row's new end date from its previous end date.
We can write the modification as follows, noting
that we can both read from, and write to, the period of validity directly:

\vspace{-2mm}
{\footnotesize
\[
\bl
\dbupdatenonseqtwo{x}{\var{employees}} \\
\quad \calcwd{where} \: (\project{(\data{x})}{\var{position}} = \text{``PhD student''}) \\
\quad \calcwd{set} \: () \: \calcwd{valid} \: \calcwd{from} \: (\timestart{x}) \:
\calcwd{to} \: (\timeend{x} + 1)
\el
\]
}
The resulting table shows that the `end' field of Bob's and
Charles' records are updated to 2024 and 2023 respectively:\\

\vspace{-2mm}
{\footnotesize
\begin{center}
\begin{tabular}{|c|c|c||c|c|}
    \hline
    name & position & salary & start & end \\
    \hline
    Bob & PhD Student & 15000 & 2019 & 2024 \\
    \hline
    Charles & PhD Student & 15000 & 2018 & 2023 \\
    \hline
  $\cdots$ & $\cdots$ & $\cdots$ & $\cdots$ & $\cdots$ \\
    \hline
\end{tabular}\\
\end{center}
}
\vlinqreduction
\subsection{Calculus}
The $\vlinq$ calculus gives a direct semantics to valid time operations. Like
$\tlinq$, $\vlinq$ has a native notion of a period-stamped database row, with
accessors for the data and each timestamp; the typing rules, reduction rules,
and translations are straightforward adaptations of those in $\tlinq$.

Figure~\ref{fig:vlinq-syntax} shows how the syntax and typing rules for $\vlinq$
differ from those of $\linq$. Unlike in $\tlinq$, we can use the term
$\dbrow{\tma_1}{\tma_2}{\tma_3}$ to construct a valid-time row.
Ordinary insert, delete and update operations may be applied to valid time tables (the
straightforward rules are omitted).
Sequenced insertions are described by the term
$\dbinsertseq{\tma}{\tmb}$
where \textsc{TV-SeqInsert} ensures that $\tmb$ is a bag of timestamped records.
Sequenced updates are described by:

\vspace{-2mm}
{\footnotesize
\[
\dbupdatebetween{x}{\tmc}{\tma_1}{\tma_2}{\tma_3}{(\seq{\ell = \tmb})}
\]
}%
Terms $\tma_1$ and $\tma_2$ must be of type $\timety$,
referring to the period of applicability of the sequenced update.
Nonsequenced updates are described by the term:

\vspace{-2mm}
{\footnotesize
\[
    \bl
\calcwd{update} \: \calcwd{nonsequenced} \: (x \dbcomparrow \tmc) \\
\quad \calcwd{where} \: \tma \:
      \calcwd{set} \: (\seq{\ell = \tmb}) \:
    \calcwd{valid} \: \calcwd{from} \: \tmb'_1 \: \calcwd{to} \: \tmb'_2
\el
\]
}
with \textsc{TV-NonseqUpdate} stating that the
database row (including period information) is bound as $x$ in
the predicate $\tma$, update terms $\tmb_j$, and new time periods
$\tmb'_1$ and $\tmb'_2$.

Finally, the term:

\vspace{-2mm}
{\footnotesize
\[
    \dbdeletebetween{x}{\tmc}{\tma_1}{\tma_2}{\tmb}
\]
}
describes a sequenced
deletion which removes the portion of each record satisfying $\tmb$ between times $\tma_1$ and
$\tma_2$.

Since current insertions, updates, and deletions are special cases of
sequenced operations, we need not consider them explicitly; for completeness,
direct semantics can be found in Appendix~\ref{sec:vlinq-current-updates}.
Instead, we show macro translations to the sequenced constructs.
Current insertions can be implemented by desugaring to sequenced insertions, annotating each row with $[\now,\forever)$:

\vspace{-2mm}
{\footnotesize
\[
  \dbinsert{\tma}{\tmb} \leadsto
  \bl
     \letintwo{\var{rows}}{\forcomptwo{x}{\tmb}{\bag{\dbrow{x}{\now}{\forever}}}} \\
     \dbinsertseq{\tma}{\var{rows}}
  \el
\]
}%
Current updates and deletions can be implemented as sequenced updates and
deletions where the period of applicability spans from $\now$ until $\forever$:

\vspace{-2mm}
{\footnotesize
\[
  \bl
  \dbupdate{x}{\tmc}{\tma}{\recordterm{\ell_i = \tmb_i}_i} \leadsto\\
    \quad %
    {
        \bl
        \calcwd{update} \: \calcwd{sequenced} \: (x \dbcomparrow {\tmc}) \\
        \quad \calcwd{between} \: \now \: \calcwd{and} \: \forever \:
        \calcwd{where} \: \tma \:
        \calcwd{set} \: \recordterm{\ell_i = \tmb_i}_i
        \el
      }
      \spacerow \\
    \dbdelete{x}{\tma}{\tmb}\leadsto \\
    \quad \dbdeletebetween{x}{{\tma}}{\now}{\forever}{\tmb}
    \el
\]
}
\sloppypar Fig.~\ref{fig:vlinq-reduction} shows selected reduction
rules for sequenced operations.  We show the rules
for sequenced inserts and updates; the rules for other cases employ
similar ideas and are included in Appendix~\ref{app:full-definitions}.
Nonsequenced updates and deletes are similar to their analogues in
$\linq$ but allow access to, and modification of, row timestamps.
For sequenced insertions, \textsc{EV-SeqInsert} checks that the
period of validity for each row is correct (i.e., that the $\var{start}$ field
is less than the $\var{end}$ field) and appends the provided bag to the table.
Sequenced updates and deletions must account for the various ways that the
period of applicability can overlap the period of validity. There are five main
cases, corresponding to the five ways two closed-open intervals can overlap (or
fail to do so).  Appendix~\ref{app:timelines} summarizes the five cases and
describes how deletes and updates are handled in each case.

\subsection{Translation}
Figure~\ref{fig:vlinq-translation} illustrates the translation from $\vlinq$
into $\linq$. We discuss the translations for sequenced inserts and both
sequenced and nonsequenced updates; the other modifications are similar and
included in Appendix~\ref{app:full-definitions}. As before, we require
annotations on each of the database update terms.

Nonsequenced updates and deletions can be updated directly by their
corresponding $\linq$ operation; we use an auxiliary definition, $\mkwd{lift}$,
which lifts the flat representation into the nested representations expected by
the predicate and update fields. Sequenced inserts flatten the contents of the
provided bag and map directly to an $\calcwd{insert}$.

\begin{figure}
{\footnotesize
    \[
      \bl
   \vtrans{\dbinsertseqann{\tma}{\tmb}} = \\
\quad \letintwo{\var{tbl}}{\vtrans{\tma}} \\
\quad \letinone{\var{rows}} \\
\qquad \forcomptwo{x}{\vtrans{\tmb}} \\
\qqquad \bag{
            \recordplustwo
                {\etaexp{\project{x}{\var{data}}}{\seq{\ell}}}
                {\recordterm{
                    \fieldstart{\project{x}{\var{start}}},
                    \fieldend{\project{x}{\var{end}}}
                }}
        } \\
\quad \calcwd{in} \\
\quad \dbinsert{\var{tbl}}{\var{rows}}
\spacerow\\
     \scaleleftright[0.45em]
         {\llparenthesis}
         {
             \bl
             \calcwd{update}^{(\ell_i : \tya_i)_{i \in I}} \: \calcwd{sequenced}
             \: (x \dbcomparrow \tmc) \\
             \quad \calcwd{between} \: \tma_1 \: \calcwd{and} \: \tma_2 \:
                   \calcwd{where} \: \tma_3 \\
             \quad \calcwd{set} \: (\ell_j = \tmb_j)_{j \in J}
             \el
         }
         {\rrparenthesis} = \vspace{0.5em}\\
     \qquad
       {
         \blt
         \letintwo{\var{tbl}}{\vtrans{L}} \\
         \letintwo{\var{aStart}}{\vtrans{M_1}} \\
         \letintwo{\var{aEnd}}{\vtrans{M_2}} \\
         \letintwo{\var{lRows}}{
             \mkwd{startRows}(\var{tbl}, \mkwd{pred}, \var{aStart})} \\
         \letintwo{\var{rRows}}{\mkwd{endRows}(\var{tbl}, \mkwd{pred},
             \var{aEnd})} \\
         \dbupdatetwo{x}{\var{tbl}} \\
         \quad \calcwhereone{\mkwd{pred} \wedge
         (\project{x}{\field{start}} < \var{aEnd}) \wedge
         (\project{x}{\field{end}} > \var{aStart}}) \\
         \quad \calcwd{set} \: \left(
             {\bl
             (\ell_j = \restrict{x}{\{\ell_i\}_{i \in I}}{\vtrans{\tmb_j}})_{j
             \in J},\\
             \var{start} = \greatesttwo{\project{x}{\var{start}}}{\var{aStart}},
             \\
             \var{end} = \leasttwo{\project{x}{\var{end}}}{\var{aEnd}} \\
             \el
             }
         \right); \\
        \dbinsert{\var{tbl}}{\var{lRows}}; \\
        \dbinsert{\var{tbl}}{\var{rRows}} \\
         \el
       } \\
       \text{where } \\
       \quad \mkwd{pred} \defeq \restrict{x}{\{\ell_i\}_{i \in
       I}}{\vtrans{M_3}} \\
        \quad \mkwd{startRows}(\var{tbl}, \var{pred}, \var{aStart}) \defeq
      \queryzero \\
        \qquad \forcomptwo{x}{\get{\var{tbl}}} \\
        \qqquad \whereone{\mkwd{pred} \wedge (\project{x}{\field{start}} <
            \var{aStart})
          \wedge (\project{x}{\field{end}} > \var{aStart}})
          \\
            \qqquad
            \bag{
                \etaexp
                    {x}
                    {\set{\ell_i}_{i \in I}}
                \recordplus
                \recordterm{
                    \fieldstart{\project{x}{\var{start}}},
                    \fieldend{\var{aStart}}
                    }
            } \\
            \quad \mkwd{endRows}(\var{tbl}, \var{pred}, \var{aEnd})
            \defeq \text{(symmetric)}
            \spacerow \\
       \scaleleftright[0.45em]
         {\llparenthesis}
         {
            \bl
            \calcwd{update}^{{(\ell_i : \tya_i)}_{i \in I}}\:\calcwd{nonsequenced} \: (x \dbcomparrow \tmc) \:
                  \calcwd{where} \: \tma \\
             \quad \calcwd{set} \: (\ell_j = \tmb_j)_{j \in J} \: \calcwd{valid} \:
            \calcwd{from} \: \tmb'_1 \: \calcwd{to} \: \tmb'_2
            \el
         }
         { \rrparenthesis }
    = \vspace{0.5em}
     \\
    \quad
    {
        \blt
        \calcwd{update}^{{(\ell_i:\tya_i)_{i \in I}}} \: (x \dbcomparrow
        \vtrans{\tmc}) \\
        \quad \calcwd{where}\:(\lift{x}{\vtrans{\tma}}) \\
        \quad \calcwd{set}\:
        {
            \left(
                \bl
                    (\recordterm{\ell_j = \lift{x}{\vtrans{\tmb_j}}}_{j \in J},
                    \\
                        \fieldstart{\lift{x}{\vtrans{\tmb'_1}}}, \\
                        \fieldend{\lift{x}{\vtrans{\tmb'_2}}})
                \el
            \right)
        }
        \el
    }
        \\
        \text{where } \lift{x}{f} \defeq \\
        \quad
        (\fun{x}{f}) \app
            \recordterm{
                \fielddata{\etaexp{x}{\{\ell_i\}_{i \in I}}},
                \fieldstart{\project{x}{\var{start}}},
                \fieldend{\project{x}{\var{end}}}}
      \el
    \]
}
\caption{Translation from $\vlinq$ into $\linq$ (selected cases)}
\label{fig:vlinq-translation}
\end{figure}

The remaining sequenced operations are the most complex to translate.
Since a sequenced modification may partition a row, the
\mkwd{startRows} and \mkwd{endRows} functions calculate the records
which must be inserted before and after the period of applicability.
To translate a sequenced update, we calculate the rows to insert,
perform an $\calcwd{update}$ to set the new values and
set the new period of applicability to the overlap between the PA and PV using
the $\calcwd{greatest}$ and $\calcwd{least}$ functions, and finally materialise
the insertions.
Sequenced deletions (shown in Appendix~\ref{app:full-definitions}) are similar
but delete the rows that overlap the PA instead of updating them.

\subsection{Metatheory}

Evaluation preserves typing and well-formedness.

\begin{proposition}[Preservation ($\vlinq$)]
    If $\tseq{\cdot}{\tma}{\tya}{\effs}$ and
    $\vtmevaltwo{\tma}{\retpair{\vala}{\db'}}$
    for some $\wf{\db}$, then
    $\tseq{\cdot}{\vala}{\tya}{\pure}$ and $\wf{\db'}$.
\end{proposition}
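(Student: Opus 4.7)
The plan is to proceed by induction on the derivation of $\vtmevaltwo{\tma}{\retpair{\vala}{\db'}}$, inverting the typing derivation at each step and discharging two obligations in each case: (i) typing of the result value $\vala$ at $\tya$, and (ii) preservation of well-formedness of the resulting database $\db'$. For the constructs inherited from $\linq$---the pure $\lambda$-calculus fragment, $\calcwd{query}$, $\calcwd{get}$, and the current (non-sequenced) modifications that desugar to sequenced ones---the argument is a direct adaptation of the soundness proof for $\linq$ (Proposition~\ref{prop:linq:soundness}). I would first record an easy auxiliary lemma stating that the pure and read-only evaluation judgements leave the database unchanged and hence trivially preserve $\wf{\cdot}$; this discharges the obligation for query contexts and for every pure sub-derivation that appears inside the sequenced update and delete rules.

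The new valid-time introduction and accessor cases are routine. For \textsc{EV-Row}, the value $\dbrow{\vala_1}{\vala_2}{\vala_3}$ is typed at $\validtimety{\tya}$ by the IH on its three subderivations together with \textsc{T-Row}, and the database is threaded through unchanged. For \textsc{EV-SeqInsert}, the rule's side condition $\var{start} < \var{end}$ on every inserted row combines with $\wf{\db}$ to yield $\wf{\db_2}$, and the return value is trivially typed at the empty record. For \textsc{EV-NonseqUpdate}, the explicit premise $\valb_{\var{start}} < \valb_{\var{end}}$ makes each updated row well-formed, unchanged rows inherit well-formedness from $\wf{\db}$, and each produced record is well-typed by combining the pure IH on the $\tmb_i$ with a small record-update lemma on $\recordwithtwo{\cdot}{\cdot}$.

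The main obstacle is \textsc{EV-SeqUpdate}, together with the symmetric sequenced-delete rule in the appendix, where preserving $\wf{\cdot}$ requires a case analysis over the five patterns by which the period of applicability $[\vala_{\var{start}}, \vala_{\var{end}})$ can overlap an existing row's period of validity $[\var{start}, \var{end})$. In each of the first four cases I must check that every produced row has strictly increasing endpoints, using the side condition $\vala_{\var{start}} < \vala_{\var{end}}$, the assumption $\var{start} < \var{end}$ from $\wf{\db}$, and the case-specific inequalities that define the overlap pattern. For example, in Case 3 the three produced rows partition $[\var{start}, \var{end})$ at $\vala_{\var{start}}$ and $\vala_{\var{end}}$, so well-formedness reduces to the chain $\var{start} < \vala_{\var{start}} < \vala_{\var{end}} < \var{end}$, which is directly available from the case hypotheses. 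Each subcase is a mechanical arithmetic check, but enumerating all of them carefully on both the update and delete sides, and composing the typing obligations for each produced $W$ via the pure IH on the $\tmb_i$, constitutes the bulk of the work.
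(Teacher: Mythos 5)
You should first be aware that the paper does not actually prove this proposition: the appendix proves only the translation-correctness theorems and their supporting lemmas, and the preservation result is stated without proof. So your plan is supplying a missing argument rather than reproducing one, and its overall shape --- induction on the evaluation derivation, an auxiliary lemma that pure and read-only evaluation leave the database untouched, and a five-way overlap analysis for the sequenced operations --- is the right one. Your treatment of \textsc{EV-Row}, \textsc{EV-SeqInsert}, and \textsc{EV-NonseqUpdate} is sound as described.

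There is, however, one step that fails exactly where you describe it as ``a mechanical arithmetic check.'' In Case~2 of \textsc{EV-SeqUpdate} the guard is $\vala_{\var{start}} \le \var{start}$ and $\vala_{\var{end}} < \var{end}$, and the produced row $\dbrow{W}{\var{start}}{\vala_{\var{end}}}$ is well formed only if $\var{start} < \vala_{\var{end}}$. That inequality is \emph{not} implied by the guard together with $\vala_{\var{start}} < \vala_{\var{end}}$ and $\var{start} < \var{end}$: if the period of applicability lies entirely before the row's period of validity (i.e.\ $\vala_{\var{end}} \le \var{start}$), the Case~2 guard is still satisfied, yet the truncated row is degenerate. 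Case~4 has the symmetric defect (one needs $\vala_{\var{start}} < \var{end}$, which the guard does not give when the PA lies entirely after the PV), and the same issue recurs in \textsc{EV-SeqDelete}. The paper's own proof of the translation theorem silently routes these disjoint situations into Case~5, i.e.\ it reads the first four guards as additionally requiring that the two intervals overlap. Your proof must make that reading explicit, or else handle the disjoint subcases inside Cases~2 and~4 separately and show they degenerate to the identity behaviour; as written, the well-formedness obligation genuinely does not discharge. A smaller omission: to type the value returned by $\calcwd{get}$ and to apply the induction hypothesis at the modification rules, you also need a database typing invariant (each table's contents conform to the schema), which should be stated and threaded through the induction alongside $\wf{\cdot}$.
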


Unlike $\linq$ and $\tlinq$, evaluation in $\vlinq$ is \emph{partial}
in order to reflect the need for dynamic checks that start times
precede end times.
In practice, our implementation evaluates temporal updates as single transactions and raises an
exception (aborting the transaction) when a well-formedness check fails, but our formalisation
assumes  updates preserve well-formedness in order to avoid clutter.

Our translation from $\vlinq$ into $\tlinq$ satisfies the following correctness property:

\begin{restatable}{thm}{vtranscorrect}
    If $\tseq{\cdot}{\tma}{\tya}{\effs}$ and
    $\vtmevaltwo{\tma}{\retpair{\vala}{\db'}}$ for some
    $\wf{\db}$,
    then
        $\tmevaltwo
            [\flatten{\db}, \timevar]
            {\vtrans{\tma}}
            {\retpair
                {\vtrans{\vala}}
                {\flatten{\db'}}}$
\end{restatable}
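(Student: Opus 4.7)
The plan is to proceed by induction on the derivation of $\vtmevaltwo{\tma}{\retpair{\vala}{\db'}}$. Most standard $\lambda$-calculus, record, and bag cases reduce directly to the inductive hypothesis, since the translation is compositional and homomorphic on these constructs. Before starting the case analysis, I would establish two routine auxiliary lemmas: a substitution lemma $\vtrans{\tma \{\vala/x\}} = \vtrans{\tma}\{\vtrans{\vala}/x\}$, and a pure-evaluation lemma stating that whenever $\puretmevaltwo{\tma}{\vala}$ holds in $\vlinq$, we have $\puretmevaltwo{\vtrans{\tma}}{\vtrans{\vala}}$ in $\linq$. I would also need a small compatibility lemma showing that $\flatten{\db}$ and the nested representation obtained by $\eta$-expanding on the declared labels are mutually inverse on well-formed rows. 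With these in place, row construction, the accessors $\calcwd{data}/\calcwd{start}/\calcwd{end}$, and $\calcwd{get}$ are immediate, since they translate to record constructors and projections over the flat representation.

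For the modification cases, the nonsequenced update and delete translate directly to $\linq$ $\calcwd{update}$ and $\calcwd{delete}$ via the $\mkwd{lift}$ combinator, so their correctness reduces to the pure-evaluation lemma together with the inverse property of $\mkwd{lift}$ against flattening on each row. Sequenced insertions simply flatten each row of the input bag and hand the result to $\calcwd{insert}$, with the well-formedness condition $\vala_{\var{start}} < \vala_{\var{end}}$ carried through unchanged. The desugared current insert/update/delete operations then inherit their correctness from the sequenced cases.

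The main obstacle is the sequenced update (and, symmetrically, the sequenced delete), because the $\vlinq$ semantics partitions each affected row into one of five cases depending on how the period of applicability $[\vala_{\var{start}},\vala_{\var{end}})$ overlaps the period of validity $[\var{start},\var{end})$, whereas the $\linq$ translation achieves the same effect by a three-step sequence: compute $\mkwd{startRows}$, compute $\mkwd{endRows}$, update in place, and materialise the two insertions. I would show equality of the final table contents by a case analysis on the same five cases applied row-by-row: Case 1 contributes only the in-place update, trimmed via $\greatestzero/\leastzero$ to $[\var{start},\var{end})$, and neither side bag fires; Cases 2 and 4 additionally contribute an $\mkwd{endRows}$ or $\mkwd{startRows}$ entry respectively; Case 3 contributes both side bags together with the trimmed update; and Case 5 fails the overlap predicate $\project{x}{\var{start}} < \var{aEnd} \wedge \project{x}{\var{end}} > \var{aStart}$ and so passes through unchanged. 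A subtle point is to verify that the $\mkwd{startRows}$ and $\mkwd{endRows}$ subqueries are evaluated against the \emph{pre-update} snapshot of the table, since they are bound by $\calcwd{let}$ and sit inside $\calcwd{query}$ blocks, so the subsequent $\calcwd{update}$ cannot interfere with them; well-formedness of $\db$ is needed here to rule out degenerate empty intervals that would otherwise produce spurious rows in the inserted bags.
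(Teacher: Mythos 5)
Your proposal matches the paper's own proof essentially step for step: induction on the evaluation derivation, supported by a pure-evaluation correspondence lemma (the paper's Lemma~\ref{lem:vlinq-trans-pure}), a restriction/lifting compatibility lemma, and the observation that base-type values translate to themselves, with the sequenced update handled by the same row-by-row five-case overlap analysis (and sequenced delete by symmetry). Your remark about the $\mkwd{startRows}$/$\mkwd{endRows}$ subqueries being evaluated against the pre-update snapshot is the same sequencing observation the paper makes, so there is nothing further to add.
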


\section{Sequenced Joins}\label{sec:temporal-joins}

Queries that join multiple tables are straightforward to encode using
language integrated query. Keeping with our employee database, say we
wish to separate out the salary into a separate table. The
non-temporal employee database might look as follows:

{\footnotesize
    \begin{minipage}{0.6\columnwidth}
        \header{employees}
\begin{tabular}{|c|c|c|}
    \hline
    name & position & band \\
    \hline
    Alice & Senior Lecturer & A08 \\
    \hline
    Bob & PhD Student & B01 \\
    \hline
    Charles & PhD Student & B01 \\
    \hline
    Dolores & Professor & A10 \\
    \hline
\end{tabular}
\end{minipage}
\hfill
\begin{minipage}{0.3\columnwidth}
\header{salaries}
\begin{tabular}{|c|c|}
    \hline
    band & salary \\
    \hline
    A08 & 40000 \\
    \hline
    A09 & 50000 \\
    \hline
    A10 & 70000 \\
    \hline
    B01 & 15000 \\
    \hline
\end{tabular}
\end{minipage}
}
We can get the salary for each employee
as follows:

\noindent
\begin{minipage}{0.65\columnwidth}
    {\footnotesize
\[
\bl
    \queryzero \\
    \quad \forcomptwo{e}{\get{\var{employees}}} \\
    \qquad \forcomptwo{s}{\get{\var{salaries}}} \\
    \qqquad \whereone{\project{e}{\var{band}} = \project{s}{\var{band}}} \\
    \qqquad \bag{
      \recordterm{
        \var{name} = \project{e}{\var{name}},
        \var{salary} = \project{s}{\var{salary}}
      }
    }
\el
\]
}
\end{minipage}
\hfill
\begin{minipage}{0.3\columnwidth}
    {\footnotesize
\begin{tabular}{|c|c|}
    \hline
    name & salary \\
    \hline
    Alice & 40000 \\
    \hline
    Bob & 15000 \\
    \hline
    Charles & 15000 \\
    \hline
    Dolores & 70000 \\
    \hline
\end{tabular}
    }
\end{minipage}

\noindent
Joining a temporal table with a non-temporal table is also easily expressible.
Consider a version of our previous temporal employees table from just after when
Dolores joined:

\begin{center}
{\footnotesize
\begin{tabular}{|c|c|c||c|c|}
    \hline
    name & position & band & start & end \\
    \hline
    Alice & Lecturer & A08 & 2010 & 2018 \\
    \hline
    Alice & Senior Lecturer & A09 & 2018 & $\infty$ \\
    \hline
    Bob & PhD Student & B01 & 2019 & 2023 \\
    \hline
    Charles & PhD Student & B01 & 2018 & 2022 \\
    \hline
    Dolores & Professor & A10 & 2022 & $\infty$ \\
    \hline
\end{tabular}\\
}
\end{center}
We can join this table with the non-temporal salaries table as follows;
for clarity, we denote valid-time $\calcwd{get}$ as
$\calcwd{get}_{\mathsf{V}}$:
{\footnotesize
\vspace{-0.25em}
\[
\bl
    \queryzero \\
    \quad \forcomptwo{e}{\getv{\var{employees}}} \\
    \qquad \forcomptwo{s}{\get{\var{salaries}}} \\
    \qqquad \whereone{\project{(\data{e})}{\var{band}} = \project{s}{\var{band}}} \\
    \qqquad \bag{
        \dbrow
        {
            \recordterm{\var{name} = \project{e}{\var{name}},
            \var{salary} = \project{s}{\var{salary}}}
        }
        {\timestart{e}}
        {\timeend{e}}
    }
\el
\]
\vspace{-0.25em}
}%
giving us the corresponding table in Section~\ref{sec:vlinq}.

Things get more interesting when \emph{both} tables are temporal. Salaries are
not static over time; bands go up with inflation, for example. Suppose we now
have two temporal tables. Consider the above table along with a temporal
salaries table showing a pay increase in 2015:

\begin{center}
{\footnotesize
\begin{tabular}{|c|c||c|c|}
    \hline
    band & salary & start & end \\
    \hline
    A08 & 38000 & 2000 & 2015 \\
    \hline
   A09 & 48000 & 2000 & 2015 \\
   \hline
    A08 & 40000 & 2015 & $\infty$ \\
    \hline
   A09 & 50000 & 2015 & $\infty$ \\
   \hline
  $\cdots$ &$\cdots$ & $\cdots$ & $\cdots$\\
    \hline
\end{tabular}
}
\end{center}
What does it mean to join two \emph{temporal} tables?
In essence, we want to record \emph{all} configurations of a particular joined
record, creating new records with shorter periods of validity whenever data from
either underlying table changes.
Concretely, joining the above two temporal tables would give:

\begin{center}
{\footnotesize
\begin{tabular}{|c|c||c|c|c|}
    \hline
    name & salary & start & end \\
    \hline
    Alice & 38000 & 2010 & 2015 \\
    \hline
    Alice & 40000 & 2015 & 2018 \\
    \hline
    Alice & 50000 & 2018 & $\infty$ \\
  \hline
  $\cdots$ &$\cdots$ & $\cdots$ & $\cdots$\\
    \hline
\end{tabular}\\
}
\end{center}%
Now there are records for Alice for\emph{three} different periods:

\begin{compactitem}
    \item The first when Alice was on salary band A08,
        conferring a salary of £38000.
    \item The second when band A08 increased to
        £40000.
    \item The third when Alice was promoted to band A09.
\end{compactitem}

Such joins are called \emph{sequenced} because they (conceptually) evaluate the
join on the whole sequence of states encoded by each table. Manually writing the
sequenced joins in SQL is error-prone. We instead introduce a construct,
$\calcwd{join}$, which allows us to write the following:

\vspace{-2mm}
{\footnotesize
\[
    \bl
    \joinzero \\
    \quad \forcomptwo{e}{\getv{\var{employees}}} \\
    \qquad \forcomptwo{s}{\getv{\var{salaries}}} \\
    \qqquad \whereone{\project{(\data{e})}{\var{band}} = \project{(\data{s})}{\var{band}}} \\
    \qqquad \bag{
        \recordterm{
                \var{name} = \project{(\data{e})}{\var{name}},
                \var{salary} = \project{(\data{s})}{\var{salary}}
        }
    }
    \el
\]
}%
Note that we \emph{do not} need to calculate the period of validity for each resulting
row; this is computed automatically.

\begin{figure}
    {\footnotesize
    \headerarg
        {Typing rules}
        {$\framebox{\hasfqtype{\tya}}~\framebox{$\vphantom{\hasfqtype{\tya}}\tseq{\tyenv}{\tma}{\tya}{\effs}$}$}
\begin{mathpar}
  \inferrule
  { }
  { \hasfqtype{\basety} }

  \inferrule
  { }
  { \hasfqtype{\recordterm{\ell_i : \basety_i}_i} }

  \inferrule
  { \tseq{\tyenv}{\tma}{\bagty{\tya}}{\effs} \\
    \hasfqtype{\tya} \\
    \effs \subseteq \effset{\effread}
  }
  { \tseq
        {\tyenv}
        {\join{\tma}}
        {\bagty{\validtimety{\tya}}}
        {\effs}
  }
\end{mathpar}

    \header{Normal forms}
    \[
      \begin{array}{lrcl}
        \text{Queries} & \querytm & ::= & \querycomp_1 \bagunion \cdots \bagunion \querycomp_n \\
        \text{Comprehensions} & \querycomp & ::= &
        \normfor{\seq{\querygen}}{\querybasetm}{\bag{\querynormtm}} \\
        \text{Generators} & \querygen & ::= & x \comparrow \get{\tablevar}
            \midspace x \comparrow \getv{\tablevar} \\
        \text{Normalised terms} & \querynormtm & ::= & \querybasetm \midspace
        \queryrecordtm \\ %
        \text{Base terms} & \querybasetm & ::= & c \midspace \project{x}{\ell}
        \midspace \langop{\oseq{\querybasetm}} \\
        \text{Record terms} & \queryrecordtm & ::= & \recordterm{\ell_i =
        \querybasetm_i}_i \\
      \end{array}
    \]

    \header{Translation on normal forms}
\vspace{0.5em}
\begin{mathpar}
  \normtrans{\join{\querytm}} = \query{\normtrans{\querytm}}

  \normtrans{\querycomp_1 \bagunion \cdots \bagunion \querycomp_n} =
    \normtrans{\querycomp_1} \bagunion \cdots \bagunion \normtrans{\querycomp_n}

  \normtrans{\querybasetm} = \querybasetm

  \normtrans{\queryrecordtm} = \queryrecordtm
\end{mathpar}

\[
  \begin{array}{l}
    \scaleleftright[0.35em]
    {\normtransl}
        {
        \bl
        \calcwd{for} \:
        {
            \left(
                \bl
                    x_1 \comparrow \getv{t_1}, \, \ldots, \, x_m \comparrow \getv{t_m}, \\
                    y_{1} \comparrow \getph{t'_{1}}, \,\ldots,\, y_n \comparrow
                    \getph{t'_n}
                \el
            \right)
        }
        \\
        \quad \whereone{\querybasetm} \\
        \quad \bag{\queryrecordtm}
        \el
        }
    {\normtransr}
    = \vspace{1em} \\
    \qqquad
    {
      \blt
        \calcwd{for} \:
        {
            \left(
                \bl
                    x_1 \comparrow \getv{t_1}, \, \ldots, \, x_m \comparrow \getv{t_m}, \\
                    y_{1} \comparrow \getph{t'_{1}}, \,\ldots,\, y_n \comparrow
                    \getph{t'_n}
                \el
            \right)
        } \\
          \quad \whereone{\querybasetm \wedge \var{joinStart} < \var{joinEnd}} \\
          \quad \bag{\dbrow{\queryrecordtm}{\var{joinStart}}{\var{joinEnd}}}
      \el
    }
    \spacerow \\
     \text{where} \:
          {
            \blt
              \var{joinStart} \defeq
              \calcwd{greatest}(\project{x_1}{\field{start}}, \ldots,
              \project{x_m}{\field{start}}) \\
              \var{joinEnd} \defeq \calcwd{least}(\project{x_1}{\field{end}},
              \ldots, \project{x_m}{\field{end}})
            \el
          }
  \end{array}
\]
}
    \caption{Sequenced joins}
    \label{fig:temporal-joins}
    \vspace*{-0.2cm}
\end{figure}

Figure~\ref{fig:temporal-joins} shows how sequenced joins can be implemented;
we show the constructs for valid time, but the same technique can be used
for transaction time. The
typing rule requires that the result of a $\calcwd{join}$ query is \emph{flat}
(nested sequenced queries are conceptually nontrivial).
As mentioned earlier, queries can be rewritten to normal forms for conversion to SQL, as shown in
Figure~\ref{fig:temporal-joins}.
The structure of these
normal forms allows sequenced joins to be implemented through a simple rewrite:
the $\calcwd{greatest}$ and $\calcwd{least}$ functions are used to calculate
the intersections of the periods of validity for each combination of records
from each generator, with the modified predicate ensuring that the periods of
overlap make sense. The calculated overlapping periods of validity
are then returned in the resulting row.

\section{Implementation and Case Study}\label{sec:implementation}

The Links programming language~\cite{CooperLWY06} is a statically-typed
functional web programming language which allows client, server, and
database code to be written in a uniform language.
We have extended Links with support for the constructs described in
Sections~\ref{sec:tlinq} and~\ref{sec:vlinq}, as well as support for temporal
joins as described in Section~\ref{sec:temporal-joins}.  In this section, we
describe a case study based on curating COVID-19 data.

Our translations from $\vlinq$ into $\linq$ are
trivially realisable in SQL. Queries can be compiled using known techniques
(e.g.,~\cite{Cooper09}).
The \mkwd{startRows} and \mkwd{endRows} functions can be compiled using an SQL
\verb+WITH+ statement, and there is a direct correspondence between $\linq$
modification operations and their SQL equivalents.
Each translated temporal modification is executed as an SQL transaction, with
primary key and referential integrity constraint checking deferred until the end
of the transaction.

\paragraph*{Case study.}

We have used the temporal features of Links in two
prototypes based on curated scientific databases: curation
of publicly available COVID-19 data, and storage and
curation of XML documents~\cite{GalpinSH22} using the Dynamic Dewey
labelling algorithm~\cite{XuLWB09}.

We concentrate on the first application\footnote{Links code is
available at \url{https://github.com/vcgalpin/links-covid-curation} \cite{Galpin22}}; a previous prototype which used
a preliminary version of the language design has previously
been presented as a short demo paper~\cite{GalpinC21}.
In 2020,
the Scottish Government began releasing various data about the
COVID-19 pandemic \cite{NRS:21}, which included weekly data
of fatalities in Scotland in various categories (such
as `Sex').  Each weekly release was a CSV
file, with a row for each subcategory (e.g., `Sex' has the subcategories
`Male' and `Female') and a column for each week for
which data was available. Each release included an additional week column with
the latest data (see Figure~\ref{data}). Importantly, each release
could include revisions to data for previous weeks.

\begin{figure*}
\includegraphics[trim=10 580 30 75,clip,width=18.0cm]{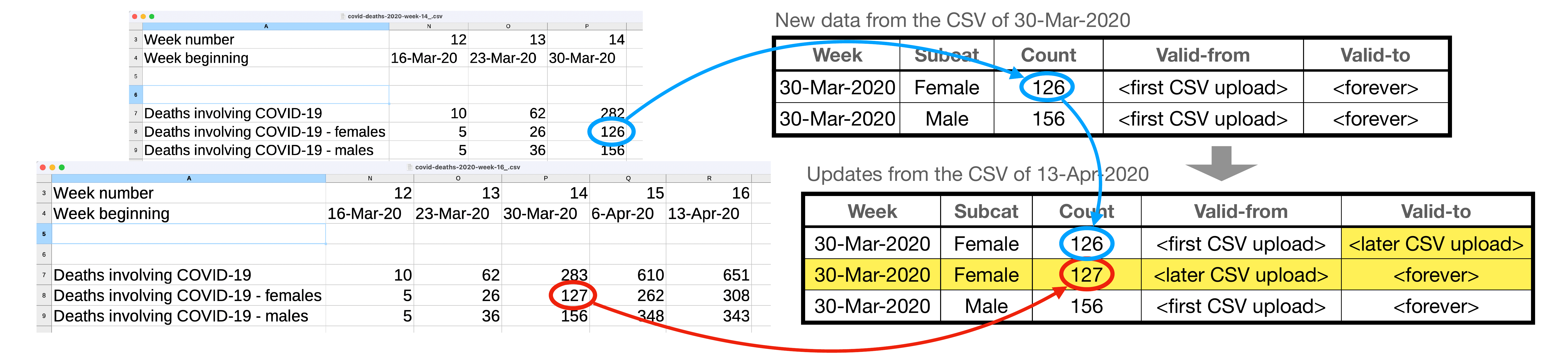}
\caption{Example of data uploads, sequenced insertion and sequenced update}
\label{data}
\end{figure*}

Information about the changes to the data over time is often desired
to understand its provenance and assess its
trustworthiness~\cite{buneman18sigmodrecord}.  From a provenance
point of view, this data is interesting because a column for an
earlier week may contain updated data. We developed a web application
for the querying of the data ("How do the Male and Female
subcategories compare in terms of the change in fatalities from
last week to this week?") as well as querying the changes in the
data ("How do the Male and Female subcategories compare in terms
of number of updates to existing values?").

Considering the non-temporal data, an entry in a database table
would be a row consisting of the key fields \lstinline{subcat} and
\lstinline{weekdate} and a value field giving the corresponding
\lstinline{count}. In the case of the temporal data, the key fields
are insufficient to uniquely identify the value of the \lstinline{count}
because it may have different values over time. Thus the time validity
fields are necessary to provide a key for the value.

The prototype uses a valid time table for fatality data to capture
the notion that a count value, either brand new or an update,
becomes valid as soon as the CSV is uploaded into the
interface\footnote{Other possibilities for
the start time of validity are the date of the release of the CSV
file or the start of the new week.}
(this can be a different time from when the new value is
accepted and written to the database).
In Links it is possible to specify the names of the period stamping fields,
which have the built-in type \lstinline{DateTime}.
This table is defined using the following Links code;
we have omitted some details in the code snippets for brevity.
\begin{lstlisting}
var covid_data =
  table "covid_data"
    with (subcat: Int, weekdate: String, count: Int)
    using valid_time(valid_from, valid_to)
    from database "covid_curation";
\end{lstlisting}
The prototype's upload workflow is as follows: the user uploads a
new CSV file, and the count values for the new week are added
to the database. For counts that pertain to earlier
weeks and that now have different values, the user is shown these counts
and can accept them, reject them or
move them to a pending list for a later decision.
In terms of implementation, brand new count values are
added to the table with a \emph{sequenced insert}, using the upload
time as the start time. The Links code for this and other examples can
be found in Appendix~\ref{appendix:code}.
The process is more complex for updated count values, because the interface
shows the user previous value, to support decision
making. This requires a conditional join over the current state
of the \lstinline{covid_data} table and the count values from the CSV
file.
If a modification is accepted, it is
added using a \emph{sequenced update}.
Figure~\ref{data} illustrates how the table changes as a result of
a single update.

The prototype also provides functionality to query data, both as current
data, and as data with information about changes. The current data
is obtained using a \emph{current query}. The result is a list of
weeks and counts grouped by subcategory. This is repeated for each
category.
\begin{minipage}{\columnwidth}
\begin{lstlisting}
fun getCurrentData (category) {
  query nested {
    for (x <-- subcategory)
      where (x.cat == category)
      [(subcat_name = x.subcat_name, cat = x.cat,
        results =
          for (y <- vtCurrent(covid_data))
            for (z <-- week)
              where (y.subcat == x.subcat &&
                     y.weekdate == z.weekdate &&
                     z.all_zero == false)
              [(count = y.count, weekdate = y.weekdate)])]}
}
\end{lstlisting}
\end{minipage}
Instead of an explicit \calcwd{get} construct, Links uses
the `double arrow' comprehension \lstinline+<--+
to represent a nontemporal database query, with \lstinline+<-t-+ and
\lstinline+<-v-+ supporting transaction time and valid time queries
respectively. The `single arrow' comprehension \lstinline+<-+ denotes a
list comprehension. Finally, \lstinline+vtCurrent+ is a standard library
function which performs a valid time query to obtain the values valid at the
current time.

For update provenance queries of individual counts,
a self join is computed over the \lstinline{subcategory} and
\lstinline{week} fields of the valid time table to provide a nested
result table where each count is associated with a list
of count values and their associated start and end time
information. This is a nonsequenced query because the time period
information is explicitly added to the result table.
The user can specify the subcategory and week they are
interested in, and obtain details of modifications.
The interface also supports update provenance by week and by category.
This is illustrated in Figure~\ref{screenshot}.

\begin{figure}
\includegraphics[trim=150 0 250 0,clip,width=7.5cm]{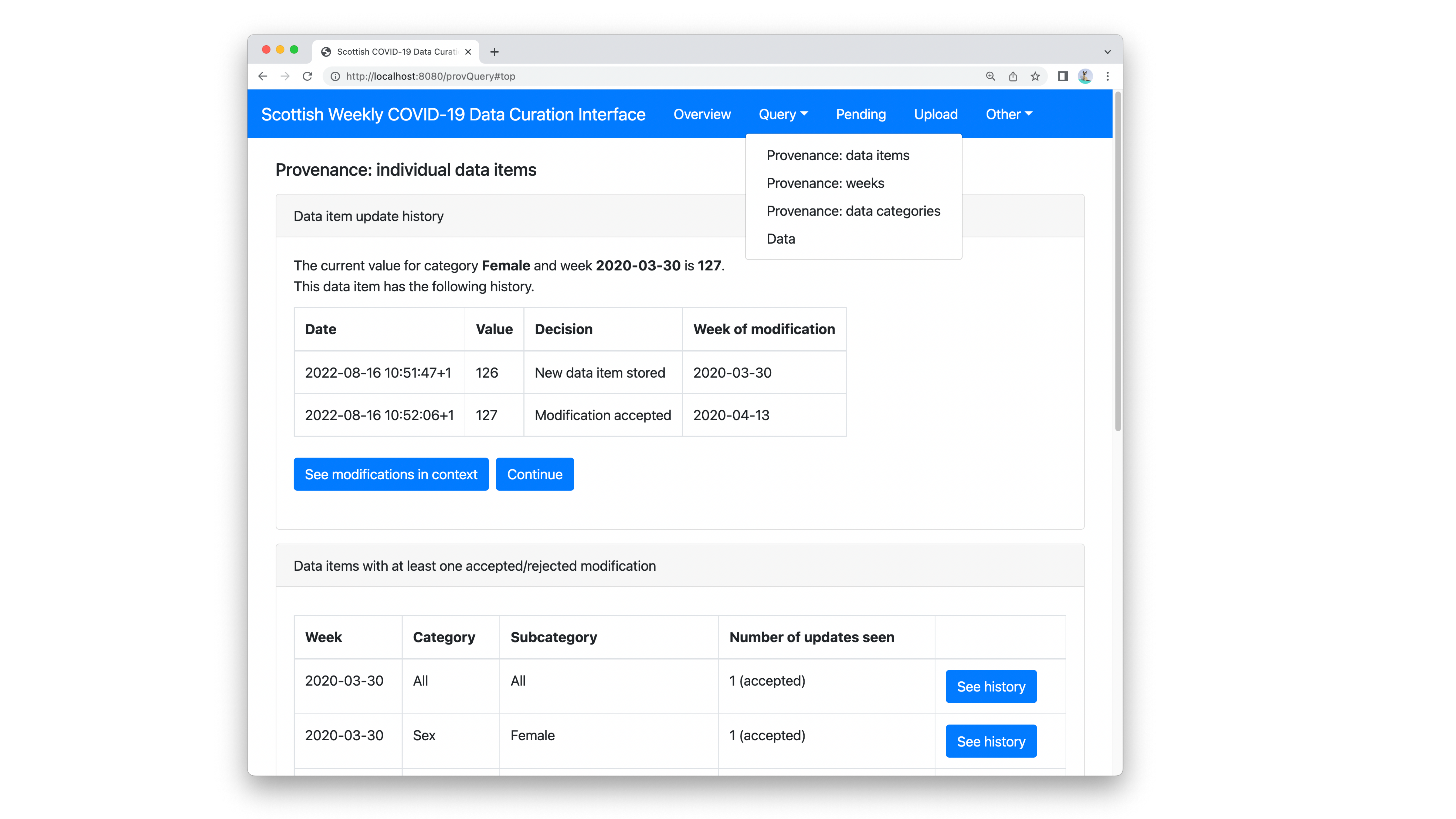}
\caption{Interface screenshot: history of a count}
\label{screenshot}
\vspace*{-0.5cm}
\end{figure}

\section{Discussion}\label{sec:discussion}

\paragraph*{Efficiency.}
Our main focus has been on \emph{portability}: by distilling a language design
and formalising and implementing a translation from temporal calculi to
non-temporal calculi, we allow temporal functionality to be used on a mainstream
DBMS. 
The cost of portability is that our translations will inevitably not perform as
well as a native implementation. Although we do not make any specific claims
about efficiency, we have no reason to believe that the performance is any
different to hand-translated SQL.

In particular, as discussed in~\secref{sec:implementation}, all translated
queries can be run directly on the database and do not require in-memory
processing.
Previous work on language-integrated query (e.g., ~\cite{CooperLWY06,
cheney14sigmod}) shows how the ``nested loop'' style of query is translated into
efficient SQL, and our translation of queries happens prior to normalisation.
Further optimisation is subsequently performed by the DBMS, and anecdotally we
have not observed any performance issues in any applications we have written
using the temporal extensions to Links.

\paragraph*{Supporting existing native implementations.}
Some (mainly proprietary) DBMSs, for example Teradata~\cite{Al-KatebGCBCP13}, have native
support for some temporal features inspired by TSQL2 or SQL:2011.  Although we
are yet to explore this, an advantage of the LINQ approach is that temporal SQL
syntax could be generated for backends which support temporal operations
directly, while maintaining functionality in mainstream backends without native
temporal support.

The translation between our temporal modification constructs and SQL:2011 (and
similarly, TSQL2) would be fairly direct. Consider a sequenced update in
$\vlinq$:

\vspace{-1em}
{\footnotesize
\[
    \bl
    \calcwd{update} \: \calcwd{sequenced} \: (x \dbcomparrow \var{employees}) \\
    \quad \calcwd{between} \: 2022{-}01{-}1 \: \calcwd{and} \: 2022{-}10{-}29  \:
          \calcwd{where} \: (\project{x}{\var{salary}} < 30000) \\
    \quad \calcwd{set} \: (\var{salary} = \project{x}{\var{salary}} + 1000)
    \el
\]
}

This could be implemented in SQL:2011 as follows:

\begin{verbatim}
UPDATE employees
  FOR PORTION OF EPeriod
    FROM DATE '2022-01-01' TO DATE '2022-10-29'
  WHERE salary < 30000
  SET salary = salary + 1000
\end{verbatim}

Moving between the two DBMSs would not require any changes to application source
code. However, not all operations can be as straightforwardly translated: in
particular, SQL:2011 does not natively support sequenced joins.

\section{Related and Future Work}\label{sec:related}

Most of the focus of effort on
language-integrated query has been (perhaps unsurprisingly) on queries
rather than updates, beginning with the foundational work on nested relational
calculus by Buneman et al.~\cite{BNTW95} and on rewriting queries for
translation to SQL by Wong~\cite{wong96jcss}.  Lindley and
Cheney~\cite{lindley12tldi} presented a calculus including both query and update
capabilities and our type and effect system for tracking database read and write
access is loosely based on theirs.  More recently a number of contributions
extending the formal foundations of language-integrated query have appeared,
including to handle higher-order functions~\cite{Cooper09}, nested query
results~\cite{cheney14sigmod}, sorting/ordering~\cite{kiselyov17aplas}, grouping
and aggregation~\cite{okura20flops,okura20gpce}, and
deduplication~\cite{ricciotti21esop}.  Our core calculus $\linq$ only
incorporates the first two of these,  and developing a core calculus that
handles more features, as well as translating temporal queries involving them,
is an obvious future direction.
To the best of our knowledge no previous work on language-integrated query has
considered temporal data specifically.

The ubiquity and importance of time in applications of databases was
appreciated from an early stage~\cite{clifford83tods} and led to a
significant community effort to standardize temporal extensions to SQL
based on the TSQL2 language design in the 1990s and early
2000s~\cite{tsql2}.  This effort ultimately resulted in
standardisation of a relatively limited subset of the original
proposal in SQL:2011~\cite{kulkarni12sigmodrecord}.  Since then
temporal database research has progressed steadily, including recent
contributions showing how to implement temporal data management as a
layer on top of a standard RDBMS~\cite{dignos16tods}, and establishing
connections between temporal querying and data
provenance and annotation models~\cite{dignos19vldb}.

Snodgrass~\cite{Snodgrass99:book} describes how to implement
TSQL2-style updates and queries by translation to SQL, but we are not
aware of previous detailed formal proofs of correctness of
translations for transaction time and valid time updates.  Although
timestamping rows with time intervals is among the most popular ways
to represent temporal databases as flat relational tables, it is not
the only possibility.  Jensen et al.~\cite{jensen94is} proposed a
\emph{bitemporal conceptual data model} that captures the abstract
meaning of a temporal table and used it to compare different
representation strategies.

There can be multiple representations of the same abstract temporal data, leading to consideration of the problem of coalescing or normalizing the intervals to save space and avoid ambiguity.  Nonsequenced updates can be used to perform modifications that have different effects on representations of the same conceptual table.  We have not considered coalescing or other common issues such as how to handle operations such as deduplication, grouping and aggregation (including emptiness testing), or integrity constraints in a  temporal setting.  Some of these issues appear orthogonal to the high-level language design and could be incorporated ``under the hood'' into the implementation or even performed directly on the database.

One important future application is to retrofit temporal aspects to
expert-curated databases, an example being the Guide to Pharmacology Database
(GtoPdb) that summarises pharmacological targets and
interactions~\cite{ArmAFDx:20}.  Links has been used to implement a workalike
version of GtoPdb~\cite{FowlFHSC20} and we hope to build on this to provide a
fully versioned implementation of GtoPdb.  An important requirement here is to
minimize changes to the existing system.
Finally we mention two immediate next steps. 
First, we plan to investigate \emph{bitemporal}
databases~\cite{SnodgrassA85:taxonomy} allow transaction and valid
time to be used together, allowing us to write queries such as ``when
was it recorded that Bob's contract length was
extended?''; bitemporal databases are considerably more difficult to formalise
and reason about, so we aim to investigate how bitemporal support can be added
in a compositional manner.
Second, at present, the result of a sequenced join must be a flat
record; further work is required to understand the semantics and
implementation techniques for joins that produce nested results.

\section{Conclusions}\label{sec:conclusion}
In spite of decades of work on temporal databases and even an extension to the SQL standard,
mainstream support for temporal data remains limited, requiring developers to
implement temporal functionality from scratch.
In this paper, we have shown how to extend language-integrated query
to support transaction time and valid time data, making
temporal data management accessible without explicit DBMS support. We
have formalised our constructs and translational implementation
strategies based on those proposed
by~\citet{Snodgrass99:book}, and proved that the translations are
semantics-preserving.  We have implemented our approach in the Links
programming language and assessed its value through a case study.  Our
work is a first but significant step towards fully supporting temporal
data management at the language level.

\section*{Acknowledgements}
We thank the anonymous reviewers for their helpful comments.  This work is
partially funded by EPSRC Grant EP/T014628/1 (STARDUST), ERC Consolidator Grant
682315 (Skye), and a UK Government ISCF Metrology Fellowship.

\newpage

\bibliographystyle{ACM-Reference-Format}
\bibliography{tlinq-arxiv}

\clearpage
\appendix
\onecolumn
\section{Full definitions}
\label{app:full-definitions}

In this appendix full definitions are provided for judgments and definitions where only selected rules could be included in the main body of the paper.

\begin{itemize}
\item Figure~\ref{fig:linq-typing-full} presents the full definition of the typechecking judgment for $\linq$, complementing the partial definition in Figure~\ref{fig:linq-typing}.
\item Figure~\ref{fig:linq-semantics-full} presents the full definition of the evaluation relation for $\linq$, including the rules omitted from Figure~\ref{fig:linq-semantics}.
\item Figure~\ref{fig:vlinq-syntax-full} shows the full typing rules for $\vlinq$, complementing the selected rules in Figure~\ref{fig:vlinq-syntax}.
  \item Figure~\ref{fig:vlinq-reduction-full} shows the full evaluation rules for $\vlinq$, complementing the selected rules in Figure~\ref{fig:vlinq-reduction}.
  \item Figure~\ref{fig:vlinq-translation-full} presents the full translation rules for sequenced and nonsequenced operations in $\vlinq$ to $\linq$, completing the rules shown in Figure~\ref{fig:vlinq-translation}.  The rules for translating current updates are special cases of those for sequenced updates and shown in Appendix~\ref{sec:vlinq-current-updates}.
\end{itemize}

\begin{figure*}[b]
{\footnotesize
  ~\header{Term typing} \hfill \framebox{$\tseq{\tyenv}{\tma}{\tya}{\effs}$}
  \vspace{-1em}
  \begin{mathpar}
    \inferrule
    [T-Const]
    { c \text{ has base type } \basety }
    { \tseq{\tyenv}{c}{\basety}{\pure} }

    \inferrule
    [T-Var]
    { x : A \in \Gamma }
    { \tseq{\tyenv}{x}{\tya}{\pure} }

    \inferrule
    [T-Abs]
    { \tseq
        {\tyenv, x : \tya}
        {\tma}
        {\tyb}
        {\effs}
    }
    { \tseq
        {\Gamma}
        {\fun{x}{\tma} }
        {\tyfun{\tya}{\tyb}{\effs}}
        {\pure}
    }

    \inferrule
    [T-App]
    { \tseq{\tyenv}{\tma}{\tyfun{\tya}{\tyb}{\effs_1}}{\effs_2} \\
      \tseq{\tyenv}{\tmb}{\tya}{\effs_3} \\
    }
    { \tseq{\tyenv}{\tma \app \tmb}{\tyb}{\effs_1 \cup \effs_2 \cup \effs_3} }

    \inferrule
    [T-If]
    {
        \tseq{\tyenv}{\tmc}{\boolty}{\effs_1} \\\\
        \tseq{\tyenv}{\tma}{\tya}{\effs_2} \\
        \tseq{\tyenv}{\tmb}{\tya}{\effs_3}
    }
    {
        \tseq
            { \tyenv }
            { \ite{\tmc}{\tma}{\tmb} }
            { \tya }
            {\effs_1 \cup \effs_2 \cup \effs_3}
    }

    \inferrule
    [T-Op]
    { \opsymb : \tyfun{\basety_1 \times \cdots \times \basety_n}{\basety}{\effs}
        \\
        (\tseq{\tyenv}{\tma_i}{\basety_i}{\effs_i})_{i \in 1..n}
    }
    { \tseq{\tyenv}{\langop{\oseq{M}}}{\basety}{\effs \cup \bigcup_{i \in 1..n} \effs_i} }

    \inferrule
    [T-Record]
    { (\tseq{\tyenv}{\tma_i}{\tya_i}{\effs_i})_{i} }
    { \tseq{\tyenv}{\recordterm{\ell_i = \tma_i}_i}{\recordty{\ell_i : A_i}_i }{\bigcup_{i} \effs_i} }

    \inferrule
    [T-Project]
    { \tseq{\tyenv}{\tma}{\recordty{\ell_i : \tya_i}_{i \in I}}{\effs} \\ j \in I }
    { \tseq{\tyenv}{\project{\tma}{\ell_j}}{\tya_j}{\effs} }

    \inferrule
    [T-EmptyBag]
    { }
    { \tseq{\tyenv}{\bag{~}}{\bagty{\tya}}{\pure} }

    \inferrule
    [T-Bag]
    { \tseq{\tyenv}{\tma}{\tya}{\effs} }
    { \tseq{\tyenv}{\bag{\tma}}{\bagty{\tya}}{\effs} }

    \inferrule
    [T-BagUnion]
    { \tseq{\tyenv}{\tma}{\bagty{\tya}}{\effs_1} \\\\
      \tseq{\tyenv}{\tmb}{\bagty{\tya}}{\effs_2} }
    { \tseq{\tyenv}{\baguniontwo{\tma}{\tmb}}{\bagty{\tya}}{\effs_1 \cup \effs_2} }

    \inferrule
    [T-For]
    { \tseq{\tyenv}{\tma}{\bagty{\tya}}{\effs_1} \\\\
      \tseq{\tyenv, x : \tya}{\tmb}{\bagty{\tyb}}{\effs_2}
    }
    { \tseq{\tyenv}{\forcomp{x}{M}{N}}{\bagty{\tyb}}{\effs_1 \cup \effs_2} }

    \inferrule
    [T-Query]
    { \tseq{\tyenv}{\tma}{\bagty{\tya}}{\effs} \\\\
      \hasqtype{\tya} \\
      \effs \subseteq \effset{\effread}
    }
    { \tseq{\tyenv}{\query{\tma}}{\bagty{\tya}}{\effs} }

    \inferrule
    [T-Table]
    { }
    { \tseq{\tyenv}{\tblvar}{\tablety{\schema(t)}}{\pure}}

    \inferrule
    [T-Now]
    { }
    { \tseq{\tyenv}{\now}{\timety}{\pure} }

    \inferrule
    [T-Get]
    { \tseq{\tyenv}{\tma}{\tablety{\tya}}{\effs}}
    { \tseq
        {\tyenv}
        {\get{\tma}}
        {\bagty{\tya}}
        {\effset{\effread} \cup \effs}
    }

    \inferrule
    [T-Insert]
    { \tseq{\tyenv}{\tma}{\tablety{\tya}}{\effs} \\
      \tseq{\tyenv}{\tmb}{\bagty{\tya}}{\pure}
    }
    { \tseq{\tyenv}{\dbinsert{\tma}{\tmb}}{\recordty{}}{\effset{\effwrite} \cup \effs} }

    \inferrule
    [T-Update]
    { \tseq{\tyenv}{\tmc}{\tablety{\tya}}{\effs} \\\\
      \tya = \recordty{\ell_i : \tyb_i}_{i \in I} \\
      \tseq{\tyenv, x : \tya}{\tma}{\boolty}{\pure} \\
      (j \in I \wedge \tseq{\tyenv, x : \tya}{\tmb_j}{\tyb_j}{\pure})_{j \in J}
    }
    { \tseq
        {\tyenv}
        {\dbupdate{x}{\tmc}{\tma}{(\ell_j = \tmb_j)_{j \in J}}}
        {\recordty{}}
        {\effset{\effwrite} \cup \effs }
    }

    \inferrule
    [T-Delete]
    { \tseq{\tyenv}{\tma}{\tablety{\tya}}{\effs} \\
      \tseq{\tyenv, x : \tya}{\tmb}{\boolty}{\pure}
    }
    { \tseq
        {\tyenv}
        {\dbdelete{x}{\tma}{\tmb}}
        {\recordty{}}
        {\effset{\effwrite} \cup \effs }
    }
  \end{mathpar}

}

    \caption{Typing rules for $\linq$}
    \label{fig:linq-typing-full}
{\footnotesize

    \headersig{Big-step reduction rules}{$\tmevaltwo{\tma}{\retpair{\vala}{\db'}} $}
    \begin{mathpar}
        \inferrule
        [E-Val]
        { }
        { \tmevaltwo{\vala}{\retpair{\vala}{\db}}}

        \inferrule
        [E-App]
        {
        \tmevaltwo{\tma}{\retpair{\fun{x}{\tmc}}{\db_1}} \\\\
        \tmevaltwo[\db_1, \timevar]{\tmb}{\retpair{\vala}{\db_2}} \\
        \tmevaltwo[\db_2, \timevar]{\subst{\tmc}{\vala}{x}}{\retpair{\valb}{\db_3}}  }
        { \tmevaltwo{ \tma \app \tmb}{\retpair{\valb}{\db_3}} }

        \inferrule
        [E-Op]
        {
            \tmevaltwo{\tma_1}{\retpair{\vala_1}{\db_1}} \\\\
            \ldots \\\\
            \tmevaltwo[\db_{n - 1}, \timevar]{\tma_n}{\retpair{\vala_n}{\db_n}}
        }
        {
          \tmevaltwo{\langop{\oseq{\tma}}}{\retpair{\denotlangop{\oseq{\vala}}}{\db_n} }
        }

        \inferrule
        [E-IfT]
        {
            \tmevaltwo{\tmc}{\retpair{\ttrue}{\db_1}} \\
            \tmevaltwo[\db_1, \timevar]{\tma}{\retpair{\vala}{\db_2}} \\
        }
        {
            \tmevaltwo{\ite{\tmc}{\tma}{\tmb}}{\retpair{\vala}{\db_2}}
        }

        \inferrule
        [E-IfF]
        {
            \tmevaltwo{\tmc}{\retpair{\ffalse}{\db_1}} \\
            \tmevaltwo[\db_1, \timevar]{\tmb}{\retpair{\vala}{\db_2}} \\
        }
        {
            \tmevaltwo{\ite{\tmc}{\tma}{\tmb}}{\retpair{\vala}{\db_2}}
        }

        \inferrule
        [E-Bag]
        {
            \tmevaltwo{\tma}{\retpair{\vala}{\db'}}
        }
        {
          \tmevaltwo{\bag{\tma}}{\retpair{\bag{\vala}}{\db'} }
        }

        \inferrule
        [E-BagUnion]
        {
            \tmevaltwo{\tma}{\retpair{\vala}{\db_1}} \\
            \tmevaltwo[\db_1, \timevar]{\tmb}{\retpair{\valb}{\db_2}} \\
        }
        {
            \tmevaltwo
                {\baguniontwo{\tma}{\tmb}}
                {\retpair
                    {\vala \denotbagunion \valb}
                    {\db_2}
                }
        }

        \inferrule
        [E-Record]
        {
            \tmevaltwo{\tma_1}{\retpair{\vala_1}{\db_1}} \\\\
            \ldots \\\\
            \tmevaltwo[\db_{n - 1}, \timevar]{\tma_n}{\retpair{\vala_n}{\db_n}}
        }
        {
          \tmevaltwo{\recordterm{\seq{\ell = \tma}}}{\retpair{\recordterm{\seq{\ell = \vala}}}{\db_n} }
        }

        \inferrule
        [E-Project]
        {
            \tmevaltwo{\tma}{\retpair{\recordterm{\ell_i = \vala_i}_{i \in I}}{\db'}} \\
            j \in I
        }
        { \tmevaltwo{\project{\tma}{\ell_j}}{\retpair{\vala_j}{\db'}} }

        \inferrule
        [E-Now]
        { }
        { \tmevaltwo{\now}{\retpair{\timevar}{\db}} }

        \inferrule
        [E-Query]
        { \readtmevaltwo{\tma}{\vala}}
        { \tmevaltwo{\query{\tma}}{\retpair{\vala}{\db}} }

        \inferrule
        [E-ForEmpty]
        {
            \tmevaltwo{\tma}{\retpair{\emptybag}{\db'}}
        }
        {
            \tmevaltwo{\forcomp{x}{\tma}{\tmb}}{\retpair{\emptybag}{\db'}}
        }

        \inferrule
        [E-For]
        {
            \tmevaltwo
                {\tma}
                {\retpair{\bag{\seq{\vala_1} \cdot \vala \cdot \seq{\vala_2}}}{\db_1}}
            \\\\
            \tmevaltwo[\db_1,
            \timevar]{\subst{\tmb}{\vala}{x}}{\retpair{\valb_1}{\db_2}}
            \\
            \tmevaltwo
                [\db_2, \timevar]
                {\forcomp
                    {x}
                    {\bag{\seq{\vala_1} \cdot \seq{\vala_2}}}
                    {\tma}
                }
                {\retpair{\valb_2}{\db_3}}
        }
        {
            \tmevaltwo{
                \forcomp
                    {x}
                    {\tma}
                    {\tmb}
            }
            {\retpair
                {\valb_1 \denotbagunion \valb_2}
                {\db_3}
            }
        }

        \inferrule
        [E-Get]
        {
            \tmevaltwo{\tma}{\retpair{\tablevar}{\db'}}
        }
        { \tmevaltwo{\get{\tma}}{\retpair{\db'(\tblvar)}{\db'}} }

        \inferrule
        [E-Insert]
        {
            \tmevaltwo{\tma}{\retpair{\tblvar}{\db'}} \\
            \puretmevaltwo{\tmb}{\vala}
        }
        {
            \tmevaltwo
                {\dbinsert{\tma}{\tmb}}
                {\retpair{()}{\extendenv{\db'}{\tblvar}{\db'(\tblvar) \denotbagunion \vala}} }
        }

        \inferrule
        [E-Update]
        {
            \tmevaltwo{\tmc}{\retpair{\tblvar}{\db_1}} \\
                    \db_2 =
                        {\extendenv
                            {\db_1}
                            {\tblvar}
                            { \bag{\mkwd{upd}(\var{v}) \mid \var{v} \in \db_1(\tblvar)}}
                        }
                    \\\\
                    {
                    \mkwd{upd}(\var{v}) =
                    {
                        \begin{cases}
                            \recordwithtwo
                                {\var{v}}
                                {\seq{\ell = \valb}}
                                & \text{ if }
                                \puretmevaltwo{\tma \{ \var{v} / x \}}{\ttrue}
                                \text{ and }
                                (\puretmevaltwo{\subst{\tmb_i}{\var{v}}{x}}{\valb_i})_i \\
                            v & \text{ if } \puretmevaltwo{\tma \{ \var{v} / x \}}{\ffalse}
                        \end{cases}
                    }
                    }
        }
        {
            \tmevaltwo
                {\dbupdate{x}{\tmc}{\tma}{(\seq{\ell = \tmb})}}
                {\retpair{()}{\db_2}}
        }

        \inferrule
        [E-Delete]
        {
            \tmevaltwo{\tma}{\retpair{\tblvar}{\db_1}}
            \\
            \db_2 =
                {\extendenv
                    {\db_1}
                    {\tblvar}
                    { \bag{\var{v} \in \db(\tblvar) \mid
                        \puretmevaltwo{\tmb \{\var{v} / x \}}{\ffalse}} }
                }
        }
        {
            \tmevaltwo
                {\dbdelete{x}{\tma}{\tmb}}
                {\retpair
                    {()}
                    {\db_2}
                }
        }
    \end{mathpar}
}
    \caption{Semantics of $\linq$}
    \label{fig:linq-semantics-full}
\end{figure*}

\begin{figure*}
{\footnotesize

~\header{Typing rules} \hfill \framebox{$\Gamma \vdash M : A \effann{\effs}$}
\vspace{-2em}
\begin{mathpar}
  \inferrule
  [TV-Get]
  { \tseq{\tyenv}{\tma}{\tablety{\tya}}{\effs} }
  { \tseq
      {\tyenv}
      {\get{\tma}}
      {\bagty{\validtimety{\tya}}}
      {\effset{\effread} \cup \effs}
  }

  \inferrule
  [TV-SeqInsert]
  {
      \tseq{\tyenv}{\tma}{\tablety{\tya}}{\effs} \\
      \tseq{\tyenv}{\tma}{\bagty{\validtimety{\tya}}}{\pure}
  }
  { \tseq{\tyenv}{\dbinsertseq{\tma}{\tmb}}{\recordty{}}{\effset{\effwrite} \cup \effs} }

  \inferrule
  [TV-NonseqUpdate]
  {
      \tseq{\tyenv}{\tmc}{\tablety{\tya}}{\effs} \\
      \tya = \recordty{\ell_i : \tyb_i}_{i \in I} \\\\
      \tseq{\tyenv, x : \validtimety{\tya}}{\tma}{\boolty}{\pure} \\
      (j \in I \wedge \tseq{\tyenv, x : \validtimety{\tya}}{\tmb_j}{\tyb_j}{\pure})_{j \in J} \\\\
      \tseq{\tyenv, x : \validtimety{\tya}}{\tmb'_1}{\timety}{\pure} \\
      \tseq{\tyenv, x : \validtimety{\tya}}{\tmb'_2}{\timety}{\pure}
  }
  {
      \tseq
        {\tyenv}
        { \dbupdatenonseq{x}{\tmc}{\tma}{(\ell_j {=} \tmb_j)_{j \in J}}{\tmb'_1}{\tmb'_2} }
        { () }
        { \effset{\effwrite} \cup \effs }
  }

  \inferrule
  [TV-SeqUpdate]
  { \tseq{\tyenv}{\tmc}{\tablety{\tya}}{\effs} \\
      \tya = \recordty{\ell_i : \tyb_i}_{i \in I} \\
    \tseq{\tyenv}{\tma_1}{\timety}{\pure} \\
    \tseq{\tyenv}{\tma_2}{\timety}{\pure} \\\\
    \tseq{\tyenv, x : \tya}{\tma_3}{\boolty}{\pure} \\
    (j \in I \wedge \tseq{\tyenv, x : \tya}{\tmb_j}{\tyb_j}{\pure})_{j \in J}
  }
  { \tseq
      {\tyenv}
      {\dbupdatebetween{x}{\tmc}{\tma_1}{\tma_2}{\tma_3}{(\ell_j = \tmb_j)_{j \in J}}}
      {()}
      {\effset{\effwrite} \cup \effs}
  }

  \inferrule
  [TV-NonseqDelete]
  {
      \tseq{\tyenv}{\tma}{\tablety{\tya}}{}{\effs} \\
      \tseq{\tyenv, x : \validtimety{\tya}}{\tmb}{\boolty}{\pure}
  }
  { \tseq
        {\tyenv}
        {\dbdeletenonseq{x}{\tma}{\tmb}}
        {()}
        {\effset{\effwrite} \cup \effs}
  }

  \inferrule
  [TV-SeqDelete]
  {
    \tseq{\tyenv}{\tmc}{\tablety{\tya}}{\effs}\!\! \\
    \tseq{\tyenv}{\tma_1}{\timety}{\pure}\!\! \\
    \tseq{\tyenv}{\tma_2}{\timety}{\pure}\!\! \\
    \tseq{\tyenv, x {:} \tya}{\tmb}{\boolty}{\pure}
  }
  {
      \tseq
        {\tyenv}
        {\dbdeletebetween{x}{\tmc}{\tma_1}{\tma_2}{\tmb}}
        {()}
        {\effset{\effwrite} \cup \effs}
  }
\end{mathpar}
}
\caption{Typing rules for \vlinq}
\label{fig:vlinq-syntax-full}
\end{figure*}

\begin{figure*}[t]
    {\footnotesize
~\headersig{Reduction rules}{$\vtmevaltwo{\tma}{\retpair{\vala}{\db'}}$}
\begin{mathpar}
  \inferrule
  [EV-Row]
  {
      \vtmevaltwo{\tma_1}{\retpair{\vala_1}{\db_1}} \\\\
      \vtmevaltwo[\timevar, \db_1]{\tma_2}{\retpair{\vala_2}{\db_2}} \\
      \vtmevaltwo[\timevar, \db_2]{\tma_3}{\retpair{\vala_3}{\db_3}}
  }
  { \vtmevaltwo
      {\dbrow{\tma_1}{\tma_2}{\tma_3}}
      {\retpair{\dbrow{\vala_1}{\vala_2}{\vala_3}}{\db_3}}
  }

  \inferrule
  [EV-SeqInsert]
  {
      \vtmevaltwo{\tma}{\retpair{\tblvar}{\db_1}} \\
      \puretmevaltwo{\tmb}{\bag{\seq{\vala}}} \\\\
      \forall \dbrow{\var{data}}{\var{start}}{\var{end}} \in \seq{\vala}. \var{start}
      < \var{end} \\\\
      \db_2 = \extendenv{\db_1}{\tblvar}{\db_1(\tblvar) \denotbagunion
      \bag{\seq{\vala}}}
  }
  {
    \vtmevaltwo{\dbinsertseq{\tma}{\tmb}}{\retpair{()}{\db_2}}
  }

  \inferrule
  [EV-NonseqDelete]
    {
        \tmevaltwo{\tma}{\retpair{\tblvar}{\db_1}}
        \\\\
        \db_2 =
            {\extendenv
                {\db_1}
                {\tblvar}
                { \bag{d \in \db(\tblvar) \mid
                    \puretmevaltwo{\tmb \{d / x \}}{\ffalse}} }
            }
    }
    {
        \tmevaltwo
            { \dbdeletenonseq{x}{\tma}{\tmb} }
            {\retpair{()}{\db_2}}
    }

    \inferrule
    [EV-SeqDelete]
    {
        \tmevaltwo{\tmc}{\retpair{\tblvar}{\db_1}} \\
        \puretmevaltwo{\tma_1}{\vala_{\var{start}}} \\
        \puretmevaltwo{\tma_2}{\vala_{\var{end}}} \\
        \vala_{\var{start}} < \vala_{\var{end}} \\
      \db_2 = \db_1[ \tablevar \mapsto
        \flattenbag{\bag{\mkwd{del}(d) \midspace d \in \db(\tablevar)}} ] \\\\
        {
                \mkwd{del}(\dbrow{\var{v}}{\var{start}}{\var{end}}) \defeq
                {
                  \begin{cases}
                    \bag{~} &
                    \text{ if } \puretmevaltwo{\subst{\tmb}{\var{v}}{x}}{\ttrue}
                    \text{ and } \vala_{\var{start}} \le \var{start} \text{ and
                    } \vala_{\var{end}} \ge \var{end} \quad \textbf{(Case 1)} \\
                    \bag{\dbrow{\var{v}}{\vala_{\var{end}}}{\var{end}}} &
                    \text{ if } \puretmevaltwo{\subst{\tmb}{\var{v}}{x}}{\ttrue}
                    \text{ and } \vala_{\var{start}} \le \var{start} \text{ and
                    } \vala_{\var{end}} < \var{end} \quad \textbf{(Case 2)}\\
                    \bag{\dbrow{\var{v}}{\var{start}}{\vala_{\var{start}}}, \dbrow{\var{v}}{\vala_{\var{end}}}{\var{end}}} &
                    \text{ if } \puretmevaltwo{\subst{\tmb}{\var{v}}{x}}{\ttrue}
                    \text{ and } \vala_{\var{start}} > \var{start} \text{ and }
                    \vala_{\var{end}} < \var{end} \quad \textbf{(Case 3)} \\
                    \bag{\dbrow{\var{v}}{\var{start}}{\vala_{\var{start}}}} &
                    \text{ if } \puretmevaltwo{\subst{\tmb}{\var{v}}{x}}{\ttrue}
                    \text{ and } \vala_{\var{start}} > \var{start} \text{ and }
                    \vala_{\var{end}} \ge \var{end} \quad \textbf{(Case 4)}\\
                    \bag{\dbrow{\var{v}}{\var{start}}{\var{end}}} & \text{ otherwise }
                    \hfill \textbf{(Case 5)}
                  \end{cases}
                }
        }
    }
    {
        \vtmevaltwo
            {\dbdeletebetween{x}{\tmc}{\tma_1}{\tma_2}{\tmb}}
            { \retpair{\recordterm{}}{\db_2}}
    }

    \inferrule
    [EV-NonseqUpdate]
    {
      \tmevaltwo{\tmc}{\retpair{\tblvar}{\db_1}} \\
              \db_2 =
                  {\extendenv
                      {\db_1}
                      {\tblvar}
                      { \bag{\mkwd{upd}(d) \mid d \in \db_1(\tblvar)}}
                  }
              \\\\
              {
              \mkwd{upd}(D = \dbrow{\var{v}}{\var{start}}{\var{end}}) =
              {
                  \begin{cases}
                      \dbrow
                          {
                              \recordwithtwo
                                  {\var{v}}
                                  {\seq{\ell = \valb}}
                          }
                          { \valb_{\var{start}} }
                          { \valb_{\var{end}} }
                      & \text{ if }
                      {\blt
                              \puretmevaltwo{\subst{\tma}{D}{x}}{\ttrue}
                              \text{ and }
                              (\puretmevaltwo{\subst{\tmb_i}{D}{x}}{\valb_i})_i
                              \text{ and } \\
                              \puretmevaltwo{\subst{\tmb'_1}{D}{x}}{\valb_{\var{start}}}
                              \text{ and }
                              \puretmevaltwo{\subst{\tmb'_2}{D}{x}}{\valb_{\var{end}}}
                              \text{ and }  \valb_{\var{start}} < \valb_{\var{end}}
                       \el}
                              \\
                      D & \text{ if } \puretmevaltwo{\tma \{ D / x \}}{\ffalse}
                  \end{cases}
              }
              }
  }
  {
      \tmevaltwo
          { \dbupdatenonseq{x}{\tmc}{\tma}{(\ell_i = \tmb_i)_{i}}{\tmb'_1}{\tmb'_2} }
          {\retpair{()}{\db_2}}
  }

    \inferrule
    [EV-SeqUpdate]
    {
      \vtmevaltwo{\tmc}{\retpair{\tablevar}{\db_1}} \\
        \puretmevaltwo{\tma_1}{\vala_{\var{start}}} \\
        \puretmevaltwo{\tma_2}{\vala_{\var{end}}} \\
        \vala_{\var{start}} < \vala_{\var{end}}
        \\
      \db_2 = \db_1[ \tablevar \mapsto
        \flattenbag{\bag{\mkwd{upd}(d) \midspace d \in \db_1(\tablevar)}} ]
      \\
        \mkwd{upd}(\dbrow{\var{v}}{\var{start}}{\var{end}}) =
        {
          \begin{cases}
            \bag{\dbrow{W}{\var{start}}{\var{end}}} &
            \text{ if } \puretmevaltwo{\subst{\tma_3}{\var{v}}{x}}{\ttrue} \text{
            and } \vala_{\var{start}} \le \var{start} \text{ and }
            \vala_{\var{end}} \ge \var{end} \quad \textbf{(Case 1)}\\
            \bag{\dbrow{W}{\var{start}}{\vala_{\var{end}}}, \dbrow{\var{v}}{\vala_{\var{end}}}{\var{end}}} &
            \text{ if } \puretmevaltwo{\subst{\tma_3}{\var{v}}{x}}{\ttrue} \text{
            and } \vala_{\var{start}} \le \var{start} \text{ and }
            \vala_{\var{end}} < \var{end} \quad \textbf{(Case 2)}\\
            \bag{
                \dbrow{\var{v}}{\var{start}}{\vala_{\var{start}}},
                \dbrow{W}{\vala_{\var{start}}}{\vala_{\var{end}}},
                \dbrow{\var{v}}{\vala_{\var{end}}}{\var{end}}} &
            \text{ if } \puretmevaltwo{\subst{\tma_3}{\var{v}}{x}}{\ttrue} \text{
            and } \vala_{\var{start}} > \var{start} \text{ and }
            \vala_{\var{end}} < \var{end} \quad \textbf{(Case 3)}\\
            \bag{\dbrow{\var{v}}{\var{start}}{\vala_{\var{start}}}, \dbrow{W}{\vala_{\var{start}}}{\var{end}}} &
            \text{ if } \puretmevaltwo{\subst{\tma_3}{\var{v}}{x}}{\ttrue} \text{
            and } \vala_{\var{start}} > \var{start} \text{ and }
            \vala_{\var{end}} \ge \var{end} \quad \textbf{(Case 4)}\\
            \bag{\dbrow{\var{v}}{\var{start}}{\var{end}}} & \text{ otherwise }
            \hfill \textbf{(Case 5)}\\
          \end{cases}
        }
        \\
        \text{ where for all cases, } W =
                                \recordwithtwo
                                  {\var{v}}
                                  {\seq{\ell = \valb}} \text{ given } (\puretmevaltwo{\tmb_i}{\valb'_i})_i
    }
    {
        \tmevaltwo
            {\dbupdatebetween{x}{\tmc}{\tma_1}{\tma_2}{\tma_3}{(\ell_i = \tmb_i)_i}}
            { \retpair{\recordterm{}}{\db_2}}
    }
  \end{mathpar}
}
\caption{Reduction rules for \vlinq}
\label{fig:vlinq-reduction-full}
\end{figure*}

\begin{figure}
    {\footnotesize
  \[
    \blt
    \vtrans{\dbdeletenonseqann{x}{\tma}{\tmb}} = \\
    \quad
    {
    \blt
        \dbdelete{x}{\vtrans{\tma}}{\lift{x}{\vtrans{\tmb}}}
    \el
    } \\
    \text{where } \lift{x}{f} \defeq \\
        \quad
        (\fun{x}{f}) \app
            \recordterm{
                \fielddata{\etaexp{x}{\{\ell_i\}_{i}}},
                \fieldstart{\project{x}{\var{start}}},
                \fieldend{\project{x}{\var{end}}}}
    \spacerow \\    %
        \llparenthesis
            \calcwd{update}^{{(\ell_i : \tya_i)}_{i \in I}}\:\calcwd{nonsequenced} \: (x \dbcomparrow \tmc) \:
                  \calcwd{where} \: \tma \:
             \calcwd{set} \: (\ell_j = \tmb_j)_{j \in J} \: \calcwd{valid} \:
            \calcwd{from} \: \tmb'_1 \: \calcwd{to} \: \tmb'_2
        \rrparenthesis
    = \vspace{0.5em}
     \\
    \quad
    {
        \blt
        \calcwd{update}^{{(\ell_i:\tya_i)_{i \in I}}} \: (x \dbcomparrow
        \vtrans{\tmc}) \\
        \quad \calcwd{where}\:(\lift{x}{\vtrans{\tma}}) \\
        \quad \calcwd{set}\:
        {
                    (\recordterm{\ell_j = \lift{x}{\vtrans{\tmb_j}}}_{j \in J},
                        \fieldstart{\lift{x}{\vtrans{\tmb'_1}}},
                        \fieldend{\lift{x}{\vtrans{\tmb'_2}}})
        }
        \\
        \text{where } \lift{x}{f} \defeq \\
        \quad
        (\fun{x}{f}) \app
            \recordterm{
                \fielddata{\etaexp{x}{\{\ell_i\}_{i \in I}}},
                \fieldstart{\project{x}{\var{start}}},
                \fieldend{\project{x}{\var{end}}}}
        \el
    }
    \spacerow\\
   \vtrans{\dbinsertseqann{\tma}{\tmb}} = \\
\quad \letintwo{\var{tbl}}{\vtrans{\tma}} \\
\quad \letinone{\var{rows}} \\
\qquad \forcomptwo{x}{\vtrans{\tmb}} \\
\qqquad \bag{
            \recordplustwo
                {\etaexp{x}{\seq{\ell}}}
                {\recordterm{
                    \fieldstart{\project{x}{\var{start}}},
                    \fieldend{\project{x}{\var{end}}}
                }}
        } \\
\quad \calcwd{in} \\
\quad \dbinsert{\var{tbl}}{\var{rows}}
\spacerow\\
\vtrans{\dbdeletebetweenann{x}{L}{M_1}{M_2}{N}} = \\
\qquad
      {
        \blt
        \letintwo{\var{tbl}}{\vtrans{L}} \\
        \letintwo{\var{aStart}}{\vtrans{M_1}} \\
        \letintwo{\var{aEnd}}{\vtrans{M_2}} \\
        \letintwo{\var{lRows}}{\mkwd{startRows}(\var{tbl}, \mkwd{pred},
        \var{aStart}, \seq{\ell})} \\
        \letintwo{\var{rRows}}{\mkwd{endRows}(\var{tbl}, \mkwd{pred},
        \var{aEnd}, \seq{\ell})}
        \\
        \dbdeletetwo{x}{\var{tbl}} \\
        \quad \calcwhereone{\mkwd{pred} \wedge
        (\project{x}{\field{start}} < \var{aEnd}) \wedge
        (\project{x}{\field{end}} > \var{aStart}}) \\
       \dbinsert{\var{tbl}}{\var{lRows}}; \\
       \dbinsert{\var{tbl}}{\var{rRows}} \\
        \text{where } \mkwd{pred} \defeq \restrict{x}{\seq{\ell}}{\vtrans{\tmb}}
        \el
      }
      \spacerow \\
        \llparenthesis
            \calcwd{update}^{(\ell_i : \tya_i)_{i \in I}} \: \calcwd{sequenced}
            \: (x \dbcomparrow \tmc) \:
            \: \calcwd{between} \: \tma_1 \: \calcwd{and} \: \tma_2 \:
                  \calcwd{where} \: \tma_3 \:
            \: \calcwd{set} \: (\ell_j = \tmb_j)_{j \in J}
        \rrparenthesis = \vspace{0.5em}\\
    \qquad
      {
        \blt
        \letintwo{\var{tbl}}{\vtrans{L}} \\
        \letintwo{\var{aStart}}{\vtrans{M_1}} \\
        \letintwo{\var{aEnd}}{\vtrans{M_2}} \\
        \letintwo{\var{lRows}}{
            \mkwd{startRows}(\var{tbl}, \mkwd{pred}, \var{aStart}, \set{\ell_i}_{i \in I})} \\
        \letintwo{\var{rRows}}{\mkwd{endRows}(\var{tbl}, \mkwd{pred}, \var{aEnd}, \set{\ell_i}_{i \in I})} \\
        \dbupdatetwo{x}{\var{tbl}} \\
        \quad \calcwhereone{\mkwd{pred} \wedge
        (\project{x}{\field{start}} < \var{aEnd}) \wedge
        (\project{x}{\field{end}} > \var{aStart}}) \\
      \quad \calcwd{set} \:
      (
            (\ell_j = \restrict{x}{\{\ell_i\}_{i \in I}}{\vtrans{\tmb_j}})_{j
            \in J},
            \var{start} = \greatesttwo{\project{x}{\var{start}}}{\var{aStart}},
            \var{end} = \leasttwo{\project{x}{\var{end}}}{\var{aEnd}}
        ); \\
       \dbinsert{\var{tbl}}{\var{lRows}}; \\
       \dbinsert{\var{tbl}}{\var{rRows}} \\
        \text{where } \mkwd{pred} \defeq \restrict{x}{\{\ell_i\}_{i \in I}}{\vtrans{M_3}}
        \el
      } \spacerow \\
      \\
      \mkwd{startRows}(\var{tbl}, \var{pred}, \var{aStart}, \seq{\ell}) \defeq \\
      \quad
      \queryzero \\
        \qquad \forcomptwo{x}{\get{\var{tbl}}} \\
        \qqquad \whereone{\mkwd{pred} \wedge (\project{x}{\field{start}} <
            \var{aStart})
          \wedge (\project{x}{\field{end}} > \var{aStart}})
          \\
            \qqquad
            \bag{
                \etaexp
                    {x}
                    {\seq{\ell}}
                \recordplus
                \recordterm{
                    \fieldstart{\project{x}{\var{start}}},
                    \fieldend{\var{aStart}}
                    }
            } \spacerow \\
      \mkwd{endRows}(\var{tbl}, \var{pred}, \var{aEnd}, \seq{\ell}) \defeq \\
        \quad \queryzero \\
        \qquad \forcomptwo{x}{\get{\var{tbl}}} \\
        \qqquad \whereone{\mkwd{pred} \wedge (\project{x}{\field{start}} <
            \var{aEnd})
          \wedge (\project{x}{\field{end}} > \var{aEnd}}) \\
        \qqquad
            \bag{
                \etaexp
                    {x}
                    {\seq{\ell}}
                \recordplus
                \recordterm{
                    \fieldstart{\var{aEnd}},
                    \fieldend{\project{x}{\var{end}}}
                    }
            }
      \el
    \]
}
\caption{Translation from $\vlinq$ into $\linq$}
\label{fig:vlinq-translation-full}
\end{figure}

 \clearpage
\section{Illustration of sequenced delete and update behavior}\label{app:timelines}
Figure~\ref{fig:timelines} illustrates graphically the five different overlap relationships that can hold between the period of validity of an existing row to be updated or deleted (PV) and the period of applicability of a deletion or update.  The five cases involve when one interval is totally contained in the other (1, 3), when there is overlap but neither containment relationship holds (2,4) and when the intervals are disjoint (5).  In each case the effect of a deletion or update is described in the corresponding column of the table; generally the result is to replace the input row with zero, one, two or (exceptionally) three new rows.  When the PV and PA intervals are disjoint no action needs to be taken.

  \begin{figure*}[h]
  \begin{tabular}{|c|c|p{4cm}|p{4cm}|}
    \hline
    Case & Diagram & Delete behavior & Update behavior
    \\\hline
    1 & %
\begin{tikzpicture}[baseline=0pt]
\draw[thick, ->] (0,0) -- (\ImageWidth,0);

\foreach \x in {1,2,3,4,5,6,7}
\draw (\x cm,3pt) -- (\x cm,-3pt);

\draw [ultra thick ,decorate,decoration={brace,amplitude=5pt}] (2,0.25) -- +(4,0)
       node [black,midway,above=4pt, font=\footnotesize] {PV};
\draw [ultra thick,decorate,decoration={brace,amplitude=5pt}] (7,-0.25) --
+(-6,0)
       node [black,midway,font=\footnotesize, below=4pt] {PA};

\end{tikzpicture}
       & the entire row will be deleted
      & the entire row will be updated and adjusted to cover PA
      \\\hline
    2 & %
\begin{tikzpicture}[baseline=0pt]
\draw[thick, ->] (0,0) -- (\ImageWidth,0);

\foreach \x in {1,2,3,4,5,6,7}
\draw (\x cm,3pt) -- (\x cm,-3pt);

\draw [ultra thick ,decorate,decoration={brace,amplitude=5pt}] (2,0.25) -- +(4,0)
       node [black,midway,above=4pt, font=\footnotesize] {PV};
\draw [ultra thick,decorate,decoration={brace,amplitude=5pt}] (3,-0.25) -- +(-2,0)
       node [black,midway,font=\footnotesize, below=4pt] {PA};

\end{tikzpicture}
       & the overlapping portion will be deleted, leaving the rest of PV alone
      & insert new row with updated values covering PA and shorten existing row to only cover remainder of PV
      \\\hline
    3 & %
\begin{tikzpicture}[baseline=0pt]
\draw[thick, ->] (0,0) -- (\ImageWidth,0);

\foreach \x in {1,2,3,4,5,6,7}
\draw (\x cm,3pt) -- (\x cm,-3pt);

\draw [ultra thick ,decorate,decoration={brace,amplitude=5pt}] (2,0.25) -- +(4,0)
       node [black,midway,above=4pt, font=\footnotesize] {PV};
\draw [ultra thick,decorate,decoration={brace,amplitude=5pt}] (5,-0.25) -- +(-2,0)
       node [black,midway,font=\footnotesize, below=4pt] {PA};

\end{tikzpicture}
       & split row into two, one covering the part of PV before PA and the other covering the part of PV after PA
                                     & split row into three, two as in the case of deletion and a third providing the updated values for PA
                                       
    \\\hline
     4 & %
\begin{tikzpicture}[baseline=4pt]
\draw[thick, ->] (0,0) -- (\ImageWidth,0);

\foreach \x in {1,2,3,4,5,6,7}
\draw (\x cm,3pt) -- (\x cm,-3pt);

\draw [ultra thick ,decorate,decoration={brace,amplitude=5pt}] (2,0.25) -- +(4,0)
       node [black,midway,above=4pt, font=\footnotesize] {PV};
\draw [ultra thick,decorate,decoration={brace,amplitude=5pt}] (7,-0.25) --
    +(-2,0)
       node [black,midway,font=\footnotesize, below=4pt] {PA};

\end{tikzpicture}
       & symmetric with case 2
                                     & symmetric with case 2
                                       
    \\\hline
    5 & %
\begin{tikzpicture}[baseline=5pt]
\draw[thick, ->] (0,0) -- (\ImageWidth,0);

\foreach \x in {1,2,3,4,5,6,7}
\draw (\x cm,3pt) -- (\x cm,-3pt);

\draw [ultra thick ,decorate,decoration={brace,amplitude=5pt}] (2,0.25) -- +(4,0)
       node [black,midway,above=4pt, font=\footnotesize] {PV};
\draw [ultra thick,decorate,decoration={brace,amplitude=5pt}] (1.8,-0.25) -- +(-1.8,0)
       node [black,midway,font=\footnotesize, below=4pt] {PA};
\draw [ultra thick,decorate,decoration={brace,amplitude=5pt}] (8,-0.25) -- +(-1.8,0)
       node [black,midway,font=\footnotesize, below=4pt] {PA};

\end{tikzpicture}
       & no effect
                                     & no effect
                                       
    \\\hline
  \end{tabular}
  \caption{Timeline diagrams illustrating sequenced delete and update behavior}\label{fig:timelines}
\end{figure*}
 \clearpage
\section{Direct semantics and translations for $\vlinq$ current updates and
deletions}\label{sec:vlinq-current-updates}

In Section~\ref{sec:vlinq}, we showed that current updates and deletions in
$\vlinq$ are special cases of sequenced updates and deletions. For the sake
of completeness, we include the direct semantics and translations here.

{\footnotesize
\begin{mathpar}
  \inferrule
  [EV-Update]
  {
      \vtmevaltwo{\tmc}{\retpair{\tblvar}{\db'}} \\
    \db'' = \db'[ \tblvar \mapsto \flattenbag{\bag{\mkwd{upd}(v) \midspace v \in
    \db'(\tblvar)}}] \\
    \\
    {
      \mkwd{upd}(\dbrow{\var{data}}{\var{start}}{\var{end}}) =
      {
        \begin{cases}
           \bag
            {\dbrow
                {\recordwithtwo
                    {\var{data}}
                    {\seq{\ell = \vala}}
                }
                {\var{start}}
                {\var{end}}
            }
            &
            {
              \begin{array}[t]{l}
                \text{if } \puretmevaltwo{\tma \{ \var{data} / x \}}{\ttrue},
                \timevar \le \var{start},
                \text{ and }
                (\puretmevaltwo{\subst{\tmb_i}{\var{data}}{x}}{\vala_i})_i,
              \end{array}
            }  \\
            \bag{
                \dbrow{\var{data}}{\var{start}}{\timevar},
                \dbrow
                    {\recordwithtwo{\var{data}}{\seq{\ell = \vala}}}
                    {\timevar}
                    {\var{end})}
            }
            &
            {
              \begin{array}[t]{l}
                \text{if } \puretmevaltwo{\tma \{ \var{data} / x \}}{\ttrue},
                \var{start} < \timevar < \var{end},
                \text{ and }
                (\puretmevaltwo{\subst{\tmb_i}{\var{data}}{x}}{\vala_i})_i,
              \end{array}
            } \\
            \bag{\dbrow{\var{data}}{\var{start}}{\var{end}}} &
            {
              \begin{array}[t]{l}
                \text{if } \puretmevaltwo{M \{ \var{data} / x \}}{\ffalse}
                \text{ or }
                \timevar \ge \var{end}
              \end{array}
            }
        \end{cases}
      }
    }
  }
  { \vtmevaltwo
      {\dbupdate{x}{\tmc}{\tma}{(\seq{\ell = \tmb})}}
      {\retpair{\db''}{()}}
  }

  \inferrule
  [EV-Delete]
  {
      \vtmevaltwo{\tma}{\retpair{\tblvar}{\db'}} \\
    \db'' = \db'[ \tblvar \mapsto \flattenbag{\bag{\mkwd{del}(v) \midspace v \in
    \db'(\tblvar)}}] \\\\
    {
      \mkwd{del}(\dbrow{\var{data}}{\var{start}}{\var{end}}) =
      {
        \begin{cases}
           \bag
            {~}
            &
            {
              \begin{array}[t]{l}
                \text{if } \puretmevaltwo{\tmb \{ \var{data} / x \}}{\ttrue}
                \text{ and }
                \timevar \le \var{start}
              \end{array}
            }  \\
            \bag{
                \dbrow{\var{data}}{\var{start}}{\timevar}
            }
            &
            {
              \begin{array}[t]{l}
                \text{if } \puretmevaltwo{\tmb \{ \var{data} / x \}}{\ttrue}
                \text{ and }
                \var{start} < \timevar < \var{end}
              \end{array}
            } \\
            \bag{\dbrow{\var{data}}{\var{start}}{\var{end}}} &
            {
              \begin{array}[t]{l}
                \text{if } \puretmevaltwo{\tmb \{ \var{data} / x \}}{\ffalse}
                \text{ or }
                \timevar \ge \var{end}
              \end{array}
            }
        \end{cases}
      }
    }
  }
  { \vtmevaltwo
      {\dbdelete{x}{\tma}{\tmb}}
      {\retpair{\db''}{()}}
  }
\end{mathpar}
}

Rules \textsc{EV-Delete} and \textsc{EV-Update} are different to their $\tlinq$
counterparts, taking into account that a row may be valid further into the
future than the current time.

Specifically, rule \textsc{EV-Delete} has three cases: if the row being
inspected is entirely in the future (i.e., its $\var{start}$ field is greater
than or equal to the current time), then it is deleted from the database. If the
current time is greater than the start of the row, but less than the end of the
row, then the row is closed off at the current time. The row is not
modified if the current time is greater than the end time of the row, or the
predicate evaluates to $\ffalse$.
Current updates, described by \textsc{EV-Update}, follow much the same pattern,
but if the current time is between the start and end times of the row,
then the row is split: the the part up until the current time retains
the previous values, and the part from the current time until the end of the row
is set to the new values.

\[
        \bl
        \mkwd{currentAt}(x, \var{time}) \defeq
        \var{time} \ge \project{x}{\var{start}} \wedge
        \var{time} < \project{x}{\var{end}}
        \el
    \]

{\footnotesize
    \begin{mathpar}
        {\bl
\vtrans{\dbdeleteann{x}{M}{N}} = \\
    \quad
    {
      \blt
        \letintwo{\var{tbl}}{\vtrans{M}} \\
        \letintwo{\var{time}}{\now} \\
        \dbupdatetwo{x}{\var{tbl}} \\
        \quad \calcwd{where} \: {(\restrict{x}{\set{\ell_i}_i}{\vtrans{\tmb}} \wedge
            \project{x}{\var{start}} \le \var{time} \wedge
        \project{x}{\var{end}} > \var{time})} \\
        \quad \calcwd{set} \: (\fieldend{\var{time}}); \\
        \dbdeletetwo{x}{\var{tbl}} \\
        \quad \calcwd{where} \: {(\restrict{x}{\{\ell_i\}_i}{\vtrans{\tmb}} \wedge
        \project{x}{\var{start}} \ge \var{time})}
      \el
    }
    \el}

    {\bl
\vtrans{
    \dbupdateann
        [(\ell_i : \tya_i)_{i \in I}]
        {x}
        {\tmc}
        {\tma}
        {(\ell_j = \tmb_j)_{j\in J}}} = \\
    {
        \quad
      \blt
      \letintwo{\var{tbl}}{\vtrans{\tmc}} \\
      \letintwo{\var{time}}{\now} \\
      \letinone{\var{affected}} \\
      \quad \queryzero \\
      \qquad \forcomptwo{x}{\var{tbl}} \\
      \qqquad
        \calcwhereone
            {\restrict{x}{\{ \ell_i\}_{i \in I}}{\vtrans{M}} \wedge
              \currentat{x}{\var{time}}} \\
      \qqquad
      \bag{
            {\bl
                \recordterm{\ell_i = \project{x}{\ell_i}}_{i \in I
                \without J} \, \recordplus \\
                \recordterm
                    {\ell_j =
                        \restrict
                            {x}
                            {\{\ell_i\}_{i \in I}}
                            {\vtrans{\tmb_j}}}_{j \in J} \, \recordplus \\
                \recordterm{
                    \fieldstart{\var{time}},
                    \fieldend{\project{x}{\var{end}}}
                }
                \el
            }
      } \\
      \calcwd{in} \\
      \dbupdatetwo{x}{\var{tbl}} \\
      \quad \calcwd{where} \: \left( {
          \bl
            \restrict{x}{\{\ell_i\}_{i \in I}}{\vtrans{\tma}} \wedge
            (\project{x}{\field{start}} < \var{time}) \; \wedge \\
            (\project{x}{\var{end}} > \var{time}) \\
          \el
          } \right)
      \\
      \quad \calcwd{set} \: (\fieldend{\var{time}}); \\
      \dbupdatetwo{x}{\var{tbl}} \\
      \quad \calcwd{where} \: (\restrict{x}{\{ \ell_i\}_{i \in I}}{\vtrans{M}} \wedge \project{x}{\var{start}} \ge \var{time}) \\
      \quad \calcwd{set} \: \recordterm{(\ell_j = \restrict{x}{\{ \ell_i \}_{i
          \in I}}{\vtrans{\tmb_j}})_{j \in J}}; \\
      \dbinsert{\var{tbl}}{\var{affected}} \\
      \el
    }
    \el
}
\end{mathpar}
}

The meta-level $\mkwd{currentAt}$ function checks whether a row is valid at a
particular timestamp.

Current deletions are implemented using two $\linq$ operations:
an $\calcwd{update}$ operation, which adjusts the $\var{end}$ time of
all matching, currently-valid rows to the current time; a
$\calcwd{delete}$ operation, which entirely deletes all matching rows
which begin after the current time.

To implement current updates, we firstly calculate the affected bag of rows
(i.e., those rows which match the predicate and are valid at the current time),
with their updated values. We then do three operations:

\begin{itemize}
    \item An $\calcwd{update}$, which adjusts the $\var{end}$ time of all
        records which match the predicate, start before the current time, and
        end after the current time
    \item A second $\calcwd{update}$, which updates all rows matching the
        predicate which both start and end after the current timestamp
    \item Finally, an $\calcwd{insert}$, which inserts the contents of the
        $\var{affected}$ bag.
\end{itemize}

 \clearpage
\section{Proofs}\label{appendix:proofs}

\begin{lemma}[Pure comprehensions]\label{lem:pure-for}
    If $\vala = \bag{\vala_1, \ldots, \vala_n}$
    and we have
    a for comprehension $\forcomp{x}{\vala}{\tma}$ such that
    $\tseq{\cdot}{\forcomp{x}{\vala}{\tma}}{\tya}{\pure}$,
    and $\puretmevaltwo{\subst{\tma}{\valb_i}{x}}{\valb'_i}$ or
    $\readtmevaltwo{\subst{\tma}{\valb_i}{x}}{\valb'_i}$
    for each $i \in 1..n$,
    then
    $\puretmevaltwo
        {\forcomp{x}{\vala}{\tma}}
        {\bag{\valb'_1, \ldots, \valb'_n}}$.
\end{lemma}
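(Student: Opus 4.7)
The plan is to proceed by induction on $n$, the cardinality of the bag $\vala$. I take the intended substitution variable in the hypothesis to be $\vala_i$ rather than $\valb_i$ (a presumed typo), and I take the ``or'' in the hypothesis to mean that each body evaluation is available in whichever of the two restricted relations is needed; since pure evaluations are also read-only evaluations, the conclusion can be read in the appropriate restricted relation.

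For the base case $n = 0$, we have $\vala = \bag{~}$, so $\puretmevaltwo{\bag{~}}{\bag{~}}$ by (the pure analogue of) E-Val, and then the pure/read-only analogue of E-ForEmpty gives $\puretmevaltwo{\forcomp{x}{\bag{~}}{\tma}}{\bag{~}}$, matching the empty product of results.

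For the inductive step, I would decompose $\vala$ as $\bag{\vala_1} \denotbagunion \bag{\vala_2, \ldots, \vala_n}$ and apply E-For with $\vala_1$ as the distinguished element (taking its left context empty and its right context to be the tail). The hypothesis on the body gives $\subst{\tma}{\vala_1}{x} \tmeval \valb'_1$. Inverting T-For on the overall typing gives a typing of $\tma$ under $x : \tya'$ (with no write effects), which together with T-BagV for the tail lets the induction hypothesis apply to $\forcomp{x}{\bag{\vala_2, \ldots, \vala_n}}{\tma}$, yielding evaluation to $\bag{\valb'_2, \ldots, \valb'_n}$. Combining in E-For gives evaluation to $\bag{\valb'_1} \denotbagunion \bag{\valb'_2, \ldots, \valb'_n} = \bag{\valb'_1, \ldots, \valb'_n}$.

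The main obstacle is not really deep; it lies in checking that E-Val, E-For, and E-ForEmpty all have analogues for the pure and read-only relations $\puretmevaltwo{\cdot}{\cdot}$ and $\readtmevaltwo{\cdot}{\cdot}$ that omit database propagation, since the formal definitions of those relations are only sketched in the main text. Granting this, the induction is routine bookkeeping; no nontrivial reasoning about effects or databases is required beyond observing that the typing rules out any writes in the body and therefore rule out the E-Insert/E-Update/E-Delete cases that would otherwise propagate $\db$.
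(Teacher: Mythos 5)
Your proof is correct and is essentially an elaboration of the paper's own one-line argument (``Follows from the comprehension evaluation rules''): the induction on the bag size using the pure/read-only analogues of \textsc{E-ForEmpty} and \textsc{E-For} is exactly what that appeal amounts to, and your note that $\valb_i$ should read $\vala_i$ is a reasonable reading of the statement.
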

\begin{proof}
    Follows from the comprehension evaluation rules.
\end{proof}

\begin{lemma}[Lifting of pure reduction]\label{lem:pure-lift}
    If $\tseq{\cdot}{\tma}{\tya}{\pure}$ and $\puretmevaltwo{\tma}{\vala}$ or
    $\readtmevaltwo{\tma}{\vala}$, then
    $\tmevaltwo{\tma}{\retpair{\vala}{\db}}$ for any $\db$.
\end{lemma}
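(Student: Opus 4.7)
The plan is to proceed by induction on the derivation of $\puretmevaltwo{\tma}{\vala}$ (respectively $\readtmevaltwo{\tma}{\vala}$). Since both auxiliary evaluation relations are defined analogously to the effectful relation $\tmeval$ but without threading a database through derivations (and without rules for database-modifying constructs in the pure case), each rule in the auxiliary relations corresponds directly to a rule in the effectful relation.

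First I would handle the base cases (values, constants, variables, functions, $\now$), which hold immediately by \textsc{E-Val}, \textsc{E-Now}, or the corresponding rules, with the output database set to the given $\db$. For the compound cases (application, conditionals, $n$-ary operations, records, projections, singleton bags, bag union, comprehensions), I would invoke the induction hypothesis on each subderivation. The key point is that the effect system guarantees purity of subterms: from $\tseq{\cdot}{\tma}{\tya}{\pure}$ and the relevant typing rule we know each subterm is also typable under the empty (or read-only) effect set, so the induction hypothesis applies and delivers the threading of $\db$ through unchanged, allowing the matching effectful rule (e.g.\ \textsc{E-App}, \textsc{E-For}, etc.) to fire with $\db$ consistently carried along. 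For the comprehension case I would appeal to Lemma~\ref{lem:pure-for} structurally, or unfold to the underlying \textsc{E-ForEmpty}/\textsc{E-For} induction.

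For the read-only case, the additional constructs to handle are $\query{\tma}$ and $\get{\tma}$. Rule \textsc{E-Query} directly invokes the read-only relation in its premise, so the result follows from the read-only evaluation of the body; rule \textsc{E-Get} simply returns $\db(\tblvar)$ after evaluating the table reference, and the read-only analogue behaves identically. The write-effect constructs $\dbinsert{\cdot}{\cdot}$, $\dbupdate{\cdot}{\cdot}{\cdot}{\cdot}$, and $\dbdelete{\cdot}{\cdot}$ cannot occur in the derivation because the effect system forces $\effwrite \in \effs$ for those terms, contradicting $\effs \subseteq \effset{\effread}$ (or $\effs = \pure$).

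I expect no substantive obstacle: the whole lemma is essentially a bookkeeping exercise showing that the auxiliary relations are restrictions of the effectful one. The only minor subtlety is ensuring the argument goes through uniformly for both the pure and read-only cases; this is cleanly handled by treating them in parallel, since the read-only relation subsumes the pure one (every pure derivation is also a read-only derivation) and the only additional rule to inspect is \textsc{E-Get}.
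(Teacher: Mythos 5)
Your proposal is correct and matches the paper's approach exactly: the paper proves this lemma simply by induction on the derivation of $\puretmevaltwo{\tma}{\vala}$, which is precisely the induction you carry out (with the added detail of checking each case, which the paper leaves implicit). No discrepancy to report.
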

\begin{proof}
    By induction on the derivation of $\puretmevaltwo{\tma}{\vala}$.
\end{proof}

\begin{lemma}[Correctness of $\tlinq$ translation (pure reduction)]\label{lem:tlinq-trans-pure}
    If $\puretmevaltwo{\tma}{\vala}$, then
    $\puretmevaltwo{\ttrans{\tma}}{\ttrans{\vala}}$.
\end{lemma}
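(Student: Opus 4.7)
The plan is to proceed by induction on the derivation of $\puretmevaltwo{\tma}{\vala}$, exploiting the fact that the translation of $\tlinq$ into $\linq$ is defined compositionally on term structure and that pure reduction cannot witness any database effects.

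First I would dispatch the cases where $\tma$ is a standard $\linq$ term constructor — variables, constants, $\lambda$-abstractions, applications, operators, conditionals, records and projections, bag constructors, multiset union, $\calcwd{for}$-comprehensions, and $\calcwd{now}$. In each such case the translation commutes with the reduction rule: applying the induction hypothesis to each subderivation and then reassembling with the corresponding $\linq$ evaluation rule yields the required conclusion. The value case is immediate, since $\puretmevaltwo{\vala}{\vala}$ reduces to checking $\puretmevaltwo{\ttrans{\vala}}{\ttrans{\vala}}$, which holds by \textsc{E-Val} together with a simple subsidiary check that $\ttrans{\vala}$ is itself a value for every $\tlinq$ value $\vala$.

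The genuinely new work appears in the three temporal accessors $\data{\tma}$, $\timestart{\tma}$, and $\timeend{\tma}$. Consider $\data{\tma}$: pure evaluation gives $\puretmevaltwo{\tma}{\dbrow{\vala_1}{\vala_2}{\vala_3}}$ and returns $\vala_1$. By the definition of the value translation, $\ttrans{\dbrow{\vala_1}{\vala_2}{\vala_3}}$ is the record $\recordterm{\fielddata{\ttrans{\vala_1}}, \fieldstart{\ttrans{\vala_2}}, \fieldend{\ttrans{\vala_3}}}$, so the induction hypothesis yields $\puretmevaltwo{\ttrans{\tma}}{\recordterm{\fielddata{\ttrans{\vala_1}}, \fieldstart{\ttrans{\vala_2}}, \fieldend{\ttrans{\vala_3}}}}$. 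Applying the projection rule then delivers $\puretmevaltwo{\project{\ttrans{\tma}}{\var{data}}}{\ttrans{\vala_1}}$, which is precisely $\puretmevaltwo{\ttrans{\data{\tma}}}{\ttrans{\vala_1}}$. The cases for $\timestart$ and $\timeend$ are symmetric, projecting out the $\field{start}$ and $\field{end}$ fields respectively.

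I expect the main obstacle to be not any individual reduction case but rather the bookkeeping required to verify that the value translation is coherent with the term translation on all value-producing constructors — in particular, that every $\transtimety{\tya}$ value reachable at runtime is indeed translated to a record of the precise shape that the accessor translations project from. Once this coherence observation is discharged, the inductive step for each accessor is immediate by one use of \textsc{E-Project}. A subsidiary task is to handle $\calcwd{query}$ subterms whose bodies live in the read-only fragment; this requires an auxiliary companion lemma for the $\readtmevaltwo{\tma}{\vala}$ relation, provable by an entirely analogous induction that re-uses the present lemma on the pure sub-expressions appearing inside query bodies.
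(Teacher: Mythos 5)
Your proposal is correct and follows essentially the same route as the paper, which proves this lemma by exactly the induction on the derivation of $\puretmevaltwo{\tma}{\vala}$ that you describe (the paper leaves the case analysis implicit). Your elaboration of the accessor cases via the record translation of timestamped rows and \textsc{E-Project}, and your note about the coherence of the value and term translations, are consistent with how the paper uses this lemma in the main translation-correctness theorem.
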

\begin{proof}
    By induction on the derivation of $\puretmevaltwo{\tma}{\vala}$.
\end{proof}

\begin{lemma}\label{lem:tlinq-trans-restrict}
    If
    $\tseq
        {\tyenv, x : \recordty{\ell_i : \tya_i}_{i \in I}}
        {\tma}
        {\tyb}
        {\pure}$ and
    $\puretmevaltwo{\tma}{\vala}$,
    and $
        \tseq
            {\ttrans{\tyenv}, x : (\recordty{\ell_i : \ttrans{\tya_i}}_{i \in I})
                \recordplus \recordty{\ell_j : \tya'_j}_{j \in J}}
            {\ttrans{\tma}}
            {\ttrans{\tya}}
            {\pure}$,
    then
    $\puretmevaltwo[\flatten{\db'}, \timevar]{\restrict{x}{(\ell_i)_{i \in I}}{\ttrans{\tma}}}{\ttrans{\vala}}$.
\end{lemma}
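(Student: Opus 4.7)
The plan is to unfold the macro $\restrict{x}{(\ell_i)_{i \in I}}{\ttrans{\tma}}$, which by definition is $(\fun{x}{\ttrans{\tma}}) \app \etaexp{x}{(\ell_i)_{i \in I}}$, and then step through the pure beta reduction. The argument evaluates (by \textsc{E-Record} and \textsc{E-Project}) to $\recordterm{\ell_i = \project{x}{\ell_i}}_{i \in I}$, and under the intended interpretation the free occurrence of $x$ in this expression refers to the outer variable of the extended record type $(\recordty{\ell_i : \ttrans{\tya_i}}_{i \in I}) \recordplus \recordty{\ell_j : \tya'_j}_{j \in J}$. Assuming $x$ is bound to an extended value of the shape $\ttrans{\valb_x} \recordplus \valb'$, where $\valb_x$ is the source-level binding of $x$ of type $\recordty{\ell_i : \tya_i}_{i \in I}$, the projections recover exactly $\ttrans{\valb_x}$.

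After the beta step, the obligation reduces to showing that $\subst{\ttrans{\tma}}{\ttrans{\valb_x}}{x}$ evaluates to $\ttrans{\vala}$ under $\flatten{\db'}, \timevar$. I would close this with three ingredients. First, a routine substitution-commutation lemma establishing $\ttrans{\subst{\tma}{\valb_x}{x}} = \subst{\ttrans{\tma}}{\ttrans{\valb_x}}{x}$, proved by induction on $\tma$. Second, Lemma~\ref{lem:tlinq-trans-pure} lifts the given pure evaluation $\puretmevaltwo{\subst{\tma}{\valb_x}{x}}{\vala}$ to $\puretmevaltwo{\ttrans{\subst{\tma}{\valb_x}{x}}}{\ttrans{\vala}}$. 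Third, Lemma~\ref{lem:pure-lift} promotes this pure evaluation to one under the database $\flatten{\db'}$ at time $\timevar$, yielding the desired conclusion.

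The main obstacle is pinning down precisely what ``the same $x$'' means across the translation, since $x$ appears free with different types on the source and target sides. The statement of the lemma is implicitly parameterised by a compatible outer binding: the target-side $x$ must be the translated source-side value extended with additional fields named $\ell_j$, and the projection performed by $\eta$-expansion must discard those extra fields cleanly. Once this invariant is made precise (either via an environment-based reformulation or by restating the lemma with an explicit substitution for $x$), the remainder of the argument is a straightforward composition of beta reduction, substitution commutation, and the previously established translation-correctness lemmas.
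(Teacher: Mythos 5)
Your proposal is correct and follows essentially the same route as the paper, whose entire proof is the one-line appeal ``follows from the definition of $\mkwd{restrict}$ and Lemma~\ref{lem:tlinq-trans-pure}''; you simply spell out the beta/eta unfolding, the substitution commutation, and the lifting step that this one-liner leaves implicit. Your observation that the free occurrence of $x$ needs an explicit closing substitution (the target-side binding being the translated source value extended with the timestamp fields) is a fair point about the paper's informal statement, and your resolution matches how the lemma is actually used in the proof of the main translation theorem.
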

\begin{proof}
    Follows from the definition of \mkwd{restrict} and Lemma~\ref{lem:tlinq-trans-pure}.
\end{proof}

\begin{lemma}[Correctness of $\vlinq$ translation (pure reduction)]\label{lem:vlinq-trans-pure}
    If $\puretmevaltwo{\tma}{\vala}$, then
    $\puretmevaltwo{\vtrans{\tma}}{\vtrans{\vala}}$.
\end{lemma}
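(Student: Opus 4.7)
The plan is to proceed by induction on the derivation of $\puretmevaltwo{\tma}{\vala}$, mirroring the structure of the analogous result for $\tlinq$ (Lemma~\ref{lem:tlinq-trans-pure}). Since the pure reduction relation only fires on terms typable under an empty effect set, no database modification constructs can appear; hence the cases to consider are limited to the standard $\lambda$-calculus and collection constructs together with the $\vlinq$-specific row construction $\dbrow{\tma_1}{\tma_2}{\tma_3}$ and the accessors $\data{\tma}$, $\timestart{\tma}$, and $\timeend{\tma}$.

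For the purely functional cases (variables, constants, abstraction, application, operations, conditionals, records, projections, bag constructors, unions, and for-comprehensions), the translation $\vtrans{\cdot}$ is defined homomorphically, so each case reduces to applying the inductive hypothesis to the premises and re-assembling the derivation with the corresponding evaluation rule on the translated side. The only mild subtlety is for comprehensions, where one may want to appeal to Lemma~\ref{lem:pure-for}, but since we are inducting directly on the big-step derivation, the standard clause for \textsc{E-For} suffices. The $\calcwd{now}$ case is immediate as $\vtrans{\now} = \now$ and the timestamp parameter $\timevar$ is shared between the source and translated reductions.

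The $\vlinq$-specific cases are where the actual content of the lemma lies, but they are still straightforward. For $\dbrow{\tma_1}{\tma_2}{\tma_3}$, rule \textsc{EV-Row} gives pure reductions of $\tma_1$, $\tma_2$, $\tma_3$ to $\vala_1$, $\vala_2$, $\vala_3$; the translation produces the record $\recordterm{\fielddata{\vtrans{\tma_1}}, \fieldstart{\vtrans{\tma_2}}, \fieldend{\vtrans{\tma_3}}}$, and applying the induction hypothesis componentwise together with the record evaluation rule yields the desired value $\vtrans{\dbrow{\vala_1}{\vala_2}{\vala_3}} = \recordterm{\fielddata{\vtrans{\vala_1}}, \fieldstart{\vtrans{\vala_2}}, \fieldend{\vtrans{\vala_3}}}$. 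The three accessors translate to projections on the corresponding record field, so each case is dispatched by the IH on the subterm followed by one use of the projection rule.

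The main (very minor) obstacle is bookkeeping: ensuring the translation commutes with evaluation in the presence of the $\eta$-expanded nested/flat representation used for timestamped values and that the values produced by the translated reduction are syntactically identified with $\vtrans{\vala}$ as defined on values. Because pure reductions never touch $\db$, well-formedness is irrelevant here, and no dynamic check on timestamps is needed; this keeps the proof free of the partiality issues discussed in the $\vlinq$ metatheory section.
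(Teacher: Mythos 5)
Your proposal is correct and takes essentially the same approach as the paper, which proves Lemma~\ref{lem:vlinq-trans-pure} by exactly this induction on the derivation of $\puretmevaltwo{\tma}{\vala}$ (the paper simply states the induction without spelling out the cases). Your case analysis, including the treatment of \textsc{EV-Row} and the accessors via the record and projection rules of $\linq$, fills in the details consistently with the paper's definitions.
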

\begin{proof}
    By induction on the derivation of $\puretmevaltwo{\tma}{\vala}$.
\end{proof}

\begin{lemma}\label{lem:vlinq-trans-restrict}
    If
    $\tseq
        {\tyenv, x : \recordty{\ell_i : \tya_i}_{i \in I}}
        {\tma}
        {\tyb}
        {\pure}$ and
    $\puretmevaltwo{\tma}{\vala}$,
    and $
        \tseq
            {\vtrans{\tyenv}, x : (\recordty{\ell_i : \vtrans{\tya_i}}_{i \in I})
                \recordplus \recordty{\ell_j : \tya'_j}_{j \in J}}
            {\vtrans{\tma}}
            {\vtrans{\tya}}
            {\pure}$,
    then
    $\puretmevaltwo[\flatten{\db'}, \timevar]{\restrict{x}{(\ell_i)_{i \in I}}{\vtrans{\tma}}}{\vtrans{\vala}}$.
\end{lemma}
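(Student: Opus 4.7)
The plan is to mirror the proof of the analogous $\tlinq$ lemma (Lemma~\ref{lem:tlinq-trans-restrict}): unfold the definition of $\mkwd{restrict}$, run the resulting $\beta$-reduction and $\eta$-projection, and then appeal to the pure-reduction correctness of the translation (Lemma~\ref{lem:vlinq-trans-pure}), transported to the parametrised judgement by Lemma~\ref{lem:pure-lift}. Since $\tma$ (and hence $\vtrans{\tma}$) is pure, the ambient database $\flatten{\db'}$ and time $\timevar$ play no operational role in the evaluation; they are carried through purely as decoration.

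Unfolding yields
$\restrict{x}{(\ell_i)_{i \in I}}{\vtrans{\tma}} = (\fun{x}{\vtrans{\tma}}) \app \etaexp{x}{(\ell_i)_{i \in I}}$, where $\etaexp{x}{(\ell_i)_{i \in I}} = \recordterm{\ell_i = \project{x}{\ell_i}}_{i \in I}$. By the typing hypothesis on the $\linq$ side, the value $\vala'$ that the ambient environment supplies for $x$ decomposes as $\vala' = \vtrans{\vala_0} \recordplus \valb$, where $\vala_0$ has type $\recordty{\ell_i : \tya_i}_{i \in I}$ and $\valb$ carries the extra fields $\ell_j$ (this split is forced by the compositionality of the type translation on records). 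Projecting the fields $\ell_i$ out of $\vala'$ therefore yields exactly $\vtrans{\vala_0}$, so the $\eta$-expansion evaluates to $\vtrans{\vala_0}$ and a single $\beta$-step reduces the whole term to $\vtrans{\tma}\{\vtrans{\vala_0}/x\}$.

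It remains to match this with $\vtrans{\vala}$. Instantiating the pure-reduction hypothesis at $x = \vala_0$ gives $\puretmevaltwo{\tma\{\vala_0/x\}}{\vala}$; Lemma~\ref{lem:vlinq-trans-pure} then produces $\puretmevaltwo{\vtrans{\tma\{\vala_0/x\}}}{\vtrans{\vala}}$, and a routine commutation of the translation with substitution, $\vtrans{\tma\{\vala_0/x\}} = \vtrans{\tma}\{\vtrans{\vala_0}/x\}$, identifies this with the reduced term. Finally Lemma~\ref{lem:pure-lift} transports the pure judgement into $\puretmevaltwo[\flatten{\db'}, \timevar]{\cdot}{\cdot}$, closing the argument.

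The only step that requires any genuine work is the translation--substitution commutation lemma $\vtrans{\tma\{\vala_0/x\}} = \vtrans{\tma}\{\vtrans{\vala_0}/x\}$. This is a standard induction on $\tma$, immediate in the variable and constant cases and routine in the others because every clause of $\vtrans{\cdot}$ either preserves free variables directly or introduces only fresh binders (via $\fun{}{}$ and $\calcwd{for}$ comprehensions), so Barendregt's convention avoids capture. I expect this to be the main, and indeed only, non-bureaucratic obstacle; everything else is bookkeeping on the shape of the $\eta$-expansion.
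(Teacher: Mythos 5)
Your proposal is correct and follows the same route as the paper, whose proof of this lemma is just the one-line observation that it ``follows from the definition of $\mkwd{restrict}$ and Lemma~\ref{lem:vlinq-trans-pure}''; you have simply spelled out the $\eta$-projection, $\beta$-step, and the translation--substitution commutation that the paper leaves implicit.
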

\begin{proof}
    Follows from the definition of \mkwd{restrict} and Lemma~\ref{lem:vlinq-trans-pure}.
\end{proof}

\begin{lemma}\label{lem:vlinq-base-trans}
    If $\tseq{\tyenv}{\vala}{\tya}{\pure}$ where $\tya = \basety$ or $\tya =
    \recordty{\ell_i = \basety_i}_i$, then $\vtrans{\vala} = \vala$.
\end{lemma}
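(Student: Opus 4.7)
The plan is to proceed by a straightforward case analysis on the structure of the value $\vala$, guided by its type $\tya$. Since the translation $\vtrans{\cdot}$ on values is defined compositionally over the syntactic constructors and only differs from the identity on constructors that introduce genuinely ``temporal'' content (namely timestamped rows $\dbrow{\vala_1}{\vala_2}{\vala_3}$), it suffices to show that no such constructors can occur inside a value of base type or base record type.

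First I would split on $\tya$. In the case $\tya = \basety$, by inversion on the value typing rules the only value form inhabiting a base type is a constant $c$ (functions, bags, records, and timestamped rows all have non-base types). Since the translation on basic $\lambda$-calculus terms is the identity on constants, we have $\vtrans{c} = c$, which closes this case. In the case $\tya = \recordty{\ell_i : \basety_i}_i$, inversion on value typing forces $\vala = \recordterm{\ell_i = \vala_i}_i$ with $\tseq{\tyenv}{\vala_i}{\basety_i}{\pure}$ for each $i$. Applying the first case to each $\vala_i$ gives $\vtrans{\vala_i} = \vala_i$. Since the translation acts compositionally on record values, we conclude
\[
  \vtrans{\recordterm{\ell_i = \vala_i}_i}
  = \recordterm{\ell_i = \vtrans{\vala_i}}_i
  = \recordterm{\ell_i = \vala_i}_i = \vala.
\]

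I do not anticipate any substantive obstacle. The only thing to be careful about is ruling out timestamped row values: these are typed as $\validtimety{\tya}$, which is neither a base type nor a base record type, so they cannot intrude under the hypothesis. A second minor point is that the statement as printed writes $\recordty{\ell_i = \basety_i}_i$ where one would expect the record \emph{type} $\recordty{\ell_i : \basety_i}_i$; I would simply interpret it as the latter. The result can in principle be strengthened to an induction on the structure of $\vala$ directly, but the two-case split above is the cleanest presentation and matches the way the lemma will be invoked elsewhere (e.g., when reasoning about translations of predicates and inserted rows whose fields are of base type).
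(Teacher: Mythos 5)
Your proposal is correct and follows essentially the same route as the paper, which proves the lemma by induction on the typing derivation $\tseq{\tyenv}{\vala}{\tya}{\pure}$; your case split on the type followed by inversion on the value typing rules is just a slightly more explicit rendering of that induction. The observations about ruling out timestamped rows and reading $\recordty{\ell_i = \basety_i}_i$ as the record type $\recordty{\ell_i : \basety_i}_i$ are both apt.
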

\begin{proof}
    By induction on the derivation of $\tseq{\tyenv}{\vala}{\tya}{\pure}$.
\end{proof}

\begin{proposition}[Type correctness of $\tlinq$ translation]
    \label{prop:tlinq-type-correct}
    If $\tseq{\tyenv}{\tma}{\tya}{\effs}$ in $\tlinq$, then
    $\tseq{\ttrans{\tyenv}}{\ttrans{\tma}}{\ttrans{\tya}}{\effs}$ in $\linq$.
\end{proposition}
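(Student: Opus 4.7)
The plan is to proceed by induction on the derivation of $\tseq{\tyenv}{\tma}{\tya}{\effs}$ in $\tlinq$. Since $\tlinq$ inherits all the standard term formers of $\linq$ (variables, constants, abstraction, application, conditionals, operations, records, projections, bag constructors, comprehensions, $\calcwd{query}$, $\calcwd{now}$), and the translation is homomorphic on these, each such case reduces to applying the inductive hypothesis to the subterms and invoking the corresponding $\linq$ typing rule. The type translation commutes with the relevant type constructors (e.g. $\ttrans{\tyfun{\tya}{\tyb}{\effs}} = \tyfun{\ttrans{\tya}}{\ttrans{\tyb}}{\effs}$ and $\ttrans{\bagty{\tya}} = \bagty{\ttrans{\tya}}$), so effect sets and types line up automatically.

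The interesting cases are the temporal ones. For the accessors $\data{\tma}$, $\timestart{\tma}$, $\timeend{\tma}$, the translation produces record projections; by the IH, $\ttrans{\tma}$ has type $\recordty{\tyfielddata{\ttrans{\tya}}, \tyfieldstart{\timety}, \tyfieldend{\timety}}$, so each projection has the required type with the same effects. For $\getann{(\ell_i:\tya_i)_i}{\tma}$, I would first establish a small auxiliary fact that if $x : \recordty{\ell_i : \tya_i}_{i}\rowplus\recordty{\ell_j:\tya_j'}_{j}$ is in context, then $\etaexp{x}{\seq{\ell}}$ has type $\recordty{\ell_i:\tya_i}_{i}$; the comprehension body then type-checks as a singleton bag whose element has the record shape corresponding to $\ttrans{\transtimety{\recordty{\ell_i:\tya_i}_i}}$, and the surrounding $\calcwd{query}$ is admissible since its body has a query type.

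For $\dbinsert$, $\dbdelete$, and $\dbupdate$, the key auxiliary lemma is that $\mkwd{restrict}$ preserves typing: if $\tseq{\tyenv, x : \recordty{\ell_i:\tya_i}_{i\in I}}{\tma}{\tyb}{\effs}$ in $\tlinq$ and $x$ is rebound in $\linq$ to the flat record type $\recordty{\ell_i:\ttrans{\tya_i}}_{i\in I}\rowplus\recordty{\tyfieldstart{\timety},\tyfieldend{\timety}}$, then $\restrict{x}{(\ell_i)_{i\in I}}{\ttrans{\tma}}$ still has type $\ttrans{\tyb}$ with effects $\effs$. This follows by unfolding the definition $(\fun{x}{\ttrans{\tma}})\app \etaexp{x}{\seq{\ell}}$, using the eta-expansion fact above to give the argument the type expected by the lambda, and applying the IH to $\ttrans{\tma}$ under the $\tlinq$ context. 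With this lemma in hand, the $\dbdelete$ case checks directly: the predicate is well-typed via \textsf{restrict}, $\iscurrent{x}$ is Boolean, and the update sets only the $\var{end}$ field to a $\timety$-typed value.

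The main obstacle will be the $\dbupdate$ case, because its translation decomposes into three stages (a nested query computing the updated rows, an update closing off the existing rows, and an insert materialising the new rows), each of which must independently be shown to respect types. I will have to verify carefully that: (i) the query's result type matches $\bagty{\recordty{\ell_i:\ttrans{\tya_i}}_{i\in I}\rowplus\recordty{\tyfieldstart{\timety},\tyfieldend{\timety}}}$, using the restrict lemma for both the kept fields $\ell_i$ ($i\in I\setminus J$) and the updated fields $\ell_j$ ($j\in J$); (ii) that this record type coincides with what \textsc{T-Insert} expects given the table annotation; and (iii) that the intervening update operation's field list and predicate type against the $\linq$ \textsc{T-Update} rule. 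Sequencing the three stages with the sugar $\tma;\tmb$ then yields the required effect set $\effset{\effwrite}\cup\effs$, closing the case.
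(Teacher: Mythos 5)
Your proposal is correct and follows exactly the route the paper takes: the paper's own proof of Proposition~\ref{prop:tlinq-type-correct} is simply ``by induction on the derivation of $\tseq{\tyenv}{\tma}{\tya}{\effs}$'' with no further detail, so your elaboration of the homomorphic cases, the eta-expansion/\mkwd{restrict} typing fact, and the three-stage $\calcwd{update}$ case is a faithful (and considerably more explicit) unfolding of the same argument.
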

\begin{proof}
    By induction on the derivation of $\tseq{\tyenv}{\tma}{\tya}{\effs}$.
\end{proof}

\begin{proposition}[Type correctness of $\vlinq$ translation]
    \label{prop:vlinq-type-correct}
    If $\tseq{\tyenv}{\tma}{\tya}{\effs}$ in $\vlinq$, then
    $\tseq{\vtrans{\tyenv}}{\vtrans{\tma}}{\vtrans{\tya}}{\effs}$ in $\linq$.
\end{proposition}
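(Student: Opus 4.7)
The plan is to proceed by induction on the derivation of $\tseq{\tyenv}{\tma}{\tya}{\effs}$. The generic $\lambda$-calculus, record, conditional, and bag constructors translate homomorphically; in each of these cases the induction hypothesis applies immediately, and effects are preserved because no new operations are introduced. The valid-time row constructors $\dbrow{\tma_1}{\tma_2}{\tma_3}$, accessors $\data{\tma}$, $\timestart{\tma}$, $\timeend{\tma}$, and the retyped $\calcwd{get}$ translate to plain record formation and projection against $\recordty{\tyfielddata{\vtrans{\tya}}, \tyfieldstart{\timety}, \tyfieldend{\timety}}$, so these cases discharge by the induction hypothesis together with the standard $\linq$ record, projection, and $\calcwd{get}$ rules.

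Before tackling the modifications, I would establish two auxiliary typing lemmas that encapsulate the repeated pattern in Figure~\ref{fig:vlinq-translation}. The \emph{restrict lemma} states that if $\tseq{\tyenv, x : \tya}{\tma}{\tyb}{\effs}$ with $\tya = \recordty{\ell_i:\tya_i}_{i\in I}$, then $\tseq{\vtrans{\tyenv}, x : \vtrans{\tya}\recordplus\recordty{\tyfieldstart{\timety},\tyfieldend{\timety}}}{\restrict{x}{\set{\ell_i}_{i\in I}}{\vtrans{\tma}}}{\vtrans{\tyb}}{\effs}$; this follows from the induction hypothesis on $\tma$, the fact that $\etaexp{x}{\set{\ell_i}_{i\in I}}$ has type $\vtrans{\tya}$, and \textsc{T-App}. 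The \emph{lift lemma} is analogous for judgements $\tseq{\tyenv, x : \validtimety{\tya}}{\tma}{\tyb}{\effs}$, using the assembled nested record $\recordterm{\fielddata{\etaexp{x}{\set{\ell_i}_i}}, \fieldstart{\project{x}{\var{start}}}, \fieldend{\project{x}{\var{end}}}}$ as the argument under the $\lambda$.

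With these lemmas in hand, the remaining cases reduce to mechanical type-checking. Sequenced insertions and the two non-sequenced variants dispatch directly: the outer $\linq$ modification operation typechecks against the flat record type determined by the schema, with predicates and update sub-terms supplied by the lift and restrict lemmas, and the effect $\effset{\effwrite} \cup \effs$ transfers by construction. The main obstacle I anticipate is the sequenced update, whose translation stitches together two auxiliary comprehensions $\var{lRows}$ and $\var{rRows}$ (each of bag type over the flat row shape), one $\calcwd{update}$ that must simultaneously assign the updated fields $\ell_j$ and the new $\var{start}$, $\var{end}$ timestamps using $\greatestzero$ and $\leastzero$, and two subsequent $\calcwd{insert}$s matching the same flat shape. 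Each sub-obligation discharges via the restrict lemma together with the typing of $\greatestzero, \leastzero : \timety \to \timety \to \timety$; the sequenced delete case is essentially identical but simpler, omitting the field-update obligations.
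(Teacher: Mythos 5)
Your proposal follows the same route as the paper, which proves this proposition simply ``by induction on the derivation of $\tseq{\tyenv}{\tma}{\tya}{\effs}$'' without spelling out any cases. Your elaboration --- homomorphic cases, typing versions of the restrict/lift lemmas, and the case analysis for the sequenced modifications --- is a correct and faithful unfolding of that induction, so there is nothing to object to.
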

\begin{proof}
    By induction on the derivation of $\tseq{\tyenv}{\tma}{\tya}{\effs}$.
\end{proof}

\ttranscorrect*
\begin{proof}
    By induction on the derivation of $\ttmevaltwo{\tma}{\vala}$.
    We show the cases for the database-relevant terms.

    \begin{proofcase}{Accessor functions}
        Let us show the case for $\calcwd{data}$; the others are similar.

        Assumption:
        \begin{mathpar}
            \inferrule
            { \ttmevaltwo
                {\tma}
                {\retpair{\dbrow{\vala_1}{\vala_2}{\vala_3}}{\db'}}
            }
            { \ttmevaltwo
                {\data{\tma}}
                {\retpair{\vala_1}{\db'}}
            }
        \end{mathpar}

        Translation:
        \[
            \project{\ttrans{\tma}}{\var{data}}
        \]

        By the IH:
            $\tmevaltwo
                [\flatten{\db}, \timevar]
                {\ttrans{\tma}}
                { \recordterm
                    {\fielddata{\ttrans{\vala_1}},
                     \fieldstart{\ttrans{\vala_2}},
                     \fieldend{\ttrans{\vala_3}}
                    }
                }
            $

        Evaluating in $\linq$:

        \[
            \tmevaltwo
            [\flatten{\db}, \timevar]
            {\project
                {\recordterm{
                    \fielddata{\ttrans{\vala_1}},
                    \fieldstart{\ttrans{\vala_2}},
                    \fieldend{\ttrans{\vala_3}}}}
                {\var{data}}
            }
            {
                \retpair{\ttrans{\vala_1}}{\flatten{\db'}}
            }
        \]

        as required.
    \end{proofcase}

    \begin{proofcase}{get}
        Assumption:

        \begin{mathpar}
            \inferrule
            { \ttmevaltwo{\tma}{\retpair{\tblvar}{\db'}} }
            { \ttmevaltwo{\get{\tma}}{\retpair{\db'(\tblvar)}{\db'}} }
        \end{mathpar}

        Translation:
        \[
            \blt
           \forcomp{x}{\get{\tvaltrans{\tma}}} \\
            \quad \bag{\recordterm{
                \fielddata{(\ell_i = \project{x}{\ell_i})_i},
                \fieldstart{\project{x}{\field{start}}},
                \fieldend{\project{x}{\field{end}}}}}
                \el
        \]

        Suppose $\db(\tblvar) =
        \bag{
            \dbrow{
                \recordterm{\ell_i = \vala_{1_i}}_{i \in 1..m}
            }{\valb_{1_{s}}}{\valb_{1_{e}}},
            \ldots,
            \dbrow
            {
                \recordterm{\ell_i = \vala_{n_i}}_{i \in 1..m}
            }
            {\valb_{n_{s}}}
            {\valb_{n_{e}}}}$.

        By the IH,
            $\tmevaltwo
                [\flatten{\db}, \timevar]
                {\ttrans{\tma}}
                {\retpair{\tblvar}{\flatten{\db'}}}$

        Then,
        \[
            \blt
            \flatten{\db'}(\tablevar) = \\
            \quad
        \lbag \recordterm
            {\ell_1 = \ttrans{\vala_{1_1}},
             \ldots,
             \ell_m = \ttrans{\vala_{1_m}},
             \fieldstart{\ttrans{\valb_{1_{s}}}},
             \fieldend{\ttrans{\valb_{1_{e}}}}}, \\
        \qquad \ldots, \\
        \quad \recordterm
            {\ell_1 = \ttrans{\vala_{n_1}},
             \ldots,
             \ell_m = \ttrans{\vala_{n_m}},
             \fieldstart{\ttrans{\valb_{n_{s}}}},
             \fieldend{\ttrans{\valb_{n_{e}}}}} \rbag
             \el
        \]

        In $\linq$, we have:
        \[
            \tmevaltwo[\flatten{\db'}, \timevar]{
                \get{\tablevar}
            }
            {
                \bag{\recordterm
                    {\ell_1 = \ttrans{\vala_{1_1}},
                     \ldots,
                     \ell_m = \ttrans{\vala_{1_m}},
                     \fieldstart{\ttrans{\valb_{1_{s}}}},
                     \fieldend{\ttrans{\valb_{1_{e}}}}},
                \ldots,
                \recordterm
                    {\ell_1 = \ttrans{\vala_{n_1}},
                     \ldots,
                     \ell_m = \ttrans{\vala_{n_m}},
                     \fieldstart{\ttrans{\valb_{n_{s}}}},
                     \fieldend{\ttrans{\valb_{n_{e}}}}}}
            }
        \]

    By Lemmas~\ref{lem:pure-for} and~\ref{lem:pure-lift}
    {\small
    \[
        \tmevaltwo{
            \bl
            \forcomp{x}{\get{\tvaltrans{\tblvar}}} \\
                \quad \bag{\recordterm{
                    \fielddata{(\ell_i = \project{x}{\ell_i})_i},
                    \fieldstart{\project{x}{\field{start}}},
                    \fieldend{\project{x}{\field{end}}}}}
            \el
        }
        {
            \retpair{
                \blt
                \lbag \recordterm
                    {\fielddata{
                        \recordterm{
                            \ell_1 = \ttrans{\vala_{1_1}},
                            \ldots,
                            \ell_m = \ttrans{\vala_{1_m}}
                        }
                     },
                     \fieldstart{\ttrans{\valb_{1_{s}}}},
                     \fieldend{\ttrans{\valb_{1_{e}}}}}, \\
                \ldots,
                \\
                    \recordterm
                    {\fielddata{
                        \recordterm{
                            \ell_1 = \ttrans{\vala_{n_1}},
                            \ldots,
                            \ell_m = \ttrans{\vala_{n_m}}
                        }},
                     \fieldstart{\ttrans{\valb_{n_{s}}}},
                     \fieldend{\ttrans{\valb_{n_{e}}}}}
            \rbag
            }
            {
                \flatten{\db'}
            }
        }
        \el
    \]
    }

    as required.
    \end{proofcase}

    \begin{proofcase}{insert}
        \begin{mathpar}
            \inferrule
            {
                \ttmevaltwo{\tma}{\retpair{\tblvar}{\db'}} \\\\
                \puretmevaltwo{\tmb}{\vala} \\
                \mathit{vs} = \bag{\dbrow{v}{\timevar}{\forever} \mid v \in \vala} \\\\
                \db'' = \extendenv{\db'}{\tblvar}{\db'(\tblvar) \bagunion \mathit{vs}}
            }
            {
              \ttmevaltwo{\dbinsert{\tma}{\tmb}}{\retpair{()}{\db''}}
            }
        \end{mathpar}

        Translation:
        \[
            \ttmtrans{\dbinsertann{\tma}{\tmb}} = \\
              \quad
                {
                \blt
                  \effletone{\var{rows}} \\
                  \quad \forcomptwo{x}{\ttmtrans{\tmb}} \\
                  \qquad
                    \bag{
                        \recordplustwo
                            {\etaexp{x}{\seq{\ell}}}
                            {\recordterm{
                                \fieldstart{\now},
                                \fieldend{\forever}
                            }}
                    } \\
                    \calcwd{in} \\
                  \dbinsert{\ttmtrans{\tma}}{\var{rows}}
                \el
                }
    \]

        By the typing rules and evaluation rule, $\vala$ must be some bag
        $\bag{\recordterm{\ell_{1_i} = \valb_{1_i}}_i, \ldots,
            \recordterm{\ell_{n_i} = \valb_{n_i}}_i}$

        By Lemma~\ref{lem:tlinq-trans-pure}
        $\puretmevaltwo{\ttrans{\tmb}}{\ttrans{\vala}}$ with
        $\ttrans{\vala} = \bag{\recordterm{\ell_{1_i} = \ttrans{\valb_{1_i}}}_i, \ldots,
            \recordterm{\ell_{n_i} = \ttrans{\valb_{n_i}}}_i}$.

        By Lemmas~\ref{lem:pure-for} and~\ref{lem:tlinq-trans-pure}, the
        $\calcwd{for}$ comprehension evaluates to
        $\ttrans{\vala} =
        \bag{\recordterm{\ell_{1_i} = \ttrans{\valb_{1_i}},
            \fieldstart{\timevar}, \fieldend{\forever}}_i, \ldots,
            \recordterm{\ell_{n_i} = \ttrans{\valb_{n_i}},
            \fieldstart{\timevar}, \fieldend{\forever}}_i}$. Let us call this
            value
            $\var{rows}$.

        By the IH,
        $\tmevaltwo
            [\flatten{\db}, \timevar]
            {\ttrans{\tma}}{\retpair{\tblvar}{\flatten{\db'}}}$.

        Thus, we evaluate
        \[
            \tmevaltwo
                [\flatten{\db'}, \timevar]
                {\dbinsert
                    {\tblvar}
                    {\var{rows}}
                }
                {
                    \retpair
                        {()}
                        {\extendenv{\flatten{\db'}}{\tblvar}{
                                \flatten{\db'} \denotbagunion \var{rows}}}
                }
        \]

        noting that $\ttrans{\db''} = \flatten{\db'} \denotbagunion \var{rows}$,
        as required.
    \end{proofcase}

    \begin{proofcase}{delete}

        Assumption:
        \begin{mathpar}
            \inferrule
            {
                \ttmevaltwo{\tma}{\retpair{\tblvar}{\db'}} \\
                \db'' =
                    {\extendenv
                        {\db'}
                        {\tblvar}
                        { \bag{\mkwd{del}(d) \midspace d \in \db'(\tblvar)} }
                    }
                \\\\
                \mkwd{del}(\dbrow{\var{data}}{\var{start}}{\var{end}}) =
                {
                    \begin{cases}
                        \dbrow{\var{data}}{\var{start}}{\timevar}
                         & \text{if } \var{end} = \forever \text{
                            and } \puretmevaltwo{\subst{\tmb}{\var{data}}{x}}{\ttrue}
                            \\
                        \dbrow{\var{data}}{\var{start}}{\var{end}} & \text{otherwise}
                    \end{cases}
                }
            }
            {
                \ttmevaltwo
                    {\dbdelete{x}{\tma}{\tmb}}
                    {\retpair
                        {()}
                        {\db''}}
            }
        \end{mathpar}

        Translation:
        \[
            \bl
             \ttmtrans{\dbdelete{x}{\tma}{\tmb}} = \\
                \quad {
                  \blt
                  \dbupdatetwo{x}{\tvaltrans{\tma}} \\
                  \quad \calcwhereone{((\fun{x}{\ttmtrans{\tmb}}) \app \recordterm{\ell_i = \project{x}{\ell_i}}_i) \wedge \project{x}{\var{end}} = \forever)} \\
                  \quad \calcwd{set} \: (\fieldend{\now})
                  \el
                }
            \el
        \]

        By the IH,
            $\tmevaltwo
                [\flatten{\db}, \timevar]
                {\ttrans{\tma}}
                {\retpair{\tblvar}{\flatten{\db'}}}$

        Suppose $\db'(\tblvar) =
        \bag{
            \dbrow{
                \recordterm{\ell_i = \vala_{1_i}}_{i \in 1..m}
            }{\valb_{1_{s}}}{\valb_{1_{e}}},
            \ldots,
            \dbrow
            {
                \recordterm{\ell_i = \vala_{n_i}}_{i \in 1..m}
            }
            {\valb_{n_{s}}}
            {\valb_{n_{e}}}}$.

        Then,
        \[
            \blt
            \flatten{\db'}(\tablevar) = \\
            \quad
        \lbag \recordterm
            {\ell_1 = \ttrans{\vala_{1_1}},
             \ldots,
             \ell_m = \ttrans{\vala_{1_m}},
             \fieldstart{\ttrans{\valb_{1_{s}}}},
             \fieldend{\ttrans{\valb_{1_{e}}}}}, \\
        \qquad \ldots, \\
        \quad \recordterm
            {\ell_1 = \ttrans{\vala_{n_1}},
             \ldots,
             \ell_m = \ttrans{\vala_{n_m}},
             \fieldstart{\ttrans{\valb_{n_{s}}}},
             \fieldend{\ttrans{\valb_{n_{e}}}}} \rbag
             \el
        \]

        Recall the definition of $\calcwd{update}$ in $\linq$:

        \begin{mathpar}
            \inferrule
            {
                \tmevaltwo{\tmc}{\retpair{\tblvar}{\db_1}} \\
                        \db_2 =
                            {\extendenv
                                {\db_1}
                                {\tblvar}
                                { \bag{\mkwd{upd}(v) \mid v \in \db_1(\tblvar)}}
                            }
                        \\\\
                        {
                        \mkwd{upd}(v) =
                        {
                            \begin{cases}
                                \recordwithtwo
                                    {v}
                                    {\seq{\ell = \valb}}
                                    & \text{ if }
                                    \puretmevaltwo{\tma \{ v / x \}}{\ttrue}
                                    \text{ and }
                                    (\puretmevaltwo{\subst{\tmb_i}{v}{x}}{\valb_i})_i \\
                                v & \text{ if } \puretmevaltwo{\tma \{ v / x \}}{\ffalse}
                            \end{cases}
                        }
                        }
            }
            {
                \tmevaltwo
                    {\dbupdate{x}{\tmc}{\tma}{(\seq{\ell = \tmb})}}
                    {\retpair{()}{\db_2}}
            }
        \end{mathpar}

        Let us reason on each row individually.
        Say we have a row
        $\dbrow{\recordterm{\ell_i = \vala_i}}{\valb_{\var{start}}}{\valb_{\var{end}}}$
        in $\db'$.
        Thus, in $\flatten{\db'}$
        that row will be
        $\recordterm{\ell_i = \ttrans{\vala_i}}_i \recordplus
        \recordterm{
             \fieldstart{\ttrans{\valb_{\var{start}}}},
             \fieldend{\ttrans{\valb_{\var{end}}}}}$; let us call this $x$.

        Now note that
        $\puretmevaltwo
            {\recordterm{\ell_i = \project{x}{\ell_i}}_i}
            {\recordterm{\ell_i = \ttrans{\vala_i}}_i}
        $
        which is equal to the translation of the data component of the record in
        the transaction time database.

        Therefore by Lemma~\ref{lem:tlinq-trans-pure}, we have that
        $\puretmevaltwo{\ttrans{\tmb} \{ \recordterm{\ell_i =
        \ttrans{\vala_i}}_i / x \}}{\ttrue}$
        whenever
        $\puretmevaltwo{\tmb \{ \var{data} / x \}}{\ttrue}$.
        Since $\ttrans{\forever} = \forever$ we have that $\var{end} =
        \forever$ if $\project{x}{\var{end}} = \forever$. Thus the
        $\calcwd{update}$ affects the same rows.

        We have that $\puretmevaltwo{\now}{\timevar}$ and thus
        $\recordwith{\recordterm{\ell_i = \ttrans{\vala_i}}_i
            \recordplus
            \recordterm{
                 \fieldstart{\ttrans{\valb_{\var{start}}}},
                 \fieldend{\ttrans{\valb_{\var{end}}}}}
        }{\var{end}}{\timevar}
        =
        \flatten{\dbrow{\recordterm{\ell_i = \ttrans{\vala_i}}_i}{\valb_{\var{start}}}{\timevar}}
        $
        for each affected record, as required.
    \end{proofcase}

    \begin{proofcase}{update}
        Assumption:

        \begin{mathpar}
            \inferrule
            {
               \ttmevaltwo{\tmc}{\retpair{\tblvar}{\db'}} \\
               \db'' = \extendenv{\db'}{t}{\flatten{\bag{\mkwd{u}(v) \mid v \in \db'(t)}}} \\
               \mkwd{u}(\dbrow{\mathit{data}}{\mathit{start}}{\mathit{end}}) =
                    {
                    \begin{cases}
                        {\blt
                           \bag{\dbrow
                               {\var{data}}
                               {\var{start}}
                               {\timevar},
                               \dbrow
                               {
                                   \recordwithtwo
                                   {\mathit{data}}
                                   {\seq{\ell = \vala'}}
                               }
                               {\timevar}
                               {\forever}
                           }
                           \el}
                       &
                               {
                                 \blt
                                     \text{if }
                                     \puretmevaltwo{\subst{\tma}{\var{data}}{x}}{\ttrue}, \\
                                     \mathit{end} = \forever, \\
                                     \quad (\puretmevaltwo{\subst{\tmb_i}{\var{data}}{x}}{\vala'_i})_i
                                 \el
                               }
                      \\
                      {\blt
                          \bag{\dbrow{\mathit{data}}{\mathit{start}}{\mathit{end}}}
                      \el}
                       & { \blt \text{otherwise} \el }
                   \end{cases}
                   }
            }
            { \ttmevaltwo
                {\dbupdate{x}{\tmc}{\tma}{(\seq{\ell = \tmb})}}
                {\retpair{()}{\db''}}
            }
        \end{mathpar}

        Translation:
  \[
    \bl
  \ttmtrans{
    \dbupdateann
        [\recordterm{\ell_i : \tya_i}_{i \in I}]
        {x}
        {\tmc}
        {\tma}
        {(\ell = \tmb_j)_{j \in J}}} = \\
  \quad
  {
    \blt
      \letintwo{\var{tbl}}{\ttrans{\tmc}} \\
      \letinone{\var{affected}}\\
      \quad \forcomptwo{x}{\get{\var{tbl}}} \\
      \qquad
        \whereone
            {(\restrict{x}{\seq{\ell}}{\ttmtrans{\tma}} \wedge \iscurrent{x})} \\
      \qquad \bag{
          \recordtermlr{
              {\bl
                  \fielddata{
                        \recordterm{\ell_i = \project{x}{\ell_i}}_{i \in I \without J}
                        \;\recordplus \\
                        \quad
                        \recordterm
                            {\ell_j =
                                \restrict
                                    {x}
                                    {(\ell_i)_{i \in I}}
                                    {\ttrans{\tmb_j}}}
                  }, \\
                  \fieldstart{\now},
                  \fieldend{\forever}
                  \el
              }
           }
        }
      \\
      \calcwd{in}
      \\
      \dbupdatetwo{x}{\var{tbl}} \\
      \quad \calcwd{where} \: (\restrict{x}{\seq{\ell}}{\ttmtrans{\tma}} \wedge \iscurrent{x}) \\
      \quad \calcwd{set} \: \recordterm{\fieldend{\now}}; \\
      \dbinsert{\var{tbl}}{\var{affected}}
    \el
  }
  \el
  \]

        By the IH,
        $\tmevaltwo{\ttrans{\tmc}}{\retpair{\tblvar}{\flatten{\db'}}}$.

        Suppose
        \[
            \bl
            \db(\tblvar') =
            \bag{
                \dbrow{
                    \recordterm{\ell_i = \vala_{1_i}}_{i \in 1..l}
                }{\valb_{1_{s}}}{\valb_{1_{e}}},
                \ldots,
                \dbrow{
                    \recordterm{\ell_i = \vala_{{m}_i}}_{i \in 1..l}
                }{\valb_{m_{s}}}{\valb_{m_{e}}}, \\
                \dbrow{
                    \recordterm{\ell_i = \vala_{{m + 1}_i}}_{i \in 1..l}
                }{\valb_{{m + 1}_{s}}}{\valb_{{m + 1}_{e}}},
                \ldots
                \dbrow
                {
                    \recordterm{\ell_i = \vala_{n_i}}_{i \in 1..l}
                }
                {\valb_{n_{s}}}
                {\valb_{n_{e}}}}
            \el
        \]
    Then
        \[
            \bl
            \flatten{\db'}(\tblvar) =
            \bag{
                \recordterm{\ell_i = \ttrans{\vala_{1_i}}}_{i \in 1..l}
                \recordplus
                \recordterm{\fieldstart{\ttrans{\valb_{1_{s}}}},
                \fieldend{\ttrans{\valb_{1_{e}}}} },
                \ldots,
                \recordterm{\ell_i = \ttrans{\vala_{m_i}}}_{i \in 1..l}
                \recordplus
                \recordterm{\fieldstart{\ttrans{\valb_{m_{s}}}},
                \fieldend{\ttrans{\valb_{m_{e}}}} },\\
                \recordterm{\ell_i = \ttrans{\vala_{{m + 1}_i}}}_{i \in 1..l}
                \recordplus
                \recordterm{\fieldstart{\ttrans{\valb_{{m + 1}_{s}}}},
                \fieldend{\ttrans{\valb_{{m + 1}_{e}}}} },
                \ldots
                \recordterm{\ell_i = \ttrans{\vala_{{n}_i}}}_{i \in 1..l}
                \recordplus
                \recordterm{\fieldstart{\ttrans{\valb_{{n}_{s}}}},
                \fieldend{\ttrans{\valb_{{n}_{e}}}} }
            }
                \el
        \]

        Suppose that the guard returns $\ttrue$ for records $1..m$, i.e.,
            $\puretmevaltwo{\tma \{ (\ell_i = \vala_{1_i})_{i \in 1..l} \}}{\ttrue}$, \ldots,
            $\puretmevaltwo{\tma \{ (\ell_i = \vala_{m_i})_{i \in 1..l} \}}{\ttrue}$
            with each $\valb_{i_{e}} = \forever$.

        Suppose that the guard returns false for the remainder.

        By
        Proposition~\ref{prop:tlinq-type-correct} and
        Lemma~\ref{lem:tlinq-trans-restrict} we have that
        \[
            \bl
                \puretmevaltwo{\ttrans{\tma} \{ (\ell_i = \ttrans{\vala_{1_i}})_{i \in 1..l} \}}{\ttrue}, \ldots,
                \puretmevaltwo{\ttrans{\tma} \{ (\ell_i = \ttrans{\vala_{m_i}})_{i \in 1..l} \}}{\ttrue}.
            \el
        \]
        and $\valb_{1_{e}} = \forever, \ldots, \valb_{m_{e}} = \forever$

        Thus we have by Lemma~\ref{lem:pure-for},
        Proposition~\ref{prop:tlinq-type-correct}, and
        and Lemma~\ref{lem:tlinq-trans-restrict} that
        \[
            \bl
                \var{affected} = \\
                \recordterm{\ell_i = \ttrans{\vala'_{1_i}}}_{i \in 1..l}
                \recordplus
                \recordterm{\fieldstart{\timevar}, \fieldend{\forever}},
                \ldots,
                \recordterm{\ell_i = \ttrans{\vala'_{m_i}}}_{i \in 1..l}
                \recordplus
                \recordterm{\fieldstart{\timevar}, \fieldend{\forever}}
            \el
        \]

        Note that
            $\var{affected} =
                \flatten{\bag{
                    \dbrow
                    {
                        \recordterm{\ell_i = \vala'_{1_i}}_{i \in 1..l}
                    }
                    { \timevar }
                    { \forever },
                    \ldots,
                    \dbrow
                    {
                        \recordterm{\ell_i = \vala'_{m_i}}_{i \in 1..l}
                    }
                    { \timevar }
                    { \forever }
                }}$.

        By applying the $\calcwd{update}$ operation (with appropriate uses of
        Proposition~\ref{prop:tlinq-type-correct} and
        Lemma~\ref{lem:tlinq-trans-restrict}), we have some
        $\db''$ with table $\tblvar$ containing:

        \[
            \bl
            \db'' =
            \bag{
                \recordterm{\ell_i = \ttrans{\vala_{1_i}}}_{i \in 1..l}
                \recordplus
                \recordterm{\fieldstart{\ttrans{\valb_{1_{s}}}},
                \fieldend{\timevar} },
                \ldots,
                \recordterm{\ell_i = \ttrans{\vala_{m_i}}}_{i \in 1..l}
                \recordplus
                \recordterm{\fieldstart{\ttrans{\valb_{m_{s}}}},
                \fieldend{\timevar} },\\
                \recordterm{\ell_i = \ttrans{\vala_{{m + 1}_i}}}_{i \in 1..l}
                \recordplus
                \recordterm{\fieldstart{\ttrans{\valb_{{m + 1}_{s}}}},
                \fieldend{\ttrans{\valb_{{m + 1}_{e}}}} },
                \ldots
                \recordterm{\ell_i = \ttrans{\vala_{{n}_i}}}_{i \in 1..l}
                \recordplus
                \recordterm{\fieldstart{\ttrans{\valb_{{n}_{s}}}},
                \fieldend{\ttrans{\valb_{{n}_{e}}}} }
            }
                \el
        \]

        and after performing the $\calcwd{insert}$ of $\var{affected}$, the table stands

        \[
            \bl
            \db''(t) =
            \bag{
                \recordterm{\ell_i = \ttrans{\vala_{1_i}}}_{i \in 1..l}
                \recordplus
                \recordterm{\fieldstart{\ttrans{\valb_{1_{s}}}},
                \fieldend{\timevar} },
                \ldots,
                \recordterm{\ell_i = \ttrans{\vala_{m_i}}}_{i \in 1..l}
                \recordplus
                \recordterm{\fieldstart{\ttrans{\valb_{m_{s}}}},
                \fieldend{\timevar} },\\
                \recordterm{\ell_i = \ttrans{\vala_{{m + 1}_i}}}_{i \in 1..l}
                \recordplus
                \recordterm{\fieldstart{\ttrans{\valb_{{m + 1}_{s}}}},
                \fieldend{\ttrans{\valb_{{m + 1}_{e}}}} },
                \ldots
                \recordterm{\ell_i = \ttrans{\vala_{{n}_i}}}_{i \in 1..l}
                \recordplus
                \recordterm{\fieldstart{\ttrans{\valb_{{n}_{s}}}},
                \fieldend{\ttrans{\valb_{{n}_{e}}}} }, \\
                \recordterm{\ell_i = \ttrans{\vala'_{1_i}}}_{i \in 1..l}
                \recordplus
                \recordterm{\fieldstart{\timevar}, \fieldend{\forever}},
                \ldots,
                \recordterm{\ell_i = \ttrans{\vala'_{m_i}}}_{i \in 1..l}
                \recordplus
                \recordterm{\fieldstart{\timevar}, \fieldend{\forever}},
            }
                \el
        \]

        which is equal to the database obtained by performing the $\tlinq$ update, as required.
    \end{proofcase}
\end{proof}

\vtranscorrect*
\begin{proof}
    By induction on the derivation of
    $\vtmevaltwo{\tma}{\retpair{\vala_{\mkwd{V}}}{\db_{\mkwd{V}}}}$.

    \begin{proofcase}{EV-SeqInsert}

        Assumption:

        \begin{mathpar}
            \inferrule
              {
                  \vtmevaltwo{\tma}{\retpair{\tblvar}{\db_1}} \\
                  \puretmevaltwo{\tmb}{\bag{\seq{\vala}}} \\\\
                  \forall \dbrow{\var{data}}{\var{start}}{\var{end}} \in \seq{\vala}. \var{start}
                  < \var{end} \\\\
                  \db_2 = \extendenv{\db_1}{\tblvar}{\db_1(\tblvar) \denotbagunion
                  \bag{\seq{\vala}}}
              }
              {
                \vtmevaltwo{\dbinsertseq{\tma}{\tmb}}{\retpair{()}{\db_2}}
              }
        \end{mathpar}

        Translation:
        \[
            \bl
                \vtrans{\dbinsertseq{\tma}{\tmb}} \\
                \quad \letintwo{\var{tbl}}{\vtrans{\tma}} \\
                \quad \letinone{\var{rows}} \\
                \qquad \forcomptwo{x}{\vtrans{\tmb}} \\
                \qqquad \bag{
                            \recordplustwo
                                {\etaexp{x}{\seq{\ell}}}
                                {\recordterm{
                                    \fieldstart{\project{x}{\var{start}}},
                                    \fieldend{\project{x}{\var{end}}}
                                }}
                        } \\
                \quad \calcwd{in} \\
                \quad \dbinsert{\var{tbl}}{\var{rows}}
            \el
        \]

        By the IH,
        $\tmevaltwo{\vtrans{\tma}}{\retpair{\tblvar}{\flatten{\db'}}}$.

        By Lemma~\ref{lem:vlinq-trans-pure},
        $\puretmevaltwo
            [\flatten{\db'}, \timevar]
            {\vtrans{\tmb}}
            {\bag{\seq{\vtrans{\vala}}}}$; since each $\vala_i$ is
        a value of type $\validtimety{\tya}$, we have that each $\vala_i$ must
        be of the form
        $\dbrow
            {\valb_{\var{data}}}
            {\valb_{\var{start}}}
            {\valb_{\var{end}}}$.

        Thus we have that by the definition of the translation on database rows
        and Lemma~\ref{lem:vlinq-base-trans}, each $\vtrans{\vala_i}$ must be of
        the form:

        $\recordterm
            {\fielddata{\valb_{\var{data}}},
             \fieldstart{\valb_\var{start}},
             \fieldend{\valb_{\var{end}}}}$

        By Lemma~\ref{lem:pure-for} the $\calcwd{for}$ comprehension evaluates to
        a bag of the form
        $\recordterm
            {\etaexp{x}{\seq{\ell}}}
         \recordplus
         \recordterm
            {\fieldstart{\valb_\var{start}}, {\fieldend{\valb_{\var{end}}}}}$.

        Note that this matches the definition of $\flatten{\vala_i}$.

        Therefore by \textsc{E-Insert}:

        \begin{mathpar}
            \inferrule
            {
                \tmevaltwo{\vtrans{\tma}}{\retpair{\tblvar}{\flatten{\db_1}}} \\
                \puretmevaltwo{\vtrans{\tmb}}{\bag{\seq{\flatten{\vala}}}}
            }
            {
                \tmevaltwo
                    {\dbinsert{\tma}{\tmb}}
                    {\retpair
                        {()}
                        {\extendenv{\flatten{\db_1}}{\tblvar}{\flatten{\db_1}(\tblvar)
                            \denotbagunion \seq{\flatten{\vala}}}} }
            }
        \end{mathpar}

        as required.
    \end{proofcase}

    \begin{proofcase}{EV-NonseqDelete}
        Assumption:
        \begin{mathpar}
            \inferrule
            {
                \tmevaltwo{\tma}{\retpair{\tblvar}{\db_1}}
                \\
                \db_2 =
                    {\extendenv
                        {\db_1}
                        {\tblvar}
                        { \bag{d \in \db(\tblvar) \mid
                            \puretmevaltwo{\tmb \{d / x \}}{\ffalse}} }
                    }
            }
            {
                \tmevaltwo
                    { \dbdeletenonseq{x}{\tma}{\tmb} }
                    {\retpair{()}{\db_2}}
            }
        \end{mathpar}

        Translation:

        \[
            \bl
            \vtrans{\dbdeletenonseqann{x}{\tma}{\tmb}} \\
            \quad
            {
            \blt
                \dbdelete{x}{\vtrans{\tma}}{\lift{x}{\vtrans{\tmb}}}
            \el
            } \\
            \text{where } \lift{x}{f} \defeq \\
                \quad
                (\fun{x}{f}) \app
                    \recordterm{
                        \fielddata{\etaexp{x}{\{\ell_i\}_{i}}},
                        \fieldstart{\project{x}{\var{start}}},
                        \fieldend{\project{x}{\var{end}}}}
            \el
        \]

        We want to show:
        \begin{mathpar}
            \inferrule
            {
                \tmevaltwo{\tma}{\retpair{\tblvar}{\flatten{\db_1}}}
                \\
                \db_2 =
                    {\extendenv
                        {\flatten{\db_1}}
                        {\tblvar}
                        { \bag{ d \in \flatten{\db_1}(\tblvar) \mid
                            \puretmevaltwo{\lift{x}{\vtrans{\tmb} \{ \vtrans{d}
                        / x \}}}{\ffalse}} }
                    }
            }
            {
                \tmevaltwo
                    { \dbdeletenonseq{x}{\vtrans{\tma}}{\lift{x}{\vtrans{\tmb}}} }
                    {\retpair{()}{\flatten{\db_2}}}
            }
        \end{mathpar}

        By the IH we have that
        $\tmevaltwo
            [\flatten{\db}, \timevar]
            {\tma}
            {
                \retpair
                    {\tblvar}
                    {\flatten{\db_1}}
            }$.

        By \textsc{TV-NonseqDelete}, we have that
        $\tseq{\tyenv, x : \validtimety{\tya}}{\tmb}{\boolty}{\pure}$.

        Since $\puretmevaltwo{\tmb \{ d / x \}}{\vala_{\var{res}}}$
        where $d$ is some row $\dbrow{\recordterm{\ell_i =
        \vala_i}_i}{\valb_{\var{start}}}{\valb_{\var{end}}}$,
        it follows by Lemma~\ref{lem:vlinq-trans-pure} that
        $\puretmevaltwo{\vtrans{\tmb} \{ \vtrans{d} / x
            \}}{\vtrans{\vala_{\var{res}}}}$.

        By Lemma~\ref{lem:vlinq-base-trans},
        $\vtrans{\vala_{\var{res}}} = \vala_{\var{res}}$.

        Consider the definition of $\mkwd{lift}$. We can show:

        \begin{mathpar}
            \inferrule
            {
                \vtrans{\tmb}
                \puretmevaltwo{\fun{x}{\vtrans{\tmb}}}{\fun{x}{\vtrans{\tmb}}}
                \\
                \puretmevaltwo
                {
                    \recordterm{
                        \fielddata{\etaexp{x}{\{\ell_i\}_{i}}},
                        \fieldstart{\project{x}{\var{start}}},
                        \fieldend{\project{x}{\var{end}}}}
                }
                {
                    \recordterm{
                        \fielddata{\recordterm{\ell_i = \vala_i}_i},
                        \fieldstart{\valb_{\var{start}}},
                        \fieldend{\valb_{\var{end}}}
                    }
                }
                \\
                \puretmevaltwo{\vtrans{\tmb} \{ \vtrans{d} / x
                    \}}{\vala_{\var{res}}}
            }
            {
                \puretmevaltwo
                {
                    (\fun{x}{\vtrans{\tmb}}) \app
                        \recordterm{
                            \fielddata{\etaexp{x}{\{\ell_i\}_{i}}},
                            \fieldstart{\project{x}{\var{start}}},
                            \fieldend{\project{x}{\var{end}}}}
                }
                {
                    \vala_{\var{res}}
                }
            }
        \end{mathpar}
        since
        $
            \recordterm{
                \fielddata{\recordterm{\ell_i = \vala_i}_i},
                \fieldstart{\valb_{\var{start}}},
                \fieldend{\valb_{\var{end}}}
            }
            =
            \vtrans{d}
        $.

        Therefore we know that the lifted translated predicate will evaluate to
        the same value as the source predicate in $\vlinq$. It follows that the
        $\calcwd{delete}$ operation will affect the same rows, resulting in the
        same (flattened) database, as required.
   \end{proofcase}

    \begin{proofcase}{EV-NonseqUpdate}
        Follows the same reasoning as \textsc{EV-NonseqDelete}.
    \end{proofcase}

    \begin{proofcase}{EV-SeqUpdate}
        \newcommand{\vstart}{\vala_{\var{start}}}
        \newcommand{\vend}{\vala_{\var{end}}}
        As before, it suffices to reason about each case of the update in turn.

        By the IH, we have that
        $\tmevaltwo{\vtrans{\tmc}}{\retpair{\tblvar}{\flatten{\db'}}}$.

        By Lemma~\ref{lem:vlinq-trans-pure}, we have that:
        \begin{itemize}
            \item $\puretmevaltwo{\tma_1}{\vtrans{\vstart}}$
            \item $\puretmevaltwo{\tma_2}{\vtrans{\vend}}$
        \end{itemize}

        and by Lemma~\ref{lem:vlinq-base-trans},
        $\vtrans{\vala_{\var{start}}} = \vala_{\var{start}}$ and
        $\vtrans{\vala_{\var{end}}} = \vala_{\var{end}}$.

        Therefore let $\var{aStart} = \vstart$ and $\var{aEnd} = \vend$.

        Since the two $\calcwd{insert}$ statements occur after the
        $\calcwd{update}$ statement, the records in $\var{startRows}$ and
        $\var{endRows}$ will not affect the $\calcwd{update}$.
        It therefore suffices to prove the more general statement by considering
        a single row at a time.

        Therefore, we now reason by cases on $\mkwd{upd}$ with a database of the
        form:
            $\db' = [\tblvar \mapsto
            \bag{\dbrow{\varv}{\var{start}}{\var{end}}}]$.

        Since we know databases are well formed, we know that for each row,
        $\var{start} < \var{end}$.
        By the premises, we also know that $\vstart < \vend$.

        \begin{subcase}{Case 1:
                $\puretmevaltwo{\subst{\tma_3}{\var{v}}{x}}{\ttrue}$
                and $\vala_{\var{start}} \le \var{start}$
                and $\vala_{\var{end}} \ge \var{end} $}

                Here we have that $\vstart \le \var{start} < \var{end} \le
                \vend$.

                Since $\vstart \le \var{start}$, we have that
                $\project{x}{\var{start}} \not< \vstart$. Therefore,
                $\var{startRows} = \bag{~}$.

                Since $\vend \ge \var{end}$, we have that
                $\project{x}{\var{end}} \not< \vend$. Therefore,
                $\var{endRows} = \bag{~}$.

                The $\calcwd{update}$ predicate will match since by
                Proposition~\ref{prop:vlinq-type-correct} and
                Lemma~\ref{lem:vlinq-trans-restrict},
                $\puretmevaltwo
                    {\restrict{x}{\{\ell_i\}_{i \in I}}{\vtrans{\tma_3}}}
                    {\ttrue}$.

                By
                Proposition~\ref{prop:vlinq-type-correct}, and
                Lemmas~\ref{lem:vlinq-trans-restrict}
                and~\ref{lem:vlinq-base-trans},
                $\puretmevaltwo
                    {\restrict{x}{\{\ell_i\}_{i \in I}}{\vtrans{\tmb_j}}}
                    {\valb_j}$ for all $j \in J$.

                In this case:
                \begin{itemize}
                    \item $\greatest{\project{x}{\var{start}}, \vstart} =
                        \project{x}{\var{start}}$
                    \item $\least{\project{x}{\var{end}}, \vend} =
                        \project{x}{\var{end}}$
                \end{itemize}

                Thus after the update and two null inserts we have:

                $\db'' = [\tblvar \mapsto
                \bag{
                    \recordterm{\ell_i = \vala_i}_{i \in I \without J}
                    \recordplus
                    \recordterm{\ell_j = \valb_j}_{j \in J}
                    \recordplus
                    \recordterm{\fieldstart{\var{start}}, \fieldend{\var{end}}}
                 }]$.

                 which is equal to
                 $
                    [\tblvar \mapsto
                    \flatten{
                        \dbrow{\recordwithtwo{\varv}{\seq{\ell = \valb}}}{\var{start}}{\var{end}}
                    }
                 $
                 as required.
        \end{subcase}

        \begin{subcase}{Case 2:
            $\puretmevaltwo{\subst{\tma_3}{\var{v}}{x}}{\ttrue}$
            and $\vala_{\var{start}} \le \var{start}$ and
            $\vala_{\var{end}} < \var{end}$}

            In this case we have that $\vstart \le \var{start} \le \vend <
            \var{end}$.

            Since $\project{x}{\var{start}} < \vstart$, the $\var{startRows}$
            predicate does not hold and so $\var{startRows} = \bag{~}$.

            However, $\project{x}{\var{start}} < \vend$ and
            $\project{x}{\var{end}} > \vend$, so
            $\bag{
                    \etaexp{x}{\{\ell_j\}_{j \in J}}
                    \recordplus
                    \recordterm{\fieldstart{\vend},
                        \fieldend{\project{x}{\var{end}}}}
            } \puretmeval
                    \bag{\varv
                    \recordplus
                    \recordterm{\fieldstart{\vend}, \fieldend{\var{end}}}} =
                    \var{endRows}$.

            Again the $\calcwd{update}$ predicate matches; this time we have:

            \begin{itemize}
                \item $\greatest{\var{start}, \vstart} = \var{start}$
                \item $\least{\var{end}, \vend} = \vend$
            \end{itemize}

            and after the $\calcwd{update}$ and $\calcwd{insert}$ our database is:

            \[
                [ \tblvar \mapsto
                    \bag{
                        (\varv \recordplus
                         \recordterm{\fieldstart{\vend}, \fieldend{\var{end}}}),
                            ((\ell_i = \vala_i)_{i \in I \without J}
                            \recordplus
                            \recordterm{\ell_j = \valb_j}{j \in J}
                            \recordplus
                            (\fieldstart{\var{start}}, \fieldend{\vend}))
                    }
                ]
            \]

            which is equal to

            \[
                [ \tblvar \mapsto
                    \flatten{\bag{\dbrow
                            {\recordwithtwo{\varv}{(\ell_j = \valb_j)_{j \in J}}}
                            {\var{start}}
                            {\vend},
                         \dbrow{\varv}{\fieldstart{\vend}}{\fieldend{\var{end}}}
                     }}
                ]
            \]

            as required.
        \end{subcase}

        \begin{subcase}{Case 3:
            $\puretmevaltwo{\subst{\tma_3}{\var{v}}{x}}{\ttrue}$
            and $\vala_{\var{start}} > \var{start}$  and
            $\vala_{\var{end}} < \var{end}$}

            Here we have that $\var{start} < \vstart < \vend < \var{end}$.

            By similar reasoning to the above case (we omit references to
            specific lemmas, which are as above), we have that:

            \begin{itemize}
                \item The $\mkwd{where}$ clause of $\var{startRows}$ evaluates
                    to $\ttrue$ as both $\project{x}{\var{start}} < \vstart$ and
                    $\project{x}{\var{end}} > \vstart$. Thus $\var{startRows} =
                    \bag{\varv \recordplus \recordterm{\fieldstart{\var{start}},
                        \fieldend{\vstart}}}$
                \item The $\mkwd{where}$ clause of $\var{endRows}$ evaluates
                    to $\ttrue$ as both $\project{x}{\var{start}} < \vend$ and
                    $\project{x}{\var{end}} > \vend$. Thus $\var{endRows} =
                    \bag{\varv \recordplus
                        \recordterm{\fieldstart{\vend}, \fieldend{\var{end}}}}$.
            \end{itemize}

            Similarly, the $\calcwd{update}$ will apply since
            $\project{x}{\var{start}} < \vend$ and
            $\project{x}{\var{end}} > \vstart$.

            We have in this case that
            $\greatest{\var{start}, \vstart} = \vstart$ and
            $\least{\var{end}}{\vend} = \vend$.

            Thus after the update and two inserts, our final database is equal to:

            \[
                [\tblvar \mapsto \flatten{\bag{
                        \dbrow{\varv}{\var{start}}{\vstart},
                        \dbrow{\valb}{\vstart}{\vend},
                        \dbrow{\varv}{\vend}{\var{end}}}
                        }]
            \]

            as required.
        \end{subcase}

        \begin{subcase}{Case 4:
            $\puretmevaltwo{\subst{\tma_3}{\var{v}}{x}}{\ttrue}$
            and
            $\vala_{\var{start}} > \var{start}$
            and
            $\vala_{\var{end}} \ge \var{end}$}

            Similar to Case 2, except there is a record produced in
            $\var{startRows}$ instead of $\var{endRows}$.
        \end{subcase}

        \begin{subcase}{Case 5: other cases}

            This `catch all' case boils down to three sub-subcases:

            \begin{subsubcase}{$\puretmevaltwo{\tma_3}{\ffalse}$}
                In this case, the $\mkwd{where}$ clause of the two queries would
                evaluate to $\ffalse$, so we would have
                $\var{startRows} = \var{endRows} = \bag{~}$. Furthermore, for
                the same reason, the
                $\calcwd{update}$ would not apply. Therefore, the database would
                be unaltered.
            \end{subsubcase}
            \begin{subsubcase}{$\vend < \var{start}$}
                Here we have that $\vstart < \vend \le \var{start} < \var{end}$.

                In this case, the $\mkwd{where}$ clause for $\var{startRows}$
                would evaluate to $\ffalse$ as $\project{x}{\var{start}} \not<
                \vstart$, and the $\mkwd{where}$ clause for $\var{endRows}$
                would evaluate to $\ffalse$ since $\project{x}{\var{start}}
                \not< \vend$.
                Thus, $\var{startRows} = \var{endRows} = \bag{~}$.

                The $\calcwd{update}$ would not apply since
                $\project{x}{\var{start}} \not< \vend$.

                Therefore, the database will be unaltered.
            \end{subsubcase}
            \begin{subsubcase}{$\vstart > \var{end}$}
                Here we have that $\var{start} < \var{end} \le \vstart < \vend$.

                Again, the $\mkwd{where}$ clause of $\var{startRows}$ would
                evaluate to $\ffalse$ since $\project{x}{\var{end}} \not>
                \vstart$,
                and the $\mkwd{where}$ clause of $\var{endRows}$ would evaluate
                to $\ffalse$ since $\project{x}{\var{end}} \not> \vend$.
                Thus, $\var{startRows} = \var{endRows} = \bag{~}$.

                The update would not apply since $\project{x}{\var{end}} \not>
                \vstart$.

                Therefore, the database will be unaltered.
            \end{subsubcase}
        \end{subcase}
    \end{proofcase}

    \begin{proofcase}{EV-SeqDelete}
        Follows the same reasoning as \textsc{EV-SeqUpdate}. The main difference
        is that the final $\calcwd{update}$ is replaced with a
        $\calcwd{delete}$, so where the translation results in an updated
        record in \textsc{EV-SeqUpdate}, the translation results in the absence
        of a row here.
    \end{proofcase}

\end{proof}
 \clearpage
\section{Links examples}\label{appendix:code}

New count values are added to the table with a \emph{sequenced
insert}, using the upload time as the start time.

\begin{lstlisting}
fun insertNewData (new) {
  vt_insert sequenced covid_data
    values (subcat, weekdate, count)
      [withValidity(
        (subcat = new.subcat, weekdate = new.weekdate,
         count = new.count),
      upload_time, forever)]
}
\end{lstlisting}

Accepted updates are added using sequenced updates.  Here
\lstinline{accepted_updates} is a list of updated values that have
been approved by the user. For each element of the list, a sequenced
update is made.

\begin{lstlisting}
fun updateData (accepted_updates) {
  for (x <- accepted_updates)
    [update sequenced (y <-v- covid_data)
      between (x.time_added, forever)
      where (x.subcat==y.subcat && x.weekdate==y.weekdate)
    set (count = x.new_value) ]
}
\end{lstlisting}

For update provenance queries of individual counts,
a self join is computed over the \lstinline{subcategory} and
\lstinline{week} fields of the valid time table to provide a nested
result table where each count is associated with a list
of count values and their associated start and end time
information. This is a nonsequenced query because the time period
information is explicitly added to the result table.

\begin{lstlisting}
query nested {for (x <- vtCurrent(covid_data))
  [(subcat = x.subcat, weekdate = x.weekdate,
    count = x.count, mods =
      for (y <-v- covid_data)
        where (x.weekdate==vtData(y).weekdate &&
               x.subcat==vtData(y).subcat &&
          [(vtf=vtFrom(y),vtt=vtTo(y),
            count=vtData(y).count)])]}
\end{lstlisting}

\end{document}